\newcommand{\RNum}[1]{\uppercase\expandafter{\romannumeral #1\relax}}
\numberwithin{equation}{section}
\newtheorem{theorem}{Theorem}[section]
\newtheorem{lemma}{Lemma}[section]
\renewcommand{\thefootnote}{\fnsymbol{footnote}}
\newcommand{\dif}{{\mathrm d}}
\newcommand{\me}{{\mathrm e}}
\date{}
\begin{document}

\title{\bf{Epidemic waves for a two-group SIRS model with  double nonlocal effects  in a patchy environment}\footnotemark[1]}

\author{Chufen Wu$^{1}$, Yonghui Xia$^{1}$\footnotemark[2]  and  Jianshe Yu$^{2}$ \\
\small $^{1}$ School of Mathematics, Foshan University, Foshan 528000, PR China\\
	\small  $^2$ College of Mathematics and Information Sciences, Guangzhou University, Guangzhou, 510006, China}
\renewcommand{\thefootnote}{\fnsymbol{footnote}}

\footnotetext[1]{Research supported by  National Natural Science Foundation of China (No. 12071074) and
 Guangdong Basic and Applied Basic Research Foundation (No. 2023A1515012078).}
\footnotetext[2]{{\small Corresponding author.\\ $~~~~~$ E-mail addresses: wucfmath@fosu.edu.cn,yhxia@zjun.cn,
xiayonghui@fosu.edu.cn, jsyu@gzhu.edu.cn.}}
\date{}

%
%

\maketitle
\begin{center}
\begin{minipage}{13cm}
\par
\small  {\bf Abstract:} We propose a lattice dynamical system  that arises in  a discrete diffusive two-group epidemic model with latency in a patchy environment. The model considers the SIS form and  latency of the disease  in group 1, while the SIR form without latency of the disease  in group 2. The system includes double nonlocal effects, one effect is the nonlocal diffusion of individuals in isolated patches or niches, while the other effect is the distributed transmission delay representing the incubation of the disease. We  demonstrate  that there is a threshold  value $c^*$ that can determine the persistence or disappearance of the disease. If  $c\geq c^*$, then there is an epidemic wave connecting the disease-free equilibrium  and endemic equilibrium. In this case, the disease will evolve to endemic.  If $0<c<c^*$,  then the disease will die out.

\vskip2mm
\par
{\bf Keywords:} Epidemic waves; a patchy environment; non-monotonicity;  nonlocal

{\bf MSC (2020) Classifications: 35C07;  39A36;  92D30}
\end{minipage}
\end{center}

%
%

{\section{Introduction}}

Mathematical modeling and analysis play a crucial role in disease prevention, transmission, and control. Through mathematical modeling, we can make better sense of the laws of disease transmission and  predict the trend or speed of disease propagation \cite{Hethcote2}.
Thus we could provide some insights on  the formulation of public health policies and  effectively reduce the spread of diseases.
One of the key topics is  to find out a threshold that decides the persistence or
extinction of  diseases \cite{am}.
Spatial heterogeneity is  significant for the persistence and dynamics of epidemics since  asynchrony between populations among dissimilar areas  has enabled  global persistence, even when the disease disappears locally \cite{lm}.  Patch models, which measure disease transmission, do not assume homogeneous mixing of members. For the formulation and analysis of some patch
models such as the review articles \cite{av,w} and  references therein, have been studied. Due to the vital impact of host heterogeneity on the dynamics of infectious diseases, multi-group epidemic models have received noticeable attention in recent years  \cite{deb}. Models with different groups are more accurate and essential, for example, hand-foot-mouth disease, chickenpox, scarlet fever typically affect children while sexually transmitted diseases are more common among adults.
To some extent, changing a model by joining multi-groups together can change the asymptotic behaviors of the infectious diseases  \cite{Hethcote1}.
To test this, Hethcote \cite{Hethcote1} proposed an SIS epidemic model with two different interacting groups between two patches and studied the
asymptotic stability of equilibria.
Lloyd and  May \cite{lm} divided the population into $n$ subpopulations and introduced a  multi-patch (metapopulation) model for spatial
heterogeneity in epidemics. They have demonstrated that patches in non-seasonal deterministic models always oscillate in phase
while the weakest between patch coupling.
 Wang and Zhao \cite{wz}  considered  a disease transmission model
with population dispersal among $n$ patches  is associated with the threshold for  the persistence or vanishing of the disease.
Brauer, Van den Driessche and  Wang \cite{bvw} have examined an SIRS model in one and two patches,
observing that oscillations might happen within one patch and that travel between patches may trigger
essential and unexpected changes in the behavior of the system. For epidemic models on metapopulation, we also refer the reader to
Castillo-Chavez and Yakubu \cite{cy}, Brauer and van den Driessche \cite{bv}, Wang and Mulone \cite{wm},  Hsieh,  van den Driessche  and Wang \cite{hvw}  and the references therein.

Due to the incubation period and  time lag of transmission of infectious diseases, time delays also need to be included in the model.
The presence of time delay can slow down the spread of diseases, as infected individuals require time to transmit the disease to others. Time delay may also lead to fluctuations in the spread of  diseases, because delayed transmission by infected individuals can result in both outbreaks and declines of  diseases. Salmani and van den Driessche \cite{sv}  studied an  epidemic model with a latent period for $n$ patches to
characterize the dynamics of an infectious disease in a population in which individuals
migration between patches.  Li and Zou \cite{lz}  formulated an SIR model  for the population living in  $n$ patches and investigated the persistence and extinction of the model system with a fixed delay.

Lattice dynamical systems in  epidemiology can model the evolution of susceptible and infected individuals in discrete niches or patchy environments.
Based on  this consideration, we analyze a lattice dynamical system that emerges in a discrete diffusive two-group epidemic model with latency as follows:
\begin{equation}\label{1.2}
\left\{\begin{array}{l}
\frac{\dif }{\dif t}S_j(t)=d_1\mathcal {A}[S_j](t)+b_1[K_1-S_{j}(t)]- S_{j}(t)I_{j}(t)+\sigma I_{j}(t), \\
\frac{\dif }{\dif t}P_j(t)=d_2\mathcal {A}[P_j](t)+b_2[K_2-P_{j}(t)]- P_{j}(t)I_{j}(t), \\
\frac{\dif }{\dif t}I_{j}(t)=d_3\mathcal {A}[I_j](t)+\gamma_1\displaystyle{\int_{0}^h f(\tau)S_{j}(t-\tau)I_{j}(t-\tau)\dif\tau}-\delta I_{j}(t)+\gamma_2P_{j}(t)I_{j}(t),
\end{array}\right.
\end{equation}
where $j\in\mathbb{Z}$, $\mathcal {A}[u_j](t):=\sum\limits_{i\neq0}J(i)[u_{j-i}(t)-u_j(t)]$ and $d_1, d_2, d_3$ are the dispersal rates of each compartments.   In figure 1, $S_j(t), P_j(t)$
denote the densities of  susceptible individuals of group $1$ and group $2$ at niches $j$ and time $t>0$, respectively.  The densities of  infected individuals  at niches $j$ and time $t$ are represented by $I_j(t)$; $\sigma$ is the relapse rate, $b_1[K_1-S_{j}(t)]$ and $b_2[K_2-P_{j}(t)]$ denote recruitments or external supplies of group 1 and group 2. The constant $\delta$ is a composite rate including the death rate and removal rate of infected individuals. Let $h$ be a superior limit of incubation and $f(\tau)$ be the fraction of vector population in which the time taken to become infectious is the delay $\tau$. Consider the bilinear incidence rate or mass action form, which  is a commonly used incidence for both human and
animal diseases.

\begin{figure}[H]
\centering
\includegraphics[height=7.0cm,width=9.0cm]{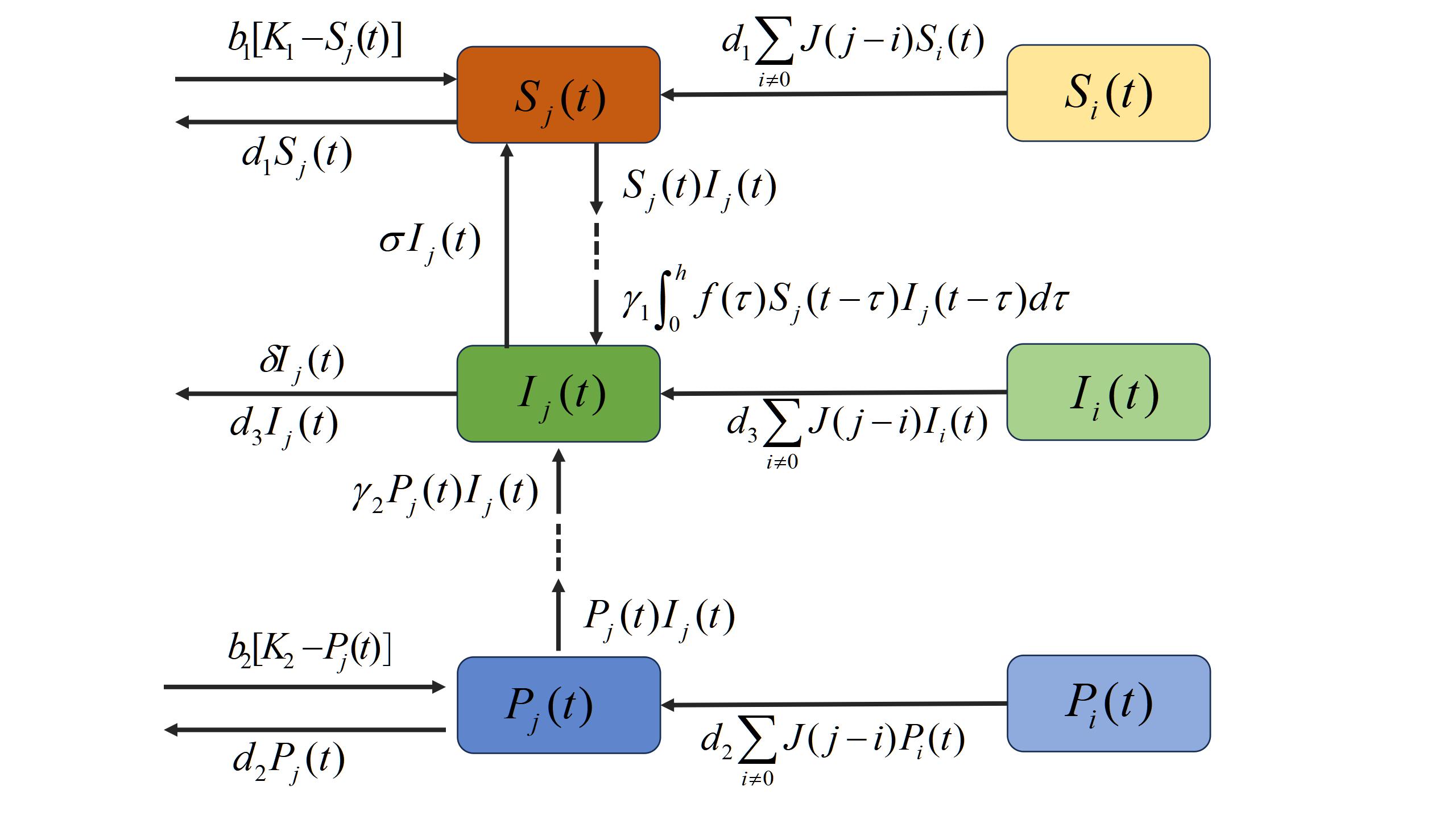}
\caption{The flow diagram of  the disease transmission.}
\end{figure}

To determine if a disease can spread at a consistent rate, we always  investigate the phenomenon known as traveling wave solutions (referred to
as epidemic waves). As epidemic waves come and go, they can indicate whether a disease is endemic or has disappeared.
The connection of epidemic waves in two tails shows the evolution of the disease over time.
In recent years, Fu, Guo and Wu \cite{fgw} investigated the existence and non-existence of epidemic waves
for the following  discrete diffusive SI model:
 \begin{equation}\label{1.3}
\left\{\begin{array}{l}
\frac{\dif }{\dif t}S_j(t)=[S_{j+1}(t)+S_{j-1}(t)-2S_{j}(t)]-\gamma S_{j}(t)I_{j}(t), \\
\frac{\dif }{\dif t}I_{j}(t)=d[I_{j+1}(t)+I_{j-1}(t)-2I_{j}(t)]+\gamma S_{j}(t)I_{j}(t)-\delta I_{j}(t).
\end{array}\right.
\end{equation}
In model system  \eqref{1.3}, there is no constant recruitment of susceptibles. In this case, the epidemic wave is a mix of front type ($S$-component) and pulse type ($I$-component). Further, Wu \cite{wu} supplemented the existence of epidemic waves with the critical speed for  \eqref{1.3}. Later, Chen, Guo and Hamel \cite{cgh} established  the existence and non-existence of epidemic waves  by introducing the  constant recruitment in \eqref{1.3}, that is,
 \begin{equation*}\label{1.4}
\left\{\begin{array}{l}
\frac{\dif }{\dif t}S_j(t)=[S_{j+1}(t)+S_{j-1}(t)-2S_{j}(t)]+\mu[1-S_{j}(t)]-\gamma S_{j}(t)I_{j}(t), \\
\frac{\dif }{\dif t}I_{j}(t)=d[I_{j+1}(t)+I_{j-1}(t)-2I_{j}(t)]+\gamma S_{j}(t)I_{j}(t)-(\mu+\delta) I_{j}(t).
\end{array}\right.
\end{equation*}
In this scenario, the epidemic wave is the front type of each component. Subsequently, Zhang, Wang and Liu \cite{zwl}
proved  the epidemic waves  would approach to the endemic equilibrium as $j+ct\to \infty$ through the creation of a Lyapunov functional.
When considering the combined effects of multi-group interactions, latent periods, and patchy environments, the dynamics of model systems become more interesting and complicated.
San et al.  \cite{sw,sh} obtained the existence and non-existence  of epidemic waves for a discrete  two-group SI model with
latent delay and bilinear incidence  without recruitment of susceptibles.  Zhou, Song and Wei \cite{zsw} derived the existence and non-existence of mixed types of epidemic waves in a discrete diffusive SI model with saturated  incidence and discrete time delay.
Shortly before, Li and Zhang \cite{lzhang} studied the boundedness and persistence of epidemic waves in a discrete diffusive SI model with general incidence and distributed  time delays. Inspired by aforementioned  works,  we consider the model system \eqref{1.2} and establish the existence of  bounded super-critical and critical epidemic  waves.  We also demonstrate the  non-existence of non-trivial, positive and bounded epidemic  waves in the sub-critical situation based on the threshold value range.

Assume all the parameters  in \eqref{1.2} are positive. Further  assume
\begin{itemize}
\item[(J)]\;$J(i)=J(-i)\geq0$, $\sum_{i\neq0}J(i)=1$ and
 the support of $J(i)$ is  bounded from above and below, i.e.,
there exists an $N_1\in\mathbb{Z}_+$ such that $J(i)=0$ for  $|i|> N_1$;
\item[(f)]\;$f(\cdot)\in C([0,h],\mathbb{R}_+)$,  $\int_{0}^hf(\tau)\dif \tau=1$;
\item[(H)] $\sigma<\frac{\delta}{\gamma_1}<K_1$.
\end{itemize}
The model system \eqref{1.2}  always admits a disease-free equilibrium $(K_1,K_2,0)$.
Denote
\begin{equation*}
e(I)=\gamma_1\frac{b_1K_1+\sigma I}{b_1+I}+\gamma_2\frac{b_2K_2}{b_2+I}-\delta.
\end{equation*}
Let  (H) hold.  There exists a unique  $I_*>0$ such that $e(I_*)=0$. Denote $S_*=\frac{b_1K_1+\sigma I_*}{b_1+I_*}$ and $N_*=\frac{b_2K_2}{b_2+I_*}$.
Then \eqref{1.2} has a unique positive equilibrium $(S_*,P_*,I_*)$  when (H) holds.

Epidemic model systems like \eqref{1.2}  generally do not generate monotone dynamical or semidynamical
systems, which result in the tools of monotone dynamical system theory  are
not directly usable for their analyses. Compared to distinct progress in monotone  lattice dynamical systems  with monostable nonlinearity (e.g, see \cite{zhh,hz,cg,fwz1,chen,mz,cfg,gh,gw,xm}) and bistable nonlinearity (e.g, see \cite{k,z,hz1,cf,cpw,cmv,chengw,hhv,fz}),
there are very few studies on non-monotone  lattice dynamical systems. The existence of epidemic waves of \eqref{1.2} is more difficult to prove when there are recruitments. In this situation, each component of \eqref{1.2} may oscillate and the connection of epidemic waves in one tail is the endemic equilibrium. Consequently, we need to demonstrate that each component of \eqref{1.2} exhibits strong persistence and exponential decay at infinity.
Then we can construct a Lyapunov functional combined with Lasalle's  invariance principle to show the epidemic waves converge to the endemic equilibrium. However, the construction of Lyapunov functional here  is non-trivial due to the double nonlocal effects and time delay. Motivated by the work of Huang and Wu \cite{hw},  Li and Zhang \cite{lzhang}  as well as Zhang et al. \cite{zwl}, we introduce certain  constituents of the Lyapunov functional to counteract the delay and nonlocal effects. The bilinear incidence rate, which is  unsaturated and leads to the $I$-component of \eqref{1.2} is unbounded. The boundedness and strong persistence of the $I$-component in \eqref{1.2} are hard to prove due to the double effects of nonlocality. By first making suitable shifts of the three components of \eqref{1.2} and utilizing the local convergence of translated variables,  we are able to pull out the $S$-variable from the convolution. Then the problem reduces to  the positive solution of a linear and nonlocal characteristic equation with distributed  delay.
Furthermore, we consider a decreasing sequence of super-critical speeds and pass to the limit so as to obtain the existence of critical waves.
However, the limit maybe trivial given that the model system \eqref{1.2} is nonmonotone. Therefore, a thorough analysis is required to address this issue.
At last, the non-existence of epidemic waves  is derived by estimating the exponential decay behaviors of $K_1-S(\cdot), K_2-P(\cdot)$ and $I(\cdot)$ at infinity and
using the singularity analysis of two-sides Laplace transform.

The organization of this paper is as follows.
In Sections 2 and 3, we prove the existence of epidemic waves  connecting the disease-free equilibrium $(K_1,K_2,0)$ and endemic equilibrium $(S_*,P_*,I_*)$ for $c\geq c^*$. In the first step, by constructing upper and lower solutions in bounded domains and passing to the limit in the whole line, we get the semi-waves. Next, through careful analysis and the construction of a Lyapunov functional, we  determine the complete waves. In particular, in the critical case, prior estimates are performed and it is shown that the solution sequences are uniformly bounded. Through proper shifts, negative Laplace transforms, and the fluctuation lemma, we demonstrate that the boundary conditions are satisfied at minus infinity.
 Section 4 is devoted to showing the non-existence of epidemic waves for  $0<c<c^*$.
In the last section, we implemented some numerical simulations to study the qualitative properties of epidemic waves.

%
%

{\section{Existence of  epidemic wave for $c>c^*$}}
\setcounter{equation}{0}

Letting $(S_{j}(t),P_{j}(t),I_{j}(t))=(S(\xi),P(\xi),I(\xi))$ with $\xi=j+ct$ and plugging them into \eqref{1.2}, we   have
\begin{equation}\label{2.1}
\left\{\begin{array}{l}
cS'(\xi)=d_1\mathcal {A}[S](\xi)+ b_1[K_1-S(\xi)]- S(\xi)I(\xi)+\sigma I(\xi),\\
cP'(\xi)=d_2\mathcal {A}[P](\xi)+ b_2[K_2-P(\xi)]- P(\xi)I(\xi),\\
cI'(\xi)=d_3\mathcal {A}[I](\xi)+\gamma_1 \displaystyle{\int_{0}^hf(\tau)S(\xi-c\tau)I(\xi-c\tau)\dif \tau}-\delta I(\xi)+\gamma_2P(\xi)I(\xi),
\end{array}\right.
\end{equation}
where $\mathcal {A}[u](\xi):=\sum_{i\neq0}J(i)[u(\xi-i)-u(\xi)]$. Based on the assumption (J), then
\begin{equation*}
\mathcal {A}[u](\xi)
=\sum_{|i|=1}^{N_1}J(i)[u(\xi-i)-u(\xi)]
=\sum_{i=1}^{N_1}J(i)[u(\xi-i)+u(\xi+i)-2u(\xi)].
\end{equation*}
We assume the solution $(S,P,I)$ of \eqref{2.1} satisfies the following  boundary conditions:
\begin{equation}\label{2.2a}
\lim_{\xi\rightarrow-\infty}(S,P,I)(\xi)=(K_1,K_2,0),\quad \lim_{\xi\rightarrow\infty}(S,P,I)(\xi)=(S_*,P_*,I_*).
\end{equation}
For complex number $\lambda$ and positive constant $c$, we define
\begin{equation}\label{2.3}
\triangle_K(\lambda,c)=d_3\sum_{i=1}^{N_1}J(i)[\me^{-\lambda i}+\me^{\lambda i}]-c\lambda+\gamma_1 K_1\int_{0}^h f(\tau)\me^{-\lambda c\tau}\dif \tau+\gamma_2K_2-(d_3+\delta).
\end{equation}
Then the  following result is straightforward.
\begin{lemma}\label{L1}
Assume that (J), (f) and (H) hold, then there exists a pair of
$(\lambda^*,c^*)$ with $\lambda^*,c^*>0$ such that
\begin{itemize}
\item[(i)] $\triangle_{K}(\lambda^*,c^*)=0,
\partial_{\lambda}\triangle_{K}(\lambda^*,c^*)=0$;
\item[(ii)] $\triangle_{K}(\lambda,c)>0$ for $0<c<c^*$ and
$\lambda\in\mathbb{R}$;
\item[(iii)] $\triangle_{K}(\lambda,c)=0$ has two zeros
$0<\lambda_1(c)<\lambda_2(c)<\infty$ for $c>c^*$,   $\triangle_{K}(\lambda,c)>0$ for $\lambda\in(0,\lambda_1)\cup (\lambda_2,\infty)$ and  $\triangle_{K}(\lambda,c)<0$ for $\lambda\in(\lambda_1,\lambda_2)$.
\end{itemize}
\end{lemma}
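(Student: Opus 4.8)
The plan is to regard $\triangle_K(\cdot,c)$ as a function of the single real variable $\lambda$ for each fixed $c>0$ and to exploit its convexity. First I would compute the second derivative
\[
\partial_\lambda^2\triangle_K(\lambda,c)=d_3\sum_{i=1}^{N_1}J(i)\,i^2\big[\me^{-\lambda i}+\me^{\lambda i}\big]+\gamma_1 K_1\int_0^h f(\tau)(c\tau)^2\me^{-\lambda c\tau}\dif\tau,
\]
which is strictly positive because (J) forces $J(i)>0$ for at least one $i\in\{1,\dots,N_1\}$; hence $\triangle_K(\cdot,c)$ is strictly convex. Together with the fact that the term $d_3\sum J(i)\me^{\lambda i}$ drives $\triangle_K\to+\infty$ as $\lambda\to+\infty$, while $d_3\sum J(i)\me^{-\lambda i}$ together with $-c\lambda$ drives $\triangle_K\to+\infty$ as $\lambda\to-\infty$, this guarantees a unique minimizer $\lambda_{\min}(c)$ and a well-defined value $m(c):=\triangle_K(\lambda_{\min}(c),c)$.

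Next I would pin down the sign of $\triangle_K$ on $(-\infty,0]$. Using $\sum_{i=1}^{N_1}J(i)=\tfrac12$ from (J) and $\int_0^h f=1$ from (f), one finds $\triangle_K(0,c)=\gamma_1 K_1+\gamma_2 K_2-\delta$, which is positive by (H) (which gives $\delta<\gamma_1 K_1$). Moreover $\partial_\lambda\triangle_K(0,c)=-c\big[1+\gamma_1 K_1\int_0^h \tau f(\tau)\dif\tau\big]<0$ for every $c>0$. Strict convexity then forces $\lambda_{\min}(c)>0$ and $\triangle_K(\lambda,c)>0$ for all $\lambda\le 0$, so any real zero of $\triangle_K(\cdot,c)$ is automatically strictly positive.

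The heart of the argument is to track $m(c)$. For $0<c_1<c_2$ and any $\lambda>0$,
\[
\triangle_K(\lambda,c_2)-\triangle_K(\lambda,c_1)=-(c_2-c_1)\lambda+\gamma_1 K_1\int_0^h f(\tau)\big[\me^{-\lambda c_2\tau}-\me^{-\lambda c_1\tau}\big]\dif\tau<0,
\]
and evaluating at $\lambda_{\min}(c_1)>0$ with $m(c_2)\le\triangle_K(\lambda_{\min}(c_1),c_2)$ gives $m(c_2)<m(c_1)$; thus $m$ is strictly decreasing. I would then compute the limits: $m(c)\to\gamma_1 K_1+\gamma_2 K_2-\delta>0$ as $c\to 0^+$, whereas for any fixed $\lambda_0>0$ dominated convergence yields $\triangle_K(\lambda_0,c)\to-\infty$ as $c\to\infty$, so $m(c)\to-\infty$. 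Since $m$ is continuous (the value function of a smooth, coercive, strictly convex minimization), the intermediate value theorem produces a unique $c^*>0$ with $m(c^*)=0$; putting $\lambda^*=\lambda_{\min}(c^*)>0$ gives $\triangle_K(\lambda^*,c^*)=0$ and $\partial_\lambda\triangle_K(\lambda^*,c^*)=0$, which is (i). Part (ii) is immediate: $c<c^*\Rightarrow m(c)>0\Rightarrow\triangle_K>0$ everywhere. Part (iii) follows because $c>c^*\Rightarrow m(c)<0$, and a strictly convex function that blows up at both ends with negative minimum has exactly two zeros $\lambda_1<\lambda_2$, both positive by the previous paragraph, with $\triangle_K<0$ on $(\lambda_1,\lambda_2)$ and $\triangle_K>0$ outside.

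The individual steps are routine; the only points demanding genuine care are the justification of the continuity and the two limiting values of the value function $m(c)$ (in particular interchanging the limit and the integral as $c\to\infty$), and the verification that the minimizer stays strictly positive for every $c>0$ so that the comparison at $\lambda_{\min}(c_1)$ is legitimate. Conceptually the single decisive observation is that convexity in $\lambda$ combined with strict monotonicity of the minimum value in $c$ collapses the entire lemma into one application of the intermediate value theorem.
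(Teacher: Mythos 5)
Your proposal is correct. Note that the paper itself offers no proof of this lemma at all --- it is introduced with the remark that ``the following result is straightforward'' --- so there is nothing to compare against line by line; your argument is exactly the standard one that underlies such statements in the traveling-wave literature (strict convexity of $\triangle_K(\cdot,c)$ in $\lambda$, the computations $\triangle_K(0,c)=\gamma_1K_1+\gamma_2K_2-\delta>0$ and $\partial_\lambda\triangle_K(0,c)<0$ which force the minimizer to be positive and rule out nonpositive zeros, strict monotonicity of the minimum value $m(c)$ in $c$, and the intermediate value theorem). All the key computations check out: the identity $\sum_{i=1}^{N_1}J(i)=\tfrac12$ from (J) gives the stated value at $\lambda=0$, hypothesis (H) supplies $\delta<\gamma_1K_1$, and the comparison of $\triangle_K(\cdot,c_1)$ and $\triangle_K(\cdot,c_2)$ at $\lambda_{\min}(c_1)>0$ is legitimate precisely because you first proved positivity of the minimizer. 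The two points you flagged yourself --- continuity of $m$ and the limit $m(c)\to\gamma_1K_1+\gamma_2K_2-\delta$ as $c\to0^+$ --- are indeed the only places requiring any extra writing (locally uniform coercivity in $\lambda$ keeps the minimizers in a compact set, giving continuity of $m$ and the limit; for the existence of $c^*$ one only needs $m(c)>0$ for small $c$, not the exact limiting value, and as $c\to\infty$ the bound $\int_0^hf(\tau)\me^{-\lambda_0c\tau}\dif\tau\le1$ already suffices without dominated convergence).
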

Suppose $c>c^*$. Define six non-negative continuous functions as follows
\begin{equation*}
\begin{aligned}
&S_{+}(\xi)\equiv K_1,\quad \qquad\qquad \qquad\quad\; S_{-}(\xi)=\max\left\{K_1-\alpha_1 \me^{\beta_1\xi},\sigma\right\}, \\
&P_{+}(\xi)\equiv K_2, \qquad \qquad\qquad \qquad\; P_{-}(\xi)=\max\left\{K_2-\alpha_2 \me^{\beta_2\xi},0\right\}, \\
&I_{+}(\xi)=\me^{\lambda_1\xi},\qquad\qquad \qquad\qquad\; I_{-}(\xi)=\max\left\{\me^{\lambda_1\xi}(1-L\me^{\eta\xi}),0\right\},\\
\end{aligned}
\end{equation*}
where  $\alpha_1, \alpha_2, \beta_1, \beta_2, L, \eta$ are all positive constants to be determined later.
\begin{lemma}\label{L2}
The continuous functions $(S_+,P_+,I_+)$ and $(S_-,P_-,I_-)$ are a pair of
upper and lower solutions of \eqref{2.1} for  $c>c^*$ in the sense that
\begin{equation}\label{2.4}
\begin{aligned}
&cS_+'(\xi)\geq d_1\mathcal {A}[S_+](\xi)+ b_1[K_1-S_+(\xi)]- [S_+(\xi)-\sigma]I_-(\xi),\\
&cP_+'(\xi)\geq d_2\mathcal {A}[P_+](\xi)+b_2[K_2-P_+(\xi)]-P_+(\xi)I_-(\xi), \\
&cI_+'(\xi)\geq d_3\mathcal {A}[I_+](\xi)+\gamma_1 \displaystyle{\int_{0}^hf(\tau)S_+(\xi-c\tau)I_+(\xi-c\tau)\dif \tau}-\delta I_+(\xi)+\gamma_2P_+(\xi)I_+(\xi),
\end{aligned}
\end{equation}
for any  $\xi\in \mathbb{R}$ and
\begin{equation*}
\begin{aligned}
&cS_-'(\xi)\leq d_1\mathcal {A}[S_-](\xi)+ b_1[K_1-S_-(\xi)]- [S_-(\xi)-\sigma] I_+(\xi), \quad  \forall\;\xi\neq \xi_1;\\
&cP_-'(\xi)\leq d_2\mathcal {A}[P_-](\xi)+ b_2[K_2-P_-(\xi)]-P_-(\xi)I_+(\xi) , \quad \forall\; \xi\neq \xi_2;\\
&cI_-'(\xi)\leq d_3\mathcal {A}[I_-](\xi)+\gamma_1 \displaystyle{\int_{0}^hf(\tau)S_-(\xi-c\tau)I_-(\xi-c\tau)\dif \tau}-\delta I_-(\xi)+\gamma _2P_-(\xi)I_-(\xi),\; \forall\;  \xi\neq \xi_3,
\end{aligned}
\end{equation*}
where $\xi_1:=-\frac{1}{\beta_1}\ln \frac{\alpha_1}{K_1 -\sigma}, \xi_2:=-\frac{1}{\beta_2}\ln \frac{\alpha_2}{K_2}$ and $\xi_3:=-\frac{1}{\eta}\ln L$.
\end{lemma}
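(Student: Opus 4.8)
The plan is to verify each of the six differential inequalities directly by substituting the explicit profiles, reducing every inequality to a comparison of exponentials whose leading behaviour is governed by the characteristic function $\triangle_K$ in \eqref{2.3}. I would split the work into three easy inequalities (the upper solutions) and three progressively harder ones (the lower solutions), and fix the free constants $\alpha_1,\alpha_2,\beta_1,\beta_2,\eta,L$ in a definite order only at the very end.

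First, the two constant upper solutions are immediate. Since $S_+\equiv K_1$ and $P_+\equiv K_2$ give $S_+'=P_+'=0$ and $\mathcal A[S_+]=\mathcal A[P_+]=0$, the right-hand sides in \eqref{2.4} collapse to $-(K_1-\sigma)I_-(\xi)$ and $-K_2 I_-(\xi)$, which are $\le 0$ because $I_-\ge 0$ and (H) gives $K_1>\sigma$. For $I_+=\me^{\lambda_1\xi}$ I would substitute $S_+=K_1$, $P_+=K_2$ and use $d_3\mathcal A[\me^{\lambda\cdot}](\xi)=\me^{\lambda\xi}\,d_3\big(\sum_{i=1}^{N_1}J(i)(\me^{-\lambda i}+\me^{\lambda i})-1\big)$ together with $\int_0^h f(\tau)\me^{\lambda_1(\xi-c\tau)}\dif\tau=\me^{\lambda_1\xi}\int_0^h f(\tau)\me^{-\lambda_1 c\tau}\dif\tau$; the whole inequality then reduces to $-\me^{\lambda_1\xi}\triangle_K(\lambda_1,c)\ge 0$, which holds with equality since $\lambda_1$ is a root of $\triangle_K(\cdot,c)=0$ for $c>c^*$ by Lemma \ref{L1}(iii).

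Next, for $S_-$ and $P_-$ I would split at the kink $\xi=\xi_1$ (resp. $\xi_2$). On the flat part ($\xi>\xi_1$, where $S_-\equiv\sigma$) the left side vanishes while the right side is at least $b_1(K_1-\sigma)>0$, using $\mathcal A[S_-]\ge 0$ (because $\sigma$ is the global minimum of $S_-$) and $[S_--\sigma]I_+=0$; the $P_-$ case is identical with $b_2K_2>0$. On the exponential part ($\xi<\xi_1$) I would insert $S_-=K_1-\alpha_1\me^{\beta_1\xi}$, bound $\mathcal A[S_-]$ from below via $S_-(\xi-i)\ge K_1-\alpha_1\me^{\beta_1(\xi-i)}$, and bound $[S_--\sigma]I_+\le (K_1-\sigma)\me^{\lambda_1\xi}$. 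The inequality then amounts to $(K_1-\sigma)\me^{\lambda_1\xi}\le \alpha_1\me^{\beta_1\xi}\big(c\beta_1+b_1-d_1(\sum_{i\neq0}J(i)\me^{-\beta_1 i}-1)\big)$, which I would secure by first taking $\beta_1\in(0,\lambda_1)$ small enough that the bracketed factor is positive (it tends to $b_1>0$ as $\beta_1\to0^+$) and then taking $\alpha_1$ large so that $\xi_1\to-\infty$ and $\me^{\lambda_1\xi_1}$ is as small as needed; the same recipe handles $P_-$.

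The main obstacle is the lower solution $I_-$, where the nonlinear coupling to $S_-$ and $P_-$, the distributed delay, and the nonlocal operator interact simultaneously. On $\xi>\xi_3$ the profile is $0$ and the inequality is trivial since every term on the right is nonnegative. On $\xi<\xi_3$, writing $I_-=\me^{\lambda_1\xi}-L\me^{(\lambda_1+\eta)\xi}$ and using the everywhere-valid bounds $S_-\ge K_1-\alpha_1\me^{\beta_1\cdot}$, $P_-\ge K_2-\alpha_2\me^{\beta_2\cdot}$, I expect the $\me^{\lambda_1\xi}$-coefficient of the difference of the two sides to be exactly $-\triangle_K(\lambda_1,c)=0$ and the $L\me^{(\lambda_1+\eta)\xi}$-coefficient to be exactly $\triangle_K(\lambda_1+\eta,c)$, which is strictly negative once $0<\eta<\lambda_2-\lambda_1$ by Lemma \ref{L1}(iii). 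The residual terms generated by the $-\alpha_1\me^{\beta_1\cdot}$ and $-\alpha_2\me^{\beta_2\cdot}$ pieces are positive but of strictly smaller exponential order $\me^{(\lambda_1+\beta_i)\xi}$, so, provided I also impose $\eta<\min\{\beta_1,\beta_2\}$, the negative $\me^{(\lambda_1+\eta)\xi}$-term dominates on $\xi\le\xi_3$ for $L$ large. The delicate point is the circular dependence $\xi_3=-\eta^{-1}\ln L$: enlarging $L$ both strengthens the good negative term and pushes $\xi_3\to-\infty$, shrinking the residuals, so I would check the worst case at $\xi=\xi_3$ and confirm the required lower bound on $L$ is consistent. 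Fixing the constants in the order $\beta_1,\beta_2$, then $\eta$, then $\alpha_1,\alpha_2$, then $L$ completes the construction.
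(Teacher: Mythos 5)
Your proposal is correct and follows essentially the same route as the paper's proof: direct verification by exponential comparison, splitting each lower solution at its kink, identifying the coefficients $\triangle_K(\lambda_1,c)=0$ and $\triangle_K(\lambda_1+\eta,c)<0$ via Lemma \ref{L1}(iii), and fixing $\beta_1,\beta_2$ small, then $\eta<\min\{\beta_1,\beta_2,\lambda_2-\lambda_1\}$, then $\alpha_1,\alpha_2$ large, then $L$ large, exactly as in the paper. The only cosmetic difference is that you work with the everywhere-valid bounds $S_-\geq K_1-\alpha_1\me^{\beta_1\cdot}$ and $P_-\geq K_2-\alpha_2\me^{\beta_2\cdot}$ rather than imposing $\xi_3<\min\{\xi_1,\xi_2\}$ as the paper does through its choice of $L$, which changes nothing essential.
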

\begin{proof}
It is easy to see that  \eqref{2.4} is satisfied.

Let $\beta_1<\lambda_1$ be sufficiently small and $\alpha_1>K_1-\sigma$ be sufficiently large.  Then  $\xi_1<0$ is sufficiently small.
If $\xi<\xi_1$, then $S_{-}(\xi)=K_1-\alpha_1 \me^{\beta_1\xi}>\sigma$.
Noting that $I_{+}(\xi)=\me^{\lambda_1\xi}$  for all $\xi\in\mathbb{R}$,  we then follows   that
\begin{equation*}
\begin{aligned}
& d_1\mathcal {A}[S_-](\xi)-cS_-'(\xi)+ b_1[K_1-S_-(\xi)]-[S_-(\xi)-\sigma]I_+(\xi) \\
\geq &d_1\alpha_1 \me^{\beta_1\xi}\sum_{i\neq0}J(i)(1-\me^{-\beta_1 i})+c\alpha_1\beta_1\me^{\beta_1\xi}
+b_1\alpha_1\me^{\beta_1\xi}-[K_1-\sigma-\alpha_1 \me^{\beta_1\xi}]\me^{\lambda_1\xi}\\
= &\alpha_1\me^{\beta_1\xi}\Big[d_1\sum_{i\neq0}J(i)(1-\me^{-\beta_1 i})+c\beta_1+b_1
-\frac{{K}_1-\sigma}{\alpha_1}\me^{(\lambda_1-\beta_1)\xi}+\me^{\lambda_1\xi}\Big]\\
\geq &\alpha_1\me^{\beta_1\xi}\Big[d_1\sum_{i\neq0}J(i)(1-\me^{-\beta_1 i})+c\beta_1+
b_1-\frac{{K}_1-\sigma}{\alpha_1}+\me^{\lambda_1\xi}\Big]\geq 0,
\end{aligned}
\end{equation*}
due to $\beta_1$   sufficiently small and $\alpha_1$ large.
If $\xi>\xi_1$, then $S_{-}(\xi)=\sigma$ and the above inequality holds obviously.
Pick $\beta_2<\lambda_1$  and $\alpha_2>K_2$. In an analogous way, we can prove
$$d_2\mathcal {A}[P_-](\xi)-cP_-'(\xi)+ b_2[K_2-P_-(\xi)]-P_-(\xi)I_+(\xi) \geq 0   \quad\text{ for }\xi\neq \xi_2.$$

Let $0<\eta<\min\{\beta_1,\beta_2,\lambda_2-\lambda_1\}$.
Choose
$$L\geq \max\left\{\frac{\alpha_1}{{K}_1-\sigma},\frac{\alpha_2}{K_2},\frac{\alpha_1\gamma_1+\alpha_2\gamma_2}{-\triangle_{K}(\lambda_1+\eta,c)}\right\}$$
such that $\xi_3<\xi_1$ and $\xi_3<\xi_2$. If $\xi<\xi_3$, then $I_{-}(\xi)=\me^{\lambda_1\xi}(1-L\me^{\eta\xi})$,
$P_{-}(\xi)=K_2-\alpha_2 \me^{\beta_2\xi}$,  $I_{-}(\xi-c\tau)=\me^{\lambda_1(\xi-c\tau)}[1-L\me^{\eta(\xi-c\tau)}]$ and   $S_{-}(\xi-c\tau)={K}_1-\alpha_1 \me^{\beta_1(\xi-c\tau)}$. Hence
\begin{equation*}
\begin{aligned}
&d_3\mathcal {A}[I_-](\xi)-cI_-'(\xi)+\gamma_1 \displaystyle{\int_{0}^hf(\tau)S_-(\xi-c\tau)I_-(\xi-c\tau)\dif \tau}-\delta I_-(\xi)+\gamma_2P_-(\xi)I_-(\xi)\\
 &\geq \me^{\lambda_1\xi}\triangle_{K}(\lambda_1,c)-L\me^{(\lambda_1+\eta)\xi}\triangle_{K}(\lambda_1+\eta,c)-\alpha_1\gamma_1\displaystyle{\int_{0}^h f(\tau) \me^{(\beta_1+\lambda_1)(\xi-c\tau)}\dif\tau}-\alpha_2\gamma_2\me^{(\beta_2+\lambda_1)\xi}\\
&=\me^{(\lambda_1+\eta)\xi}[-L\triangle_{K}(\lambda_1+\eta,c)-\alpha_1\gamma_1\displaystyle{\int_{0}^h f(\tau)\me^{-(\beta_1+\lambda_1)c\tau} \me^{(\beta_1-\eta)\xi}\dif\tau}-\alpha_2\gamma_2\me^{(\beta_2-\eta)\xi}]\\
&\geq \me^{(\lambda_1+\eta)\xi}[-L\triangle_{K}(\lambda_1+\eta,c)-\alpha_1\gamma_1-\alpha_2\gamma_2]\geq 0.
\end{aligned}
\end{equation*}
If $\xi>\xi_3$, then $I_{-}(\xi)=0$ and thus the inequality holds clearly.
\end{proof}
The upper solution $I_{+}(\xi)$ is unbounded.  To avoid the solution variable $I$ is unbounded, we next consider
the truncated problem as follows:
\begin{equation}\label{2.5}
\left\{\begin{array}{l}
cS'(\xi)=d_1\mathcal {A}[S](\xi)+ b_1[K_1-S(\xi)]-[S(\xi)-\sigma]I(\xi), \quad \xi\in[-k,k],\\
cP'(\xi)=d_2\mathcal {A}[P](\xi)+ b_2[K_2-P(\xi)]- P(\xi)I(\xi),\quad\quad \xi\in[-k,k],\\
cI'(\xi)=d_3\mathcal {A}[I](\xi)+\gamma_1 \displaystyle{\int_{0}^hf(\tau)S(\xi-c\tau)I(\xi-c\tau)\dif \tau}-\delta I(\xi)+\gamma_2P(\xi)I(\xi),\quad \xi\in[-k,k],
\end{array}\right.
\end{equation}
wherein $k>-\xi_3$ and
$$u'(-k)=\lim_{z\searrow0}\frac{u(-k+z)-u(-k)}{z},\quad  u'(k)=\lim_{z\searrow0}\frac{u(k)-u(k-z)}{z}. $$
 Meanwhile we consider the following boundary conditions:
\begin{equation}\label{2.6}
(S(\xi),P(\xi),I(\xi))=(K_1,K_2,I_{+}(\xi)),\; \xi<-k,\qquad (S(\xi),P(\xi),I(\xi))=(S(k),P(k),I(k)),\; \xi>k.
\end{equation}
We now introduce some functional spaces for  convenience. Let $\mathbb{X}_k:= [C([-k,k])]^3$ and
 define
$$\Gamma_k:=\left\{(\psi,\phi,\varphi)\in \mathbb{X}_k:\begin{array}{l}
\mbox{\rm{(i)}} \ (\psi,\phi,\varphi)(-k)=(S_+,P_+,I_+)(-k);\\
\mbox{\rm{(ii)}}\ S_-(\xi)\leq \psi(\xi)\leq S_+(\xi), \\
\quad \; P_-(\xi)\leq \phi(\xi)\leq P_+(\xi),\\
\quad \;I_-(\xi)\leq \varphi(\xi)\leq I_+(\xi), \forall\;\xi\in[-k,k]\end{array}\right\}.$$
For any $(\psi,\phi,\varphi)\in \mathbb{X}_k$, we take the norm as
 $\|(\psi,\phi,\varphi)\|_{\mathbb{X}_k}=\max\{|\psi|_1,|\phi|_1,|\varphi|_1\}$. Here $|\cdot|_1$ is the maximum norm in $C([-k,k])$.
 Then $\Gamma_1$ is a non-empty bounded set in $(\mathbb{X}_k,\|\cdot\|_{\mathbb{X}_k})$.
 For any $(S,P,I)\in \Gamma_k$, we can extend $(S,P,I)$ to be continuous outside the interval $[-k,k]$  as definitions  on \eqref{2.6}.
 Choose $\mu>\max\{d_1+b_1+\me^{\lambda_1 k},d_2+b_2+\me^{\lambda_1 k},\delta+d_3\}$ and denote
 \begin{equation*}
 \begin{aligned}
\mathcal{P}_1[S,P,I](\xi)&=(\mu -b_1)S(\xi)+d_1\mathcal {A}[S](\xi)+b_1K_1-[S(\xi)-\sigma]I(\xi),\\
\mathcal{P}_2[S,P,I](\xi)&=(\mu -b_2) P(\xi)+d_2\mathcal {A}[P](\xi)+ b_2K_2- P(\xi)I(\xi),\\
\mathcal{P}_3[S,P,I](\xi)&=(\mu-\delta)I(\xi)+d_3\mathcal {A}[I](\xi)+\gamma_1\displaystyle{\int_{0}^hf(\tau)S(\xi-c\tau)I(\xi-c\tau)\dif \tau}+\gamma_2P(\xi)I(\xi).
\end{aligned}
\end{equation*}

\begin{lemma}\label{L3}
For each $k>-\xi_3$,  the truncated problem \eqref{2.5} with boundary conditions \eqref{2.6} has
a unique solution $(S,P,I)\in [C(\mathbb{R})]^3\cap [C^1(\mathbb{R})/\{\pm k\}]^3$ such that
\begin{equation}\label{2.7}
	\sigma\leq S_-\leq S\leq K_1,\quad 0\leq P_-\leq P\leq K_2, \quad 0\leq I_-\leq I\leq I_+ \quad \text{ over } (-\infty,k].
\end{equation}
\end{lemma}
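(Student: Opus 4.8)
The plan is to recast the boundary value problem \eqref{2.5}--\eqref{2.6} as a fixed point problem for a completely continuous operator on $\Gamma_k$ and then invoke Schauder's theorem, the choice of the shift constant $\mu$ and the operators $\mathcal{P}_1,\mathcal{P}_2,\mathcal{P}_3$ having been arranged precisely so that each scalar equation becomes order preserving. First I would note that, after adding $\mu u$ to both sides, \eqref{2.5} is equivalent to $cu'+\mu u=\mathcal{P}_i[S,P,I]$, $i=1,2,3$, so a solution is represented through the integrating factor $\me^{\mu\xi/c}$ as
\[
\mathcal{F}_i[S,P,I](\xi)=u_i(-k)\me^{-\frac{\mu}{c}(\xi+k)}+\frac1c\int_{-k}^{\xi}\me^{-\frac{\mu}{c}(\xi-s)}\mathcal{P}_i[S,P,I](s)\,\dif s,
\]
where $(u_1,u_2,u_3)(-k)=(K_1,K_2,I_+(-k))$ is prescribed by condition (i) of $\Gamma_k$, and the values of $(S,P,I)$ needed in $\mathcal{A}$ and in the delayed integral for $\xi$ near $\pm k$ are supplied by the extension \eqref{2.6}. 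A triple in $\Gamma_k$ solves \eqref{2.5}--\eqref{2.6} if and only if it is a fixed point of $\mathcal{F}=(\mathcal{F}_1,\mathcal{F}_2,\mathcal{F}_3)$.

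The key structural observation is that the threshold on $\mu$ makes the substitution monotone: since the nonlocal samples enter $\mathcal{A}$ with nonnegative weights and since $S\ge\sigma$, $P\ge0$, $I\le\me^{\lambda_1 k}$ on $[-k,k]$, the functional $\mathcal{P}_1$ (resp.\ $\mathcal{P}_2$) is nondecreasing in $S$ (resp.\ $P$) and in its nonlocal samples and nonincreasing in $I$, while $\mathcal{P}_3$ is nondecreasing in $I$ and in the delayed samples of $S,P,I$. This is exactly the mixed quasimonotone structure reflected in the cross choices of Lemma \ref{L2}, where $I_-$ is paired with $S_+,P_+$ and $I_+$ with $S_-,P_-$. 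I would then prove the invariance $\mathcal{F}(\Gamma_k)\subseteq\Gamma_k$ one component at a time. For the upper bound on $\mathcal{F}_1$, the super-solution inequality \eqref{2.4} gives $cS_+'+\mu S_+\ge\mathcal{P}_1[S_+,\cdot,I_-]\ge\mathcal{P}_1[S,P,I]$ for any $(S,P,I)\in\Gamma_k$ (using $S\le S_+$, $I\ge I_-$ and the monotonicity above); since $\mathcal{F}_1$ solves the corresponding equation with the same datum $S_+(-k)=K_1$ at $-k$, the comparison $\big(\me^{\mu\xi/c}(S_+-\mathcal{F}_1)\big)'\ge0$ yields $\mathcal{F}_1\le S_+$. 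The lower bound $\mathcal{F}_1\ge S_-$ follows symmetrically from the sub-solution inequality, the corner at $\xi_1$ causing no trouble since that inequality holds almost everywhere, and the analogous estimates for $\mathcal{F}_2,\mathcal{F}_3$ use $P_\pm$, $I_\pm$ with the correct cross arguments.

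With invariance in hand, existence is routine: $\Gamma_k$ is a nonempty, closed, bounded, convex subset of $\mathbb{X}_k$, and from $\mathcal{F}_i'=\tfrac1c(\mathcal{P}_i-\mu\mathcal{F}_i)$ one reads off that $\mathcal{F}(\Gamma_k)$ is uniformly bounded with uniformly bounded derivatives, hence equicontinuous; Arzel\`a--Ascoli makes $\mathcal{F}$ compact, and continuity in the sup norm is immediate from the continuity of the $\mathcal{P}_i$. Schauder's fixed point theorem then produces a fixed point $(S,P,I)\in\Gamma_k$ solving \eqref{2.5}--\eqref{2.6} and satisfying \eqref{2.7} on $[-k,k]$; on $(-\infty,-k)$ the bounds \eqref{2.7} hold by \eqref{2.6} directly, where $S=K_1=S_+$, $P=K_2=P_+$ and $I=I_+\ge I_-$. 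The representation also shows $(S,P,I)\in[C(\mathbb{R})]^3$ and $C^1$ away from $\pm k$, matching the claimed regularity.

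The main obstacle is uniqueness. The natural attempt — showing $\mathcal{F}$ is a contraction — runs into a genuine tension: in the unweighted sup norm the linear term $(\mu-b_1)S$ in $\mathcal{P}_1$ forces a Lipschitz constant comparable to $\mu$, so the contraction factor stays near $1$; introducing an exponential weight $\me^{-\kappa(\xi+k)}$ tames that term only for large $\kappa$, but then the forward reach of $\mathcal{A}$, which samples $u(\xi+i)$ with $1\le i\le N_1$, contributes factors $\me^{\kappa N_1}$ that blow up, so no single norm closes the estimate, and the competitive $S$--$I$ coupling prevents reducing the system to a scalar comparison by a mere sign flip. My plan is therefore to derive uniqueness from a maximum principle for the nonlocal operator $u\mapsto cu'-d_j\mathcal{A}[u]+\Lambda u$: writing $W=(S^1,P^1,I^1)-(S^2,P^2,I^2)$ for two solutions, $W$ vanishes on $(-\infty,-k]$ and solves a linear homogeneous nonlocal system with bounded coefficients of definite sign, and I would argue that $W$ cannot attain a nonzero extremum by using that at an interior extremum $\mathcal{A}[W]$ carries a favorable sign together with the strict damping from the large diagonal term, handling the off-diagonal competitive couplings by testing against the appropriate component. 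I expect this step to demand the most care, precisely because of the two-sided range of $\mathcal{A}$ together with the mixed monotonicity.
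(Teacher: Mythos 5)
Your existence argument is, step for step, the paper's own proof: the same integral operator (the paper's $\mathcal{Q}_k$, your $\mathcal{F}$) built with the shift $\mu$, the same invariance of $\Gamma_k$ deduced from Lemma \ref{L2} together with the mixed monotonicity of $\mathcal{P}_1,\mathcal{P}_2,\mathcal{P}_3$, and the same Arzel\`a--Ascoli plus Schauder conclusion; your observation that the natural Lipschitz constant $(\mu+K_1-\sigma)/\mu$ exceeds $1$ is precisely the paper's own continuity estimate \eqref{2.8}, which indeed gives continuity but never a contraction.

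The gap is in your uniqueness step, and it is genuine. Your plan invokes ``strict damping from the large diagonal term,'' but $\mu$ is an artifact of the integral reformulation: it cancels when you return to the differential system, so the equation satisfied by the difference $W=(W_1,W_2,W_3)$ of two solutions has diagonal damping only $b_1+I^1$, $b_2+I^1$ and $\delta$, respectively. Writing $S^1I^1-S^2I^2=S^1W_3+I^2W_1$ and $P^1I^1-P^2I^2=P^1W_3+I^2W_2$, at a point $\xi_0\in(-k,k]$ where $W_3$ attains a positive maximum (the case $\xi_0=k$ is included, since the extension \eqref{2.6} gives $\mathcal{A}[W_3](k)\le0$ and $W_3'(k)\ge0$ there), one only obtains
\begin{equation*}
0\;\le\;\bigl(\gamma_1K_1+\gamma_2K_2-\delta\bigr)W_3(\xi_0)
+\gamma_1\int_0^h f(\tau)\,I^2(\xi_0-c\tau)W_1(\xi_0-c\tau)\,\dif\tau
+\gamma_2 I^2(\xi_0)W_2(\xi_0),
\end{equation*}
and assumption (H) forces $\gamma_1K_1+\gamma_2K_2-\delta=\triangle_K(0,c)>0$, so the diagonal term has the \emph{wrong} sign: even if the cross terms were controlled or absent, no contradiction arises. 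This is not a technicality but intrinsic: the disease-free state is linearly unstable (positivity of $\triangle_K(0,c)$ is exactly what produces the waves), so no naive maximum principle on the difference can close. The honest diagnosis is the one you yourself flagged: with $W\equiv0$ on $(-\infty,-k]$, a Picard/Gronwall sweep from the left would give uniqueness if all arguments were retarded; what breaks it is solely the forward sampling $u(\xi+i)$, $1\le i\le N_1$, and the $\xi>k$ extension, and your maximum-principle substitute does not overcome that. You should also know that the paper does not prove uniqueness either: its proof is exactly your existence argument, ending with ``using Schauder's fixed-point theorem, the conclusion holds,'' which yields existence only; and only existence together with the bounds \eqref{2.7} is used afterwards (in the passage $k\to\infty$ leading to Theorem \ref{T4}). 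So your proposal proves everything the paper actually proves, but neither your sketch nor the paper substantiates the word ``unique'' in the statement of Lemma \ref{L3}.
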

\begin{proof}
Define an operator $\mathcal {Q}_k=(\mathcal {Q}_{1k},\mathcal {Q}_{2k},\mathcal {Q}_{3k}): \Gamma_k\to \mathbb{X}_k$ as follows:
 \begin{equation*}
 \begin{aligned}
\mathcal {Q}_{1k}[S,P,I](\xi)
&=\me^{-\frac{\mu}{c}(\xi+k)} S_+(-k)+\frac{1}{c}\int_{-k}^{\xi} \me^{-\frac{\mu}{c}(\xi-z)}\mathcal{P}_1[S,P,I](z)\dif z, \\
 \mathcal {Q}_{2k}[S,P,I](\xi)
&=\me^{-\frac{\mu}{c}(\xi+k)} P_+(-k)+\frac{1}{c}\int_{-k}^{\xi} \me^{-\frac{\mu}{c}(\xi-z)}\mathcal{P}_2[S,P,I](z)\dif z,\\
\mathcal {Q}_{3k}[S,P,I](\xi)
&=\me^{-\frac{\mu}{c}(\xi+k)} I_+(-k)+\frac{1}{c}\int_{-k}^{\xi} \me^{-\frac{\mu}{c}(\xi-z)}\mathcal{P}_3[S,P,I](z)\dif z,  \quad \xi\in[-k,k].
\end{aligned}
\end{equation*}
 By Lemma \ref{L2},   then
\begin{equation*}
\begin{aligned}
\mathcal {Q}_{1k}[S_+,P,I_-](\xi)&\leq \me^{-\frac{\mu}{c}(\xi+k)} S_+(-k)+\frac{1}{c}\int_{-k}^{\xi} \me^{-\frac{\mu}{c}(\xi-z)}[\mu S_+(z)+cS_+'(z)] \dif z\\
&=\me^{-\frac{\mu}{c}(\xi+k)} S_+(-k)+S_+(\xi)-\me^{-\frac{\mu}{c}(\xi+k)}S_+(-k)=S_+(\xi)
\end{aligned}
\end{equation*}
and
\begin{equation*}
\begin{aligned}
\mathcal {Q}_{1k}[S_-,P,I_+](\xi)&\geq \me^{-\frac{\mu}{c}(\xi+k)} S_+(-k)+\frac{1}{c}\left\{\int_{-k}^{\xi_1}+\int_{\xi_1}^{\xi}\right\} \me^{-\frac{\mu}{c}(\xi-z)}[\mu S_-(z)+cS_-'(z)] \dif z\\
&=\me^{-\frac{\mu}{c}(\xi+k)} S_+(-k)+S_-(\xi)-\me^{-\frac{\mu}{c}(\xi+k)}S_-(-k)\geq S_-(\xi).
\end{aligned}
\end{equation*}
 Note that $\mathcal {Q}_{1k}$ is nondecreasing in its first variable while decreasing in its third variable. Thus
 \begin{equation*}
S_-(\xi)\leq Q_{1k}[S,P,I](\xi)\leq  S_+(\xi), \quad \xi\in[-k,k].
\end{equation*}
Similarly, we can prove $P_-(\xi)\leq Q_{2k}[S,P,I](\xi)\leq P_+(\xi)$ and $I_-(\xi)\leq Q_{3k}[S,P,I](\xi)\leq I_+(\xi)$ for all $\xi\in[-k,k]$.
On the other hand, $\mathcal {Q}_k[S,P,I](-k)=(S_+(-k), P_+(-k),I_+(-k))$. Therefore, $\mathcal {Q}_k(\Gamma_k)\subset \Gamma_k $.

Notice that  any fixed-point  of $\mathcal {Q}_k$ is a solution of  \eqref{2.5} and  \eqref{2.6}. We first show that the operator $\mathcal {Q}_k$ is completely continuous.  Since for  $\xi\in[-k,k]$,
\begin{equation*}\label{a1}
\begin{aligned}
|S'(\xi)|&=\left|-\frac{\mu}{c}S(\xi)+\frac{1}{c}\mathcal{P}_1[S,P,I](\xi)\right|\leq 2\frac{\mu}{c}K_1,\\
|P'(\xi)|&=\left|-\frac{\mu}{c}P(\xi)+\frac{1}{c}\mathcal{P}_2[S,P,I](\xi)\right|\leq 2\frac{\mu}{c}K_2,\\
|I'(\xi)|&=\left|-\frac{\mu}{c}I(\xi)+\frac{1}{c}\mathcal{P}_3[S,P,I](\xi)\right|\leq \left(2\frac{\mu}{c}+\lambda_1\right)\me^{\lambda_1k}
,
\end{aligned}
\end{equation*}
applying the Arzela-Ascoli's theorem on $[-k,k]$, we know that $\mathcal {Q}_k$ is compact. Next we prove  the operator $\mathcal {Q}_k$ is also continuous. Straightforward calculations give that
\begin{equation}\label{2.8}
	\begin{aligned}
	&\big|\mathcal{P}_1[S_1,P_1,I_1](\cdot)-\mathcal{P}_1[S_2,P_2,I_2](\cdot)\big|\\
	\leq  &\big|(\mu-d_1-b_1-I_1)(S_1-S_2)\big|+\big|(S_2-\sigma)(I_1-I_2)\big|\\
&+d_1\left|\sum_{|i|=1}^{N_1}J(i)\big[S_1(\xi-i)-S_2(\xi-i)\big]\right|\\
\leq &(\mu-b_1-I_1)\max_{[-k,k]}|S_1-S_2|+(S_2-\sigma)\max_{[-k,k]}|I_1-I_2|\\
\leq &\mu|S_1-S_2|_1+(K_1-\sigma)|I_1-I_2|_1\\
\leq &(\mu+K_1-\sigma)\|(S_1-S_2,P_1-P_2,I_1-I_2)\|_{\mathbb{X}_k}.
\end{aligned}
\end{equation}
From \eqref{2.8}, we have
\begin{equation*}
	\begin{aligned}
	&\big|\mathcal {Q}_{1k}[S_1,P_1,I_1](\xi)-\mathcal {Q}_{1k}[S_2,P_2,I_2](\xi)\big|\\
	\leq &\frac{1}{c}\int_{-k}^{\xi} \me^{-\frac{\mu}{c}(\xi-z)}\big|\mathcal{P}_1[S_1,P_1,I_1](z)-\mathcal{P}_1[S_2,P_2,I_2](z)\big|\dif z\\
\leq &\frac{\mu+K_1-\sigma}{\mu}\|(S_1-S_2,P_1-P_2,I_1-I_2)\|_{\mathbb{X}_k}.
	\end{aligned}
	\end{equation*}
Thus the operator $\mathcal {Q}_{1k}$ is continuous. Similarly, we can show $\mathcal {Q}_{2k}$ and $\mathcal {Q}_{3k}$ are also
continuous. At last, using the  Schauder's fixed-point theorem, the conclusion holds.
\end{proof}

The next result states the existence of semi-waves.
\begin{theorem}\label{T4}
Assume that (J), (f) and (H) hold. For $c>c^*$, the system \eqref{2.1} has a solution $(S,P,I)$ such that $\sigma<S<K_1$, $0<P<K_2$ and $I>0$ on
$\mathbb{R}$. Moreover
\begin{equation}\label{2.9}
	 \lim_{\xi\rightarrow-\infty}(S,P,I)(\xi)=(K_1,K_2,0).
\end{equation}
\end{theorem}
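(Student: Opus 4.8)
The plan is to obtain the semi-wave as a locally uniform limit of the truncated solutions supplied by Lemma~\ref{L3}, then to promote the one-sided bounds to strict inequalities and to read off the boundary behaviour at $-\infty$ from the lower/upper solutions. For each integer $k>-\xi_3$ let $(S_k,P_k,I_k)$ be the unique solution of \eqref{2.5}--\eqref{2.6} from Lemma~\ref{L3}, extended to $\mathbb{R}$ through \eqref{2.6}; by \eqref{2.7} it satisfies $\sigma\le S_k\le K_1$, $0\le P_k\le K_2$ and $I_-\le I_k\le I_+$ on $(-\infty,k]$. The first task is to extract $C^1$ estimates that are uniform on compact sets and \emph{independent of $k$}. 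The derivative bound in Lemma~\ref{L3} carries the factor $\me^{\lambda_1 k}$ and is not directly usable; however, on a fixed interval $[-M,M]$ the nonlocal term $\mathcal A[\cdot]$ and the delayed integral only sample $I_k$ on $[-M-N_1,M]$ and $[-M-ch,M]$, where $I_k\le\me^{\lambda_1(M+N_1)}$ once $k>M+N_1$. Reading $S_k',P_k',I_k'$ off \eqref{2.5} then yields a bound on $[-M,M]$ depending only on $M$, so the family is uniformly bounded and equicontinuous on every compact interval.

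With these estimates, a diagonal extraction together with the Arzel\`a--Ascoli theorem gives a subsequence converging in $C^1_{loc}(\mathbb{R})$ to some $(S,P,I)$. Because $\mathcal A[\cdot]$ is a finite combination of translates and the delay term integrates over the compact interval $[0,h]$, locally uniform convergence suffices to pass to the limit in \eqref{2.5}; noting that $S I-\sigma I=[S-\sigma]I$, the limit therefore solves \eqref{2.1} on all of $\mathbb{R}$. The one-sided bounds pass to the limit as well, so $\sigma\le S\le K_1$, $0\le P\le K_2$ and $I_-\le I\le I_+$ hold on $\mathbb{R}$. Since $I_+(\xi)=\me^{\lambda_1\xi}\to0$ and $S_-(\xi)\to K_1$, $P_-(\xi)\to K_2$ as $\xi\to-\infty$, the squeezes $S_-\le S\le K_1$, $P_-\le P\le K_2$ and $0\le I\le I_+$ give the boundary condition \eqref{2.9}.

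It remains to upgrade the weak inequalities to $\sigma<S<K_1$, $0<P<K_2$, $I>0$. The strict positivity of $I$ is the delicate point, because the lower solution only forces $I\ge I_->0$ on $(-\infty,\xi_3)$ and is trivial to the right of $\xi_3$. Here I would recast the third equation of \eqref{2.1} as $cI'+(d_3+\delta)I=G$, where $G:=d_3\sum_{i\ne0}J(i)I(\cdot-i)+\gamma_1\int_0^h f(\tau)S(\cdot-c\tau)I(\cdot-c\tau)\dif\tau+\gamma_2 P I\ge0$ since $I\ge0$, $S\ge\sigma>0$, $P\ge0$ and $f\ge0$. The integrating-factor identity $I(\xi)=I(\xi_a)\me^{-\frac{d_3+\delta}{c}(\xi-\xi_a)}+\frac1c\int_{\xi_a}^{\xi}\me^{-\frac{d_3+\delta}{c}(\xi-z)}G(z)\dif z$ then shows, on choosing $\xi_a<\xi_3$ with $I(\xi_a)>0$, that $I(\xi)>0$ for every $\xi>\xi_a$; combined with $I>0$ on $(-\infty,\xi_3)$ this yields $I>0$ throughout $\mathbb{R}$.

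The remaining strict inequalities follow from a maximum-principle argument at an extremal point, using $\mathcal A[u](\xi_0)\le0$ at a maximum and $\mathcal A[u](\xi_0)\ge0$ at a minimum. If $P(\xi_0)=0$, then $\xi_0$ is a global minimum, so $P'(\xi_0)=0$ and the second equation forces $0=d_2\mathcal A[P](\xi_0)+b_2K_2\ge b_2K_2>0$, a contradiction; if $S(\xi_0)=\sigma$, the first equation gives $0=d_1\mathcal A[S](\xi_0)+b_1(K_1-\sigma)\ge b_1(K_1-\sigma)>0$, again impossible. Likewise, at a point where $S(\xi_0)=K_1$ (resp. $P(\xi_0)=K_2$) one gets $0=d_1\mathcal A[S](\xi_0)-(K_1-\sigma)I(\xi_0)$ (resp. $0=d_2\mathcal A[P](\xi_0)-K_2 I(\xi_0)$), whose right-hand side is strictly negative once $I>0$ and $K_1>\sigma$, a contradiction. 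Hence $\sigma<S<K_1$ and $0<P<K_2$. The two genuine obstacles are thus the $k$-uniform $C^1$ bootstrap in the first step and the propagation of positivity of $I$ past $\xi_3$; with both settled, Theorem~\ref{T4} follows.
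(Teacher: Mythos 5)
Your proof is correct, and its skeleton is the same as the paper's: extract a $C^1_{\mathrm{loc}}$ limit of the truncated solutions of Lemma~\ref{L3} via Arzel\`a--Ascoli and a diagonal argument, inherit the bounds \eqref{2.7}, obtain \eqref{2.9} by squeezing between $(S_-,P_-,I_-)$ and $(K_1,K_2,I_+)$, and then upgrade to strict inequalities pointwise. Two steps differ in substance, and both differences are to your credit. First, you make explicit the $k$-uniform derivative estimates on compact sets (using $I_k\le \me^{\lambda_1(M+N_1)}$ on the sampled window once $k>M+N_1$, so that the bound $\me^{\lambda_1 k}$ from Lemma~\ref{L3} never enters); the paper invokes Arzel\`a--Ascoli and ``successive extraction'' without comment, and your bootstrap is exactly what legitimizes the diagonal extraction, since both the constant $\mu$ and the derivative bound in the proof of Lemma~\ref{L3} blow up with $k$. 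Second, your positivity proof for $I$ takes a genuinely different route: the paper assumes there is a rightmost zero $\xi_4\ge\xi_3$, shows zeros propagate through the kernel ($I(\xi_4+i)=0$ whenever $J(i)>0$), and contradicts the choice of $\xi_4$; you instead write $cI'+(d_3+\delta)I=G\ge 0$ and use variation of constants to propagate strict positivity forward from a point $\xi_a<\xi_3$. Your argument is cleaner and slightly more robust: the paper's presupposes that the closed zero set has a largest element, which is not automatic if that set were unbounded above (one would repair this by taking the leftmost zero and propagating zeros leftward into $(-\infty,\xi_3)$, where $I\ge I_->0$), whereas your monotonicity of $I(\xi)\me^{(d_3+\delta)\xi/c}$ --- the same device the paper itself uses later in Lemma~\ref{L5} --- needs no such repair. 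For the strict bounds $\sigma<S<K_1$ and $0<P<K_2$, both you and the paper argue at global extrema; the paper only says ``similarly,'' and your explicit computations there are the expected ones.
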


\begin{proof}
By means of  Lemma \ref{L3}, the Arzela-Ascoli's theorem and a successive extraction of
subsequences argument, there are some functions $S,P, I\in C^1(\mathbb{R})$ such  that $(S,P,I)$ is a nonnegative solution of \eqref{2.1}. Also, $(S,P,I)$ satisfies the inequality \eqref{2.7}. Hence, \eqref{2.9} holds.

Note $I\geq I_->0$ on $(-\infty,\xi_3)$. Assume some $\xi_4\in[\xi_3,\infty)$ is the rightmost point such that $I(\xi_4)=0$. Then $I'(\xi_4)=0$. By \eqref{2.1},
\begin{equation*}
\begin{aligned}
0=cI'(\xi_4)&=d_3\sum_{i\neq0}J(i)I(\xi_4-i)+\gamma_1 \displaystyle{\int_{0}^hf(\tau)S(\xi_4-c\tau)I(\xi_4-c\tau)\dif \tau}\\
&\geq d_3\sum_{i\neq0}J(i)I(\xi_4-i)\geq 0,
\end{aligned}
\end{equation*}
which yields that $I(\xi_4+ i)=0$ for all $i\in\mathbb{Z}_+$ with $J(i)>0$.  That's a contradiction to the definition of $\xi_4$.
Thus, $I>0$ on $\mathbb{R}$. Similarly, we can show that
 $\sigma<S<K_1$  and $0<P<K_2$ on $\mathbb{R}$.
\end{proof}

Note that the diffusion is nonlocal. In order to obtain the uniform upper and lower bounds of $\frac{I'(\xi)}{I(\xi)}$ on $\mathbb{R}$,
we require the kernel function $J(\cdot)$ is compactly supported. Then we have
\begin{lemma}\label{L5}
 $\frac{I'(\xi)}{I(\xi)}$ is uniformly bounded on $\mathbb{R}$.
\end{lemma}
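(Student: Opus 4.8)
The plan is to work with the logarithmic derivative $m(\xi):=I'(\xi)/I(\xi)$, which is well defined since $I>0$ on $\mathbb{R}$ by Theorem~\ref{T4}. Dividing the $I$-equation in \eqref{2.1} by $I(\xi)$ and using (J) to write $\mathcal{A}[I](\xi)=\sum_{i=1}^{N_1}J(i)[I(\xi-i)+I(\xi+i)-2I(\xi)]$, I obtain
\begin{equation*}
cm(\xi)=d_3\sum_{i=1}^{N_1}J(i)\Big[\tfrac{I(\xi-i)}{I(\xi)}+\tfrac{I(\xi+i)}{I(\xi)}-2\Big]+\gamma_1\int_0^h f(\tau)S(\xi-c\tau)\tfrac{I(\xi-c\tau)}{I(\xi)}\dif\tau-\delta+\gamma_2P(\xi).
\end{equation*}
Since every ratio is nonnegative and $S,P\ge0$, discarding the gain terms and using $2\sum_{i=1}^{N_1}J(i)=\sum_{i\neq0}J(i)=1$ gives at once the lower bound $cm(\xi)\ge-(d_3+\delta)$, i.e.\ $m(\xi)\ge-M_1$ with $M_1:=(d_3+\delta)/c$. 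This is the easy half.

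The lower bound immediately controls all \emph{backward} ratios: integrating $m\ge-M_1$ yields $I(\xi-s)/I(\xi)\le\me^{M_1 s}$ for every $s\ge0$, so $I(\xi-i)/I(\xi)\le\me^{M_1N_1}$ for $1\le i\le N_1$ and $I(\xi-c\tau)/I(\xi)\le\me^{M_1ch}$ for $\tau\in[0,h]$. Feeding these into the identity for $cm(\xi)$, together with $S\le K_1$ and $P\le K_2$, reduces the upper estimate to
\begin{equation*}
cm(\xi)\le C_0+d_3\sum_{i=1}^{N_1}J(i)\frac{I(\xi+i)}{I(\xi)},
\end{equation*}
for a constant $C_0$ depending only on the parameters. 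Hence everything hinges on a uniform upper bound for the \emph{advance} ratios $I(\xi+i)/I(\xi)$, $1\le i\le N_1$. This is the main obstacle: the nonlocal operator looks forward, and the lower bound on $m$ places no restriction on how fast $I$ may rise, so the naive rearrangement $d_3J(i)I(\xi+i)/I(\xi)\le cm(\xi)+d_3+\delta$ only reproduces the quantity we are trying to bound.

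To break this circularity I will use the a priori global bound $0<I(\xi)\le\me^{\lambda_1\xi}$ on all of $\mathbb{R}$, inherited from $I\le I_+$ in Lemma~\ref{L3}: on $(-\infty,-k)$ one has $I=I_+$, on $[-k,k]$ one has $I\le I_+$, and on $(k,\infty)$ the flat extension gives $I\equiv I(k)\le\me^{\lambda_1 k}\le\me^{\lambda_1\xi}$, so the inequality survives the passage $k\to\infty$; near $-\infty$ the lower solution gives $I(\xi)\ge\me^{\lambda_1\xi}(1-L\me^{\eta\xi})$, whence $m(\xi)\to\lambda_1$ as $\xi\to-\infty$ and $m$ is certainly bounded there. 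Setting $G:=\me^{-\lambda_1\xi}I\in(0,1]$, a large advance ratio $I(\xi+i)/I(\xi)$ forces $G(\xi)$ to be tiny while $G(\xi+i)$ is of order one, that is, a deep downward spike of $G$. I will rule such spikes out by a finite-range maximum-principle argument, which is exactly where the compact support of $J$ is needed. At an interior local minimum $\theta$ of $I$ one has $I'(\theta)=0$, and evaluating \eqref{2.1} there, after discarding the nonnegative incidence and transmission terms, yields $d_3\sum_{i=1}^{N_1}J(i)[I(\theta-i)+I(\theta+i)]\le(d_3+\delta)I(\theta)$, hence $I(\theta\pm i)\le\frac{d_3+\delta}{d_3J(i)}I(\theta)$: the neighbours of a valley are comparable to the valley value, so no deep spike can occur at a local minimum.

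It then remains to upgrade this valley estimate to a uniform bound on $m$ on stretches where $I$ is monotone and thus has no interior extremum. For this I plan to argue by contradiction: if $\sup_{\mathbb{R}}m=+\infty$, choose $\xi_n$ with $m(\xi_n)\to+\infty$, necessarily with $\xi_n\to+\infty$ since $m$ is bounded near $-\infty$ and on compacta, and normalise $w_n(\zeta):=I(\xi_n+\zeta)/I(\xi_n)$. The shifted equation forces $\sum_{i=1}^{N_1}J(i)w_n(i)\to\infty$, i.e.\ a blowing-up advance ratio; tracking this growth back to the nearest local minimum to the left and combining the valley estimate with the cap $G\le1$ produces a bounded limiting configuration that is incompatible with $w_n'(0)=m(\xi_n)\to+\infty$, the desired contradiction. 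Together with $m\ge-M_1$, this yields the uniform boundedness of $m=I'/I$ on $\mathbb{R}$. I expect the monotone-stretch step, rather than the valley estimate itself, to be the delicate part, since it is there that the forward-looking coupling and the possible oscillations of $I$ interact.
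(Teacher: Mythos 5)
Your lower bound, your backward-ratio estimates, and your reduction of the problem to a uniform bound on the advance ratios $\sum_{i}J(i)I(\xi+i)/I(\xi)$ all agree with the paper's proof. But the gap is exactly where you feared, and as proposed it is fatal: the monotone-stretch step. Your valley estimate applies only at points $\theta$ with $I'(\theta)\le 0$, and to transfer it to $\xi$ you need such a $\theta$ with $I(\theta)\le I(\xi)$ within a \emph{bounded} distance to the right of $\xi$; when $I$ is increasing on a long interval $[\xi,\xi+R]$ the nearest usable valley is at distance at least $R$, and the backward-ratio bridge costs a factor $\me^{M_1R}$, which is unbounded. Neither of your rescues closes this. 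First, the cap $G=\me^{-\lambda_1\xi}I\le1$ only yields, via $I'(\zeta)\le C\me^{\lambda_1\zeta}$, the estimate $I(\xi+i)/I(\xi)\le 1+C'/G(\xi)$, which is vacuous precisely in the dangerous regime: nothing proved at this stage prevents $G(\xi)\to0$ as $\xi\to+\infty$ (for the actual wave $I\to I_*$, so $G$ decays exponentially), and your assertion that a large advance ratio forces $G(\xi+i)$ to be of order one is false — it only forces $G(\xi)\ll G(\xi+i)\le1$, with both possibly tiny. Second, the normalization argument is circular: to pass to a limit of $w_n(\zeta)=I(\xi_n+\zeta)/I(\xi_n)$ on the forward side $\zeta>0$ you need locally uniform bounds on $w_n$, which is exactly the advance-ratio bound being sought; conversely, if you had those bounds, the equation evaluated at $\zeta=0$ would bound $m(\xi_n)$ outright and no limit would be needed. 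Two smaller defects: the inference ``$I\ge\me^{\lambda_1\xi}(1-L\me^{\eta\xi})$, whence $m(\xi)\to\lambda_1$'' is a non sequitur (a pointwise sandwich on $I$ controls no derivative; it can be repaired by feeding the sandwich through the equation), and the pointwise valley bound $I(\theta\pm i)\le\frac{d_3+\delta}{d_3J(i)}I(\theta)$ requires $J(i)>0$, which (J) does not guarantee for every $i\le N_1$, so one must work with weighted sums throughout.

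The missing idea — and it is how the paper handles precisely the increasing stretches — is to \emph{integrate} the forward-coupled differential inequality rather than evaluate it at critical points. From $cI'+(d_3+\delta)I\ge d_3\sum_iJ(i)I(\cdot+i)$ one gets $\bigl[I(z)\me^{\frac{d_3+\delta}{c}z}\bigr]'\ge\frac{d_3}{c}\me^{\frac{d_3+\delta}{c}z}\sum_iJ(i)I(z+i)$; integrating this over the half-steps $[\xi,\xi+\frac{i}{2}]$ and $[\xi-\frac{i}{2},\xi]$, and using the monotonicity of $I(z)\me^{\frac{d_3+\delta}{c}z}$ to compare integrands, yields the two inequalities $\sum_iJ(i)I(\xi+i)\le C_1\sum_iJ(i)I(\xi+\frac{i}{2})$ and $\sum_iJ(i)I(\xi+\frac{i}{2})\le C_2I(\xi)$ with constants depending only on $c,d_3,\delta,N_1$ (the paper's (2.16) and (2.19)); their product is the advance-ratio bound, uniform in $\xi$, with no case distinction on the monotonicity of $I$ and no recourse to the upper solution. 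Note that the forward coupling is genuinely needed: a function satisfying only your three tools — $I'\ge-M_1I$, the valley inequality at points where $I'\le0$, and $0<I\le\me^{\lambda_1\xi}$ — can have unbounded advance ratios (take $I=\me^{\lambda_1\xi}\exp(-\me^{-\xi})$, strictly increasing, so the valley condition is vacuous), so no amount of care in your contradiction step can succeed without invoking the differential inequality on the increasing stretches. Unless you supply an argument of this integration type there, the proof does not close.
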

\begin{proof}
Remember that
\begin{equation*}
\begin{aligned}
	&cI'(\xi)+(d_3+\delta)I(\xi)\\
= &d_3\sum_{i=1}^{N_1}J(i)[I(\xi+i)+I(\xi-i)]
+\gamma_1 \displaystyle{\int_{0}^hf(\tau)S(\xi-c\tau)I(\xi-c\tau)\dif \tau}+\gamma_2P(\xi)I(\xi).
\end{aligned}
\end{equation*}
Denote $u(\xi)=I(\xi)\me^{\frac{d_3+\delta }{c}\xi}$. Then $u(\xi)$ is  increasing on $\mathbb{R}$. Thus, $I(\xi-s)\me^{\frac{d_3+\delta }{c}(\xi-s)}\leq I(\xi)\me^{\frac{d_3+\delta }{c}\xi}$ for any $s>0$. Namely, $I(\xi-s)\leq I(\xi)\me^{\frac{d_3+\delta }{c}s}$ for any $s>0$. Especially,
for $s=i\in\mathbb{Z}_+$, we get $I(\xi-i)\leq I(\xi)\me^{\frac{d_3+\delta }{c}i}$. It implies that for $J(i)>0$,
\begin{equation}\label{2.12}
\sum_{i=1}^{N_1}J(i)I(\xi-i)\leq I(\xi)\sum_{i=1}^{N_1}J(i)\me^{\frac{d_3+\delta }{c}i},	  \quad \forall\;\xi\in\mathbb{R}.
\end{equation}
The inequality \eqref{2.12} shows that
\begin{equation}\label{2.13}
\frac{\sum_{i=1}^{N_1}J(i)I(\xi-i)}{I(\xi)}\leq \sum_{i=1}^{N_1}J(i)\me^{\frac{d_3+\delta }{c}i},	  \quad \forall\;\xi\in\mathbb{R}.
\end{equation}
Since
\begin{equation}\label{2.14}
\left[I(z)\me^{\frac{d_3+\delta }{c}z}\right]'\\
> \frac{d_3}{c}\me^{\frac{d_3+\delta }{c}z}\sum_{i=1}^{N_1}J(i)I(z+i),
\end{equation}
integrating the inequality \eqref{2.14} over $[\xi,\xi+\frac{i}{2}]$ with $1\leq i\leq N_1$, we reach
\begin{equation*}
\begin{aligned}
	&\me^{\frac{d_3+\delta }{c}(\xi+\frac{i}{2})}I(\xi+\frac{i}{2})-\me^{\frac{d_3+\delta }{c}\xi}I(\xi)\\
\geq &\frac{d_3}{c}\int_{\xi}^{\xi+\frac{i}{2}}\me^{\frac{d_3+\delta }{c}z}\sum_{i=1}^{N_1}J(i)I(z+i)\dif z\\
=&\frac{d_3}{c}\int_{\xi}^{\xi+\frac{i}{2}}\sum_{i=1}^{N_1}J(i)\me^{\frac{d_3+\delta }{c}(z+i)}I(z+i)\me^{-\frac{d_3+\delta }{c}i}\dif z\\
\geq &\frac{d_3}{c}\int_{\xi}^{\xi+\frac{i}{2}}\me^{\frac{d_3+\delta }{c}\xi}\sum_{i=1}^{N_1}J(i)I(\xi+i)\dif z\\
=&\frac{d_3i}{2c}\me^{\frac{d_3+\delta }{c}\xi}\sum_{i=1}^{N_1}J(i)I(\xi+i)\geq \frac{d_3}{2c}\me^{\frac{d_3+\delta }{c}\xi}\sum_{i=1}^{N_1}J(i)I(\xi+i).
\end{aligned}
\end{equation*}
Therefore, for  $1\leq i\leq N_1$,
\begin{equation*}
\begin{aligned}
	I(\xi+\frac{i}{2})&\geq \frac{d_3}{2c}\me^{-\frac{d_3+\delta }{c}\frac{i}{2}}\sum_{i=1}^{N_1}J(i)I(\xi+i)\\
&\geq \frac{d_3}{2c}\me^{-\frac{d_3+\delta }{c}\frac{N_1}{2}}\sum_{i=1}^{N_1}J(i)I(\xi+i).
\end{aligned}
\end{equation*}
It yields that
\begin{equation}\label{2.15}
\sum_{i=1}^{N_1}J(i)I(\xi+\frac{i}{2})\geq \frac{d_3}{4c}\me^{-\frac{d_3+\delta }{c}\frac{N_1}{2}}\sum_{i=1}^{N_1}J(i)I(\xi+i)
\end{equation}
due to $\sum_{i=1}^{N_1}J(i)=\frac{1}{2}$. The inequality \eqref{2.15} tells us that
\begin{equation}\label{2.16}
\frac{\sum_{i=1}^{N_1}J(i)I(\xi+i)}{\sum_{i=1}^{N_1}J(i)I(\xi+\frac{i}{2})}\leq \frac{4c}{d_3}\me^{\frac{d_3+\delta }{c}\frac{N_1}{2}},	  \quad \forall\;\xi\in\mathbb{R}.
\end{equation}
Similarly, by integrating the inequality \eqref{2.14} over $[\xi,\xi+i]$ with $1\leq i\leq N_1$, we also have
\begin{equation*}
\me^{\frac{d_3+\delta }{c}i}I(\xi+i)-I(\xi)
\geq \frac{d_3}{c} \sum_{i=1}^{N_1}J(i)I(\xi+i).
\end{equation*}
Hence,
\begin{equation}\label{2.17}
I(\xi+i)\geq \left[I(\xi)+\frac{d_3}{c}\sum_{i=1}^{N_1}J(i)I(\xi+i)\right]\me^{-\frac{d_3+\delta }{c}N_1}.
\end{equation}
Substituting \eqref{2.17} into \eqref{2.14}, we obtain
\begin{equation}\label{2.18}
\begin{aligned}
&\left[I(z)\me^{\frac{d_3+\delta }{c}z}\right]'\\
&\geq \frac{d_3}{c}\me^{\frac{d_3+\delta }{c}(z-N_1)}\sum_{i=1}^{N_1}J(i)\left[I(z)+\frac{d_3}{c}\sum_{i=1}^{N_1}J(i)I(z+i)\right]\\
&\geq \frac{d_3^2}{c^2}\me^{\frac{d_3+\delta }{c}(z-N_1)}\sum_{i=1}^{N_1}J(i)\left[\sum_{i=1}^{N_1}J(i)I(z+i)\right]\\
&=\frac{d_3^2}{2c^2}\me^{\frac{d_3+\delta }{c}(z-N_1)}\left[\sum_{i=1}^{N_1}J(i)I(z+i)\right].\\
\end{aligned}\end{equation}
Integrating the inequality \eqref{2.18} over $[\xi-\frac{i}{2},\xi]$ with $1\leq i\leq N_1$, then similarly as above,
\begin{equation*}
\me^{\frac{d_3+\delta }{c}\xi}I(\xi)
\geq \frac{d_3^2}{2c^2}\me^{-\frac{d_3+\delta }{c}N_1}\int_{\xi-\frac{i}{2}}^{\xi}\sum_{i=1}^{N_1}J(i)I(\xi+\frac{i}{2})\me^{\frac{d_3+\delta }{c}(\xi-\frac{i}{2})}\dif z.
\end{equation*}
Thus, for $1\leq i\leq N_1$,
\begin{equation*}
\begin{aligned}
	I(\xi)\geq &\frac{d_3^2}{2c^2}\me^{-\frac{d_3+\delta }{c}N_1}\cdot \frac{i}{2}\left[\sum_{i=1}^{N_1}J(i)I(\xi+\frac{i}{2})\me^{-\frac{d_3+\delta }{c}\frac{i}{2}}\right]\\
\geq& \frac{d_3^2}{4c^2}\me^{-\frac{d_3+\delta }{c}N_1}\sum_{i=1}^{N_1}J(i)I(\xi+\frac{i}{2})\me^{-\frac{d_3+\delta }{c}\frac{N_1}{2}}\\
=& \frac{d_3^2}{4c^2}\me^{-\frac{d_3+\delta }{c}\frac{3N_1}{2}}\sum_{i=1}^{N_1}J(i)I(\xi+\frac{i}{2}).
\end{aligned}
\end{equation*}
It implies that
\begin{equation}\label{2.19}
\frac{\sum_{i=1}^{N_1}J(i)I(\xi+\frac{i}{2})}{I(\xi)}\leq \frac{4c^2}{d_3^2}\me^{\frac{d_3+\delta }{c}\frac{3N_1}{2}},	  \quad \forall\;\xi\in\mathbb{R}.
\end{equation}
Inequalities \eqref{2.16} and \eqref{2.19} show that
\begin{equation}\label{2.20}
\begin{aligned}
\frac{\sum_{i=1}^{N_1}J(i)I(\xi+i)}{I(\xi)}&=\frac{\sum_{i=1}^{N_1}J(i)I(\xi+i)}{\sum_{i=1}^{N_1}J(i)I(\xi+\frac{i}{2})}
\frac{\sum_{i=1}^{N_1}J(i)I(\xi+\frac{i}{2})}{I(\xi)}\\
&\leq \frac{16c^3}{d_3^3}\me^{\frac{2(d_3+\delta)N_1 }{c}}.
\end{aligned}\end{equation}
Denote $\eta(\xi)=\frac{I'(\xi)}{I(\xi)}$. Then
\begin{equation*}
	\begin{aligned}
\eta(\xi)&\geq -\frac{d_3+\delta }{c}+\frac{\gamma_1 }{c} \displaystyle{\int_{0}^hf(\tau)S(\xi-c\tau)\frac{I(\xi-c\tau)}{I(\xi)}\dif \tau}\\
&\geq -\frac{d_3+\delta}{c} +\frac{\gamma_1 \sigma}{c} \displaystyle{\int_{0}^hf(\tau)\me^{\int_{\xi}^{\xi-c\tau}\eta(z)\dif z}\dif \tau}.
\end{aligned}
\end{equation*}
Put $w(\xi)=\me^{\frac{d_3+\delta }{c}\xi+\int_{0}^{\xi}\eta(z)\dif z}$.
Then $w(\xi)$ satisfies
\begin{equation*}
	\begin{aligned}
w'(\xi)&=w(\xi)\left[ \frac{d_3+\delta }{c}+\eta(\xi)\right]\geq\frac{\gamma_1 \sigma}{c} \displaystyle{\int_{0}^hf(\tau)\me^{\int_{\xi}^{\xi-c\tau}\eta(z)\dif z}\dif \tau}.
\end{aligned}
\end{equation*}
It implies $w(\xi)$ is increasing. Therefore,
\begin{equation*}
\displaystyle{\int_{0}^hf(\tau)\frac{I(\xi-c\tau)}{I(\xi)}\dif \tau}=\displaystyle{\int_{0}^hf(\tau)\me^{(d_3+\delta) \tau}\frac{w(\xi-c\tau)}{w(\xi)}\dif \tau}\leq \displaystyle{\int_{0}^hf(\tau)\me^{(d_3+\delta) \tau}\dif \tau},
\end{equation*}
which results in
\begin{equation}\label{2.21}
	0< \displaystyle{\int_{0}^hf(\tau)S(\xi-c\tau)\frac{I(\xi-c\tau)}{I(\xi)}\dif \tau}\leq K_1\displaystyle{\int_{0}^hf(\tau)\me^{(d_3+\delta) \tau}\dif \tau}.
\end{equation}
By inequalities \eqref{2.13}, \eqref{2.20}  and  \eqref{2.21}, then
\begin{equation*}
\begin{aligned}
	-\frac{(d_3+\delta)}{c}&<\frac{I'(\xi)}{I(\xi)}\\
&\leq \frac{d_3}{c}\left[\sum_{i=1}^{N_1}J(i)\me^{\frac{d_3+\delta }{c}i}+\frac{16c^3}{d_3^3}\me^{\frac{2(d_3+\delta)N_1 }{c}}\right]
+\frac{\gamma_1 K_1}{c} \displaystyle{\int_{0}^hf(\tau)\me^{(d_3+\delta) \tau}\dif \tau}+\frac{\gamma_2K_2}{c}.
\end{aligned}
\end{equation*}

\end{proof}

With the help of Lemma \ref{L5}, we are able to  verify that $I$ is uniformly bounded on $\mathbb{R}$.
\begin{lemma}\label{L6}
For any  sequence $\{c_\iota\}$ with $c_\iota\in (c^*,\tilde{c})$ , let $\{(S_\iota,P_\iota,I_\iota)\}$ be the solution sequence of \eqref{2.1} corresponding to wave speeds  $\{c_\iota\}$.
Then there exists some constant $M>0$ (independent of $\iota$) such that $\|I_\iota(\cdot)\|_{L^{\infty}(\mathbb{R})}\leq M$.
\end{lemma}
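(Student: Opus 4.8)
The plan is to argue by contradiction, normalising the $I$-component at its (almost) maximum and passing to a \emph{linear} limit equation whose growth coefficient has been pushed down to the subcritical value $\gamma_1\sigma$. Suppose no uniform bound exists, so that, after relabelling, $\|I_\iota\|_{L^\infty(\mathbb{R})}\to\infty$. For each $\iota$ I would pick an almost-maximiser $\xi_\iota$: either $I_\iota(\xi_\iota)\ge(1-\tfrac1\iota)\|I_\iota\|_{L^\infty}$ when the supremum is finite, or $I_\iota(\xi_\iota)=\max_{[\xi_\iota-\iota,\,\xi_\iota+\iota]}I_\iota\ge\iota$ otherwise. In both cases $m_\iota:=I_\iota(\xi_\iota)\to\infty$ and $I_\iota\le(1+o(1))m_\iota$ on windows exhausting $\mathbb{R}$. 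Translating by $\xi_\iota$ and setting $\bar I_\iota:=I_\iota(\cdot+\xi_\iota)/m_\iota$, Lemma \ref{L5} (whose constant $C_0$ is uniform for $c_\iota\in(c^*,\tilde c)$, since $c^*>0$) gives $\me^{-C_0|\xi|}\le\bar I_\iota(\xi)\le1$ and $|\bar I_\iota'|\le C_0\bar I_\iota$, so by the Arzela-Ascoli theorem $\bar I_\iota\to\bar I$ locally uniformly along a subsequence with $c_\iota\to c_\infty\in[c^*,\tilde c]$, $\bar I(0)=1$, $0<\bar I\le1$, and $0$ a global maximum of $\bar I$.

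The decisive step is to show that wherever $I_\iota$ is enormous the susceptible compartments collapse to their floors in an averaged sense, namely $\hat S_\iota:=S_\iota(\cdot+\xi_\iota)\to\sigma$ and $\hat P_\iota:=P_\iota(\cdot+\xi_\iota)\to0$ in $L^1_{\mathrm{loc}}(\mathbb{R})$. To obtain this I would integrate the first equation of \eqref{2.1} over $[-R,R]$, giving
\begin{equation*}
\int_{-R}^R(\hat S_\iota-\sigma)\hat I_\iota\,\dif\xi
= d_1\int_{-R}^R\mathcal{A}[\hat S_\iota]\,\dif\xi+b_1\int_{-R}^R(K_1-\hat S_\iota)\,\dif\xi-c_\iota\big[\hat S_\iota(R)-\hat S_\iota(-R)\big]\le C(R),
\end{equation*}
where $C(R)$ is independent of $\iota$: indeed $\sigma\le\hat S_\iota\le K_1$, and the nonlocal integral telescopes into boundary terms bounded by $2N_1K_1$. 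Since Lemma \ref{L5} forces $\hat I_\iota\ge m_\iota\me^{-C_0R}$ on $[-R,R]$ and $\hat S_\iota-\sigma\ge0$, dividing yields $\int_{-R}^R(\hat S_\iota-\sigma)\,\dif\xi\le C(R)\me^{C_0R}/m_\iota\to0$. The identical computation applied to the second equation gives $\int_{-R}^R\hat P_\iota\,\dif\xi\to0$.

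I would then pass to the limit in the normalised third equation
\begin{equation*}
c_\iota\bar I_\iota'=d_3\mathcal{A}[\bar I_\iota]+\gamma_1\int_0^h f(\tau)\hat S_\iota(\xi-c_\iota\tau)\bar I_\iota(\xi-c_\iota\tau)\,\dif\tau-\delta\bar I_\iota+\gamma_2\hat P_\iota\bar I_\iota.
\end{equation*}
Using $0<\bar I_\iota\le1$ together with the two $L^1_{\mathrm{loc}}$ limits, the term $\gamma_2\hat P_\iota\bar I_\iota\to0$, and, writing $\hat S_\iota=\sigma+(\hat S_\iota-\sigma)$, the convolution converges to $\gamma_1\sigma\int_0^h f(\tau)\bar I(\xi-c_\infty\tau)\,\dif\tau$ in $L^1_{\mathrm{loc}}$. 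Hence, in integrated and therefore classical form, $\bar I$ solves
\begin{equation*}
c_\infty\bar I'=d_3\mathcal{A}[\bar I]+\gamma_1\sigma\int_0^h f(\tau)\bar I(\xi-c_\infty\tau)\,\dif\tau-\delta\bar I.
\end{equation*}
Evaluating this at the global maximum $\xi=0$, where $\bar I'(0)=0$, $\mathcal{A}[\bar I](0)\le0$ and $\int_0^h f(\tau)\bar I(-c_\infty\tau)\,\dif\tau\le\bar I(0)=1$, yields $0\le\gamma_1\sigma-\delta$, contradicting the hypothesis (H) that $\gamma_1\sigma<\delta$. This contradiction establishes the existence of the uniform bound $M$.

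I expect the genuine obstacle to be the depletion estimate of the second paragraph, which is exactly what converts the \emph{supercritical} coefficient $\gamma_1K_1+\gamma_2K_2$ (recall $\triangle_K(0,c)=\gamma_1K_1+\gamma_2K_2-\delta>0$) into the \emph{subcritical} $\gamma_1\sigma<\delta$. Because the bilinear incidence precludes any uniform-in-$\iota$ bound on $\hat S_\iota'$, the convergence $\hat S_\iota\to\sigma$ is available only in the weak $L^1_{\mathrm{loc}}$ sense, so the delicate point is carrying this weak convergence through the distributed-delay convolution while keeping every constant uniform for $c_\iota$ in the compact range $(c^*,\tilde c)$.
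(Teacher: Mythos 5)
Your argument contains a genuine gap at the very first step: the selection of the normalizing points $\xi_\iota$ in the case where $\|I_\iota\|_{L^\infty(\mathbb{R})}=\infty$. You propose to take $\xi_\iota$ with $I_\iota(\xi_\iota)=\max_{[\xi_\iota-\iota,\xi_\iota+\iota]}I_\iota\ge\iota$, but such a point need not exist: if $I_\iota$ is eventually strictly increasing (think of $\me^{\eta\xi}$), no point with a large value is a maximum over a two-sided window centered at itself. This is not a pathological case to be waved away --- it is exactly the behavior that occurs. The paper's Claim 2 shows that if $\limsup_{\xi\to\infty}I(\xi)=\infty$ then $I$ is strictly increasing for $\xi\gg1$, so for large $\iota$ your case-2 selection is vacuous. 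One cannot repair it by taking a maximizer over a left window $[\xi_\iota-\iota,\xi_\iota]$ only: then the limit profile $\bar I$ has a maximum at $0$ only over $(-\infty,0]$, and since the dispersal operator $\mathcal{A}$ samples points on \emph{both} sides, the key sign condition $\mathcal{A}[\bar I](0)\le0$ fails (the distributed delay only looks backward, but $\mathcal{A}$ does not). This is precisely why the paper does not use a maximum-point evaluation for this case; instead it proves each $\|I_{\iota}\|_{L^\infty}$ is finite (Claim 3) by showing that an unbounded solution would have to satisfy $I_{\iota_0}(\xi)\sim\me^{\eta_2\xi}$ with $\eta_2>\lambda_1(c_{\iota_0})$ (via the characterization of solutions of the limiting linear equation \eqref{2.23} as combinations of exponentials), contradicting the a priori bound $I_{\iota_0}(\xi)\le\me^{\lambda_1\xi}$ inherited from the upper solution in the construction of Theorem \ref{T4}. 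That a priori exponential bound, which your proof never invokes, is the ingredient that actually kills the unbounded case; without it (or Claims 2--3) your contradiction argument does not get off the ground whenever some $I_\iota$ is unbounded.

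For what it is worth, the part of your argument that treats the case where every $\|I_\iota\|_{L^\infty}$ is finite is correct and is a genuinely different, and in some respects more elementary, route than the paper's: your integrated ``depletion'' estimate $\int_{-R}^{R}(\hat S_\iota-\sigma)\,\dif\xi\le C(R)\me^{C_0R}/m_\iota\to0$ replaces the paper's pointwise contradiction argument (Claim 1, which yields convergence in $C_{\mathrm{loc}}$ rather than $L^1_{\mathrm{loc}}$), and your evaluation of the limit equation at a global maximum, giving $0\le\gamma_1\sigma-\delta$ against hypothesis (H), replaces the paper's appeal to the exponential-solution structure of \eqref{2.23}. The passage of the weak $L^1_{\mathrm{loc}}$ convergence through the convolution is handled correctly, and the uniformity of the Lemma \ref{L5} constant on $(c^*,\tilde c)$ is used legitimately. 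So the proposal would become a complete (and partially novel) proof if you first establish, by a separate argument such as the paper's Claims 2--3, that each $I_\iota$ is bounded on $\mathbb{R}$; as written, that step is missing and the case it is meant to cover is the one your selection mechanism cannot handle.
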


\begin{proof}
We will prove the lemma by three claims.

\textbf{Claim 1.} If there exists a sequence $\{\xi_\iota\}$ such that
$\lim_{\iota\to\infty}I_\iota(\xi_\iota)=\infty$, then $S_\iota(\xi+\xi_\iota)\to \sigma$ and  $P_\iota(\xi+\xi_\iota)\to 0$ in $C_{\text{loc}}(\mathbb{R})$ as $\iota\to\infty$.

 Note that
$$I_{\iota}(\xi+\xi_\iota)=I_{\iota}(\xi_\iota)\me^{\int_{\xi_\iota}^{\xi+\xi_\iota}\frac{I_{\iota}'(\theta)}{I_{\iota}(\theta)}\dif\theta}.$$
For any fixed $\nu\in\mathbb{R}$ with $\nu>0$,  since $\frac{I'_\iota(\xi)}{I_\iota(\xi)}$ is uniformly bounded on $\mathbb{R}$,
$\me^{\int_{\xi_\iota}^{\xi+\xi_\iota}\frac{I_{\iota}'(\theta)}{I_{\iota}(\theta)}\dif\theta}$ is uniformly bounded for any $\xi\in[-\nu,\nu]$. Thus,
if $I_\iota(\xi_\iota)\to\infty$ as $\iota\to\infty$, then $I_\iota(\xi+\xi_\iota)\to \infty$ for any $\xi\in[-\nu,\nu]$  as $\iota\to\infty$.
Notice we always get $S_\iota(\xi+\xi_\iota)\geq \sigma$ for any $\xi\in[-\nu,\nu]$ as $\iota\to\infty$.
Suppose $S_\iota(\xi+\xi_\iota)\not\to\sigma$ for $\xi\in[-\nu,\nu]$  as $\iota\to\infty$. Then up to extraction a subsequence, there exists a sequence $\{\vartheta_\iota\}$ with $\vartheta_\iota\in[-\nu,\nu]$, such
that $S_\iota(\vartheta_\iota+\xi_\iota)\geq \sigma+r'$ for some  positive constant $r'$.
Letting $\tilde{\xi}_\iota=\vartheta_\iota+\xi_\iota$, then $I_\iota(\tilde{\xi}_\iota)\to \infty$ as $\iota\to\infty$ and $S_\iota(\tilde{\xi}_\iota)\geq \sigma+r'$. Since
\begin{equation*}
\begin{aligned}
S_\iota'(\xi)&=\frac{d_1}{c}\mathcal {A}[S_\iota](\xi)+\frac{b_1}{c}[K_1-S_\iota(\xi)]- \frac{1}{c}[S_\iota(\xi)-\sigma]I_\iota(\xi)\\
&\leq \frac{d_1}{c}\sum_{i=1}^{N_1}J(i)\cdot 2K_1+\frac{b_1}{c}K_1- \frac{1}{c}[S_\iota(\xi)-\sigma]I_\iota(\xi)\\
&=\frac{(d_1+b_1)K_1}{c}- \frac{1}{c}[S_\iota(\xi)-\sigma]I_\iota(\xi),
\end{aligned}
\end{equation*}
it follows that for all $\xi\in[\tilde{\xi}_\iota-\varepsilon_0,\tilde{\xi}_\iota]$ with $\varepsilon_0=\frac{r'c}{2(d_1+b_1)K_1}$,
\begin{equation*}
S_\iota(\xi)=S_\iota(\tilde{\xi}_\iota)-\int_{\xi}^{\tilde{\xi}_\iota}S_\iota'(\theta)\dif \theta\geq \sigma+r'-\frac{(d_1+b_1)K_1}{c}\varepsilon_0=\sigma+\frac{r'}{2}.
\end{equation*}
Therefore for all $\xi\in[\tilde{\xi}_\iota-\varepsilon_0,\tilde{\xi}_\iota]$,
\begin{equation*}
S_\iota'(\xi)\leq \frac{(d_1+b_1)K_1}{c}- \frac{r'}{2c}I_\iota(\xi)\to-\infty, \quad \text{as } \iota\to\infty.
\end{equation*}
This contradicts the fact that $S_\iota$ is bounded. Thus due to the arbitrariness of $\nu$, $S_\iota(\xi+\xi_\iota)\to \sigma$ in $C_{\text{loc}}(\mathbb{R})$ as $\iota\to\infty$.
Similarly, we can show that $P_\iota(\xi+\xi_\iota)\to 0$ in $C_{\text{loc}}(\mathbb{R})$ as $\iota\to\infty$.

Letting $\{c_\iota\}=\{c\}$ and $(S_\iota,P_\iota,I_\iota)=(S,P,I)$, then we can prove the following claim.

\textbf{Claim 2.} If $\limsup_{\xi\to\infty} I(\xi)=\infty$,   then   $\lim_{\xi\to\infty} I(\xi)=\infty$ and  $I(\xi)$ is  strictly increasing for $\xi\gg 1$.

For otherwise, assume $\liminf_{\xi\to\infty} I(\xi)=:\underline{I}<\infty$. Thus, there exists a sequence $\{\vartheta_j\}$
satisfying $\vartheta_j\to\infty$ as $j\to\infty$ and $\lim_{j\to\infty} I(\vartheta_j)=\underline{I}$.
Note that $I(\vartheta_j)\leq \underline{I}+1$ for large $j$. For each $j\in \mathbb{Z}_+$, put some point $\xi_j\in (\vartheta_j,\vartheta_{j+1})$
such that $I(\xi_j)=\max_{\xi\in[\vartheta_j,\vartheta_{j+1}]}I(\xi)$. Since $\limsup_{\xi\to\infty} I(\xi)=\infty$,
$\lim_{j\to\infty}I(\xi_j)=\infty$. We next show that $[\xi_j-N_1-ch,\xi_j+N_1+ch]\subset [\vartheta_j,\vartheta_{j+1}]$.
According to Lemma \ref{L5}, set $\kappa=\sup_{\xi\in \mathbb{R}}|\eta(\xi)|$, where $\eta(\xi)=\frac{I'(\xi)}{I(\xi)}$.
Since $\lim_{j\to\infty}I(\xi_j)=\infty$, we suppose $I(\xi_j)\geq (\underline{I}+2) \me^{\kappa(N_1+ch)}$ for large $j$.
It follows that $\frac{I(\xi_j)}{I(\xi)}=\me^{\int_{\xi}^{\xi_j}\eta(s)\dif s}\leq \me^{\kappa|\xi_j-\xi|}\leq \me^{\kappa(N_1+ch)}$
if $\xi\in[\xi_j-N_1-ch,\xi_j+N_1+ch]$. Hence, $I(\xi)\geq I(\xi_j)\me^{-\kappa(N_1+ch)}\geq (\underline{I}+2)$ for large $j$. It implies
that $[\xi_j-N_1-ch,\xi_j+N_1+ch]\subset [\vartheta_j,\vartheta_{j+1}]$.
Thus $I'(\xi_j)=0$,  $d_3\mathcal {A}[I](\xi_j)\leq 0$ and $I(\xi_j-c\tau)\leq I(\xi_j)$ for $\tau\in[0,h]$. Setting $\xi=\xi_j$ in \eqref{2.1}, then
\begin{equation*}\label{aa}
\begin{aligned}
	0=cI'(\xi_j)&=d_3\mathcal {A}[I](\xi_j)+\gamma_1 \displaystyle{\int_{0}^hf(\tau)S(\xi_j-c\tau)I(\xi_j-c\tau)\dif \tau}- \delta I(\xi_j)+\gamma_2P(\xi_j)I(\xi_j)\\
&\leq \gamma_1 \displaystyle{\int_{0}^hf(\tau)S(\xi_j-c\tau)I(\xi_j)\dif \tau}- \delta I(\xi_j)+\gamma_2P(\xi_j)I(\xi_j)\\
&=  \displaystyle{\int_{0}^hf(\tau)[\gamma_1 S(\xi_j-c\tau)-\delta+\gamma_2P(\xi_j)]I(\xi_j)\dif \tau}.
\end{aligned}
\end{equation*}
By Claim 1, $\lim_{j\to\infty}S(\xi_j-c\tau)\to \sigma, \lim_{j\to\infty}P(\xi_j)\to 0$ if $\lim_{j\to\infty}I(\xi_j)\to \infty$. Remember $\gamma_1\sigma-\delta<0$. Hence, for large $j$, we have $[\gamma_1 S(\xi_j-c\tau)-\delta+\gamma_2P(\xi_j)]I(\xi_j)<0$. This contradicts the above inequality.
Since $I(\xi)>0$ on $\mathbb{R}$ and  $\lim_{\xi\to\infty} I(\xi)=\infty$, we see that $I'(\xi)>0$  for $\xi\gg 1$.

\textbf{Claim 3.} For any  $\iota\in\mathbb{Z}_+$, then $\|I_\iota(\cdot)\|_{L^{\infty}(\mathbb{R})}<\infty$.

 If not, there is some $\iota_0\in\mathbb{Z}_+$ such that $\|I_\iota(\cdot)\|_{L^{\infty}(\mathbb{R})}=\sup_{\xi\in\mathbb{R}}I_{\iota_0}(\xi)=\infty$.
Then by Claim 2, for any sequence $\{z_\iota\}$ with $\lim_{\iota \to\infty}z_\iota=\infty$, we get
 $I_{\iota_0}(z_{\iota})\to \infty$ as $\iota\to\infty$. Putting $\bar{I}_{\iota}(\xi)=\frac{I_{\iota_0}(\xi+z_\iota)}{I_{\iota_0}(z_\iota)}$,
then $\bar{I}_{\iota}(\xi)$ is uniformly bounded for $\xi$ in any compact set. Hence, $\bar{I}_{\iota}'(\xi)=\frac{I_{\iota_0}'(\xi+z_\iota)}{I_{\iota_0}(z_\iota)}
=\frac{I_{\iota_0}'(\xi+z_\iota)}{I_{\iota_0}(\xi+z_\iota)}\bar{I}_{\iota}(\xi)$ is also uniformly bounded for $\xi$ in any compact set.
It implies that $\bar{I}_{\iota}(\xi)$ is equi-continuous for $\xi$ in any compact set.
Thus, $\bar{I}_{\iota}(\xi)\to \bar{I}_{\infty}(\xi)$ in $C_{\text{loc}}(\mathbb{R})$ as $\iota\to\infty$.
Note that $\bar{I}_{\iota}(\xi)$ satisfies
\begin{equation}\label{2.22}
\begin{aligned}
	&c_{\iota_0}\bar{I}_{\iota}'(\xi)+(d_3+\delta)\bar{I}_{\iota}(\xi)-\gamma_2P_{\iota_0}(z_{\iota}+\xi)\bar{I}_{\iota}(\xi)\\
= &d_3\sum_{i=1}^{N_1}J(i)[\bar{I}_{\iota}(\xi+i)+\bar{I}_{\iota}(\xi-i)]
+\gamma_1 \displaystyle{\int_{0}^hf(\tau)S_{\iota_0}(z_{\iota}+\xi-c_{\iota_0}\tau)\bar{I}_{\iota}(\xi-c_{\iota_0}\tau)\dif \tau}.
\end{aligned}
\end{equation}
Claim 1 shows that $S_{\iota_0}(z_{\iota}+\xi-c_{\iota_0}\tau)\to \sigma, P_{\iota_0}(z_{\iota}+\xi)\to 0$ in $C_{\text{loc}}(\mathbb{R})$ as $\iota\to\infty$.
Thus, by \eqref{2.22}, $\{\bar{I}_{\iota}'(\cdot)\}$ is uniformly convergent  in $C_{\text{loc}}(\mathbb{R})$, and $\bar{I}_{\iota}(0)=1$ converges, then  $\bar{I}_{\infty}(\xi)$ is differentiable and  $\bar{I}_{\iota}'(\cdot)\to \bar{I}_{\infty}'(\cdot)$
as $\iota\to\infty$.
Letting $\iota \to\infty$ in \eqref{2.22}, we have
\begin{equation}\label{2.23}
\begin{aligned}
	&c_{\iota_0}\bar{I}_{\infty}'(\xi)+(d_3+\delta)\bar{I}_{\infty}(\xi)\\
= &d_3\sum_{i=1}^{N_1}J(i)[\bar{I}_{\infty}(\xi+i)+\bar{I}_{\infty}(\xi-i)]
+\gamma_1 \sigma\displaystyle{\int_{0}^hf(\tau)\bar{I}_{\infty}(\xi-c_{\iota_0}\tau)\dif \tau}.
\end{aligned}
\end{equation}
Denote
\begin{equation*}
\triangle_{\sigma}(\lambda,c_{\iota_0})=d_3\sum_{i=1}^{N_1}J(i)[\me^{\lambda i}+\me^{-\lambda i}]-c_{\iota_0}\lambda+\gamma_1 \sigma\int_{0}^h f(\tau)\me^{-\lambda c_{\iota_0}\tau}\dif \tau-(d_3+\delta).
\end{equation*}
Since $\triangle_{\sigma}(0,c_{\iota_0})<0,\partial_{\lambda}\triangle_{\sigma}(0,c_{\iota_0})<0$, $\triangle_{\sigma}(\pm\infty,c_{\iota_0})=\infty, \partial_{\lambda\lambda}\triangle_{\sigma}(\lambda,c_{\iota_0})>0$, we see that for any $c_{\iota_0}>0$, there is one  negative real root $\eta_1=\eta_1(c_{\iota_0})$ and one  positive real root $\eta_2=\eta_2(c_{\iota_0})$ of $\triangle_{\sigma}(\lambda,c_{\iota_0})=0$.
By the similar proofs like Theorem 3.1 in \cite{chen}, the solution  $\bar{I}_{\infty}(\xi)$ of
\eqref{2.23} can be given by  $\bar{I}_{\infty}(\xi)=\omega\me^{\eta_1\xi}+(1-\omega) \me^{\eta_2\xi}, \omega\in[0,1]$. If $\omega>0$, then $\bar{I}_{\infty}(\xi)\to \infty $ as $\xi\to-\infty$.
Claim 2 implies that $I_{\iota_0}(z)$ is strictly increasing for $z\gg 1$. Hence, for $\xi\leq 0$,
 $\bar{I}_{\infty}(\xi)=\lim_{\iota \to\infty}\frac{I_{\iota_0}(\xi+z_\iota)}{I_{\iota_0}(z_\iota)}\leq 1$. It is
impossible. Hence, $\omega\equiv0$ and $\bar{I}_{\infty}(\xi)=\me^{\eta_2\xi}$.
Then $\bar{I}_{\infty}'(0)=\eta_2=\lim_{\iota \to\infty}\frac{I_{\iota_0}'(z_\iota)}{I_{\iota_0}(z_\iota)}$.
Since $\{z_\iota\}$ is an arbitrary sequence, Heine's theorem  yields that $\lim_{\xi \to\infty}\frac{I_{\iota_0}'(\xi)}{I_{\iota_0}(\xi)}=\eta_2$.
Thus, $I_{\iota_0}(\xi)\sim \me^{\eta_2\xi}$ as $\xi\to\infty$.
On the other hand, for $c_{\iota_0}>c^*$, it follows from Lemma \ref{L1} (iii) that
\begin{equation*}
d_3\sum_{i=1}^{N_1}J(i)[\me^{\lambda_1 i}+\me^{-\lambda_1 i}]-c_{\iota_0}\lambda_1+\gamma_1 K_1\int_{0}^h f(\tau)\me^{-\lambda_1 c_{\iota_0}\tau}\dif \tau+\gamma_2K_2-(d_3+\delta)=0.
\end{equation*}
Therefore,
\begin{equation*}
\triangle_{\sigma}(\lambda_1(c_{\iota_0}),c_{\iota_0})=\gamma_1 (\sigma-K_1)\int_{0}^h f(\tau)\me^{-\lambda_1 c_{\iota_0}\tau}\dif \tau-\gamma_2K_2<0.
\end{equation*}
It follows that $0<\lambda_1<\eta_2$. Since   $I_{\iota_0}(\xi)\leq \me^{\lambda_1\xi}$ on  $\mathbb{R}$, we get a  contradiction.

 Finally, we  show that there exists a positive constant $M$ such that $\|I_\iota(\cdot)\|_{L^{\infty}(\mathbb{R})}\leq M$ for any  $\iota\in\mathbb{Z}_+$. Set $M_{\iota}=\sup_{\xi\in\mathbb{R}}I_{\iota}(\xi)$. If $\lim_{\iota\to\infty}M_{\iota}=\infty$,
 then there exists a sequence $\{\vartheta_\iota\}$
satisfies $\lim_{\iota\to\infty} I_{\iota}(\vartheta_\iota)=\infty$ and $I_{\iota}(\vartheta_\iota)\geq \varepsilon_1M_{\iota}$  for $\iota$
large enough and some $\varepsilon_1\in(0,1)$. Claim 2 shows that $\lim_{\iota\to\infty}\vartheta_\iota=\infty$.
Then $\check{I}_{\iota}(\xi)=\frac{I_{\iota}(\xi+\vartheta_\iota)}{I_{\iota}(\vartheta_\iota)}\leq \frac{M_{\iota}}{\varepsilon_1M_{\iota}}=\frac{1}{\varepsilon_1}$.
In a similar way as above, $\check{I}_{\iota}(\xi)\to\check{I}_{\infty}(\xi)$
as $\iota\to\infty$. Thus $\check{I}_{\infty}\leq \frac{1}{\varepsilon_1}$ on  $\mathbb{R}$. Since the sequence $\{c_\iota\}$
is bounded, there exists a convergent subsequence. Assume this  subsequence converges to $c_\infty$. Then $c_\infty\in [c^*,\tilde{c}]$.
Notice that  $\check{I}_{\infty}$ satisfies \eqref{2.23} when  $c_{\iota_0}$ is replaced by $c_\infty$.
Hence, $\check{I}_{\infty}(\xi)=\omega'\me^{\tilde{\eta}_1\xi}+(1-\omega') \me^{\tilde{\eta_2}\xi}, \omega'\in[0,1]$  is an unbounded solution of \eqref{2.23} when $c_{\iota_0}$ is replaced by  $c_\infty$. This contradicts   $\check{I}_{\infty}\leq \frac{1}{\varepsilon_1}$.

\end{proof}

In order to prove $\liminf_{\xi\rightarrow\infty}I(\xi)>0$, we will use the Laplace transform. To perform the  Laplace transform
of $I$, we first make some prior estimates.

\begin{lemma}\label{L7}
There exists two positive constants   $\rho$ and  $M_2$ such that  $I(\xi)\me^{-\rho\xi}\leq M_2$ for all $\xi\in\mathbb{R}$.
\end{lemma}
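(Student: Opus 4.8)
The plan is to obtain the exponential bound by treating the two tails $\xi\le 0$ and $\xi\ge 0$ separately, since what is really required is a one-sided decay statement at $-\infty$ glued to the uniform boundedness at $+\infty$. I would take $\rho=\lambda_1$, where $\lambda_1=\lambda_1(c)>0$ is the smaller root of $\triangle_K(\lambda,c)=0$ furnished by Lemma \ref{L1}(iii), and then produce a single constant $M_2$ valid on all of $\mathbb{R}$.

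For the left tail I would use the upper solution. The semi-wave constructed in Theorem \ref{T4} inherits, upon passing to the limit $k\to\infty$ in Lemma \ref{L3}, the bound $I(\xi)\le I_+(\xi)=\me^{\lambda_1\xi}$ for every $\xi\in\mathbb{R}$: for each fixed $\xi$ one has $\xi<k$ for all large $k$, so the truncated inequality $I\le I_+$ in \eqref{2.7} survives in the limit. In particular $I(\xi)\me^{-\lambda_1\xi}\le 1$ for all $\xi\le 0$. For the right tail I would invoke the uniform boundedness $\|I\|_{L^{\infty}(\mathbb{R})}\le M$ from Lemma \ref{L6} (applied with the constant sequence $\{c_\iota\}\equiv\{c\}$, which is exactly Claim 3 there), giving $I(\xi)\me^{-\lambda_1\xi}\le M\me^{-\lambda_1\xi}\le M$ whenever $\xi\ge 0$. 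Setting $M_2=\max\{1,M\}$ then yields $I(\xi)\me^{-\rho\xi}\le M_2$ on the whole line, as claimed.

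I expect the only delicate point to be the decay at $-\infty$, and I would stress that it cannot be extracted from the gradient estimate of Lemma \ref{L5} alone: integrating the uniform bound $|I'/I|\le\kappa$ only produces $I(\xi)\le I(0)\me^{\kappa|\xi|}$ for $\xi<0$, whose right-hand side blows up as $\xi\to-\infty$ and therefore controls growth in the wrong direction. Hence the supersolution $\me^{\lambda_1\xi}$ is the indispensable ingredient, and the one thing to verify carefully is that the inequality $I\le I_+$ is not lost during the diagonal extraction of Theorem \ref{T4}. Once this is confirmed the bound is immediate with $\rho=\lambda_1$; indeed any $\rho\in(0,\lambda_1]$ works, which is all that the subsequent two-sided Laplace transform on the strip $0<\mathrm{Re}\,\lambda<\rho$ will require.
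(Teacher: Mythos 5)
Your argument is sound for the particular semi-wave built in Theorem \ref{T4}: the bound $I\le I_+=\me^{\lambda_1\xi}$ in \eqref{2.7} does survive the passage $k\to\infty$ (the paper itself relies on exactly this fact, $I_{\iota_0}(\xi)\le\me^{\lambda_1\xi}$, in Claim 3 of Lemma \ref{L6}), so for that wave one may indeed take $\rho=\lambda_1$ and $M_2=1$. The genuine gap is that this is not the statement the paper needs Lemma \ref{L7} to deliver. The lemma's only actual application is in Theorem \ref{T13}, where $0<c<c^*$ and $(S,P,I)$ is an arbitrary hypothetical bounded solution assumed only to satisfy $\lim_{\xi\to-\infty}(S,P,I)(\xi)=(K_1,K_2,0)$; the decay estimate is what makes the two-sided Laplace transform $\mathscr{L}[I](\lambda)$ well-defined on the strip $0<\mathrm{Re}\,\lambda<\rho$ and launches the singularity argument there. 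For such a solution your proof collapses on both counts: by Lemma \ref{L1}(ii), $\triangle_K(\lambda,c)>0$ for every real $\lambda$ when $0<c<c^*$, so the root $\lambda_1(c)$ you propose as $\rho$ simply does not exist; and the solution is not produced by the truncated problems of Lemma \ref{L3}, so no comparison with an upper solution $\me^{\lambda_1\xi}$ is available. In short, your argument is \emph{extrinsic} (it uses how the wave was constructed), whereas the lemma must be \emph{intrinsic} (valid for any bounded solution with the prescribed left-tail limits, for any $c>0$).

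This is precisely how the paper proceeds, and why its proof looks so different from yours. Using (H), it fixes $\varepsilon_2$ with $q=\gamma_1(K_1-\varepsilon_2)>\delta$ and a point $\xi_*$ to the left of which $S>K_1-\varepsilon_2$ and $P>K_2-\varepsilon_2$; integrating the $I$-equation of \eqref{2.1} over $(-\infty,\xi]$ then yields $I\in L^1(-\infty,\xi)$, and for $V(\xi)=\int_{-\infty}^{\xi}I(z)\dif z$ an inequality of the form $(c+cqf_h)V(\xi)\geq q_1\theta\,V(\xi-\theta)$ for every $\theta>0$ (in the paper's notation). Choosing $\theta_0$ large gives the contraction $V(\xi-\theta_0)\le\theta_1V(\xi)$ with $\theta_1\in(0,1)$, hence $V(\xi)\me^{-\rho\xi}$ is bounded near $-\infty$ with $\rho=\frac{1}{\theta_0}\ln\frac{1}{\theta_1}$, and feeding this back into the equation bounds $I(\xi)\me^{-\rho\xi}$ on all of $\mathbb{R}$. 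Nothing in this argument refers to $c^*$, to characteristic roots, or to the construction of the wave — only to the equation, hypothesis (H), the limits at $-\infty$, and the boundedness of $I$. Your observation about Lemma \ref{L5} is correct (the gradient bound controls growth in the wrong direction), but the fix is not the supersolution: to play its role in Theorem \ref{T13}, the lemma requires the intrinsic integral argument, and your shortcut cannot replace it.
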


\begin{proof}
We first prove $I\in L^1(-\infty,\xi)$. Choose a constant $\varepsilon_2$
such that $0<\varepsilon_2<K_2$ and $q:=\gamma_1(K_1-\varepsilon_2)>\delta$. Notice that $\lim_{\xi\rightarrow-\infty}S(\xi)=K_1$ and $\lim_{\xi\rightarrow-\infty}P(\xi)=K_2$.
There is $\varsigma_*\gg 1$  such that $S(\xi)>K_1-\varepsilon_2$ and $P(\xi)>K_2-\varepsilon_2$ for $\xi<-\varsigma_*$.
Put $\xi_*=-\varsigma_*-N_1$. Integrating the I-equation of \eqref{2.1} from $-\infty$ to $\xi$ with $\xi\leq \xi_*$, then
\begin{equation}\label{2.24}
\begin{aligned}
	cI(\xi)&=d_3\sum_{|i|=1}^{N_1}J(i)\int_{\xi}^{\xi-i}I(z)\dif z-\delta\int_{-\infty}^{\xi}I(z)\dif z\\
&\qquad+\int_{-\infty}^{\xi}\int_{0}^hf(\tau)\gamma_1 S(z-c\tau)I(z-c\tau)\dif \tau\dif z+\int_{-\infty}^{\xi}\gamma_2P(z)I(z)\dif z\\
&\geq d_3\sum_{|i|=1}^{N_1}J(i)\int_{\xi}^{\xi-i}I(z)\dif z+[\gamma_2(K_2-\varepsilon_2)-\delta]\int_{-\infty}^{\xi}I(z)\dif z+\int_{-\infty}^{\xi}\int_{0}^hf(\tau)qI(z-c\tau)\dif \tau\dif z\\
&= d_3\sum_{|i|=1}^{N_1}J(i)\int_{\xi}^{\xi-i}I(z)\dif z+q_1\int_{-\infty}^{\xi}I(z)\dif z+q\int_{-\infty}^{\xi}\int_{0}^hf(\tau)[I(z-c\tau)-I(z)]\dif \tau\dif z,
\end{aligned}
\end{equation}
where $q_1:=q-\delta+\gamma_2(K_2-\varepsilon_2)$.
Note that $I(-\infty)=0$. Then
\begin{equation}\label{2.25}
\begin{aligned}
\int_{-\infty}^{\xi}[I(z-c\tau)-I(z)]\dif  z&=\lim_{l\to-\infty}\int_{l}^{\xi}[I(z-c\tau)-I(z)]\dif  z\\
&=\lim_{l\to-\infty}\int_{l}^{\xi}\int_{0}^1-c\tau I'(z-c\tau s)\dif s\dif  z\\
&=-c\tau\lim_{l\to-\infty}\int_{0}^1[I(\xi-c\tau s)-I(l-c\tau s)]\dif s\\
&=-c\tau\int_{0}^1I(\xi-c\tau s)\dif s.
\end{aligned}
\end{equation}
Substituting \eqref{2.25} into \eqref{2.24}, we get
\begin{equation}\label{2.26}
cI(\xi)\geq d_3\sum_{|i|=1}^{N_1}J(i)\int_{\xi}^{\xi-i}I(z)\dif z+q_1\int_{-\infty}^{\xi}I(z)\dif z-cq\int_{0}^h\tau f(\tau)\int_{0}^1I(\xi-c\tau s)\dif s\dif \tau.
\end{equation}
Thus, by Lemma \ref{L6},
\begin{equation*}
\begin{aligned}
	q_1\int_{-\infty}^{\xi}I(z)\dif z&\leq cI(\xi)-d_3\sum_{|i|=1}^{N_1}J(i)\int_{\xi}^{\xi-i}I(z)\dif z\\
&\qquad+cq\int_{0}^h\tau f(\tau)\int_{0}^1I(\xi-c\tau s)\dif s\dif \tau\\
&\leq cM+d_3M\sum_{|i|=1}^{N_1}J(i)|i|+cqM\int_{0}^h\tau f(\tau)\dif \tau.
\end{aligned}
\end{equation*}
This implies  $I\in L^1(-\infty,\xi)$. Define $V(\xi)=\int_{-\infty}^{\xi}I(z)\dif z$. Then $V$ is positive, continuous and
increasing on $\mathbb{R}$. Integrating \eqref{2.26} from $-\infty$ to $\xi$ with $\xi\leq \xi_*$, we have
\begin{equation}\label{2.27}
\begin{aligned}
	cV(\xi)&\geq d_3\sum_{|i|=1}^{N_1}J(i)\int_{\xi}^{\xi-i}V(z)\dif z+q_1\int_{-\infty}^{\xi}V(z)\dif z\\
&\qquad-cq\int_{0}^h\tau f(\tau)\int_{0}^1V(\xi-c\tau s)\dif s\dif \tau.
\end{aligned}
\end{equation}
Since $V$ is increasing, then
\begin{equation}\label{2.28}
\sum_{|i|=1}^{N_1}J(i)\int_{\xi}^{\xi-i}V(z)\dif z=\sum_{i=1}^{N_1}J(i)\left[\int_{\xi}^{\xi+i}V(z)\dif z-
\int_{\xi-i}^{\xi}V(z)\dif z\right]>0.
\end{equation}
It follows from \eqref{2.27} and \eqref{2.28} that
\begin{equation*}
cV(\xi)+cqV(\xi)\int_{0}^h\tau f(\tau)\dif \tau\geq q_1\int_{-\infty}^{\xi}V(z)\dif z.
\end{equation*}
Setting $f_h=\int_{0}^h\tau f(\tau)\dif \tau$, then
\begin{equation*}
(c+cqf_h)V(\xi)\geq q_1\int_{-\theta}^{0}V(\xi+z)\dif z\geq q_1V(\xi-\theta)\cdot \theta
\end{equation*}
for any $\theta>0$. Hence, for $\theta_0>\frac{c+cqf_h}{q_1}=:q_2$, $V(\xi-\theta_0)\leq \frac{q_2}{\theta_0}V(\xi)=:\theta_1V(\xi)$,
and $\theta_1\in(0,1)$. Let $\rho=\frac{1}{\theta_0}\ln\frac{1}{\theta_1}$ and $V_{\rho}(\xi)=V(\xi)\me^{-\rho\xi}$.
Then for $\xi\leq \xi_*$,
\begin{equation*}
\begin{aligned}
	V_{\rho}(\xi-\theta_0)&= V(\xi-\theta_0)\me^{-\rho\xi}\me^{\rho\theta_0}=V(\xi-\theta_0)\me^{-\rho\xi}\me^{\ln\frac{1}{\theta_1}}\\
&=\frac{1}{\theta_1}V(\xi-\theta_0)\me^{-\rho\xi}\leq \frac{1}{\theta_1}\theta_1V(\xi)\me^{-\rho\xi}=V_{\rho}(\xi).
\end{aligned}
\end{equation*}
It yields that $V_{\rho}(\xi)$ is bounded  as $\xi\to -\infty$. Thus, there exists some $V_{\rho}^0$ such that
\begin{equation}\label{2.29}
V(\xi)\me^{-\rho\xi}\leq V_{\rho}^0, \quad\text{ for }\xi\leq \xi_*.
\end{equation}
On the other hand,
for $\xi>\xi_*$, we obtain
\begin{equation*}
V(\xi)=\int_{-\infty}^{\xi}I(z)\dif z\leq \int_{-\infty}^{\xi^*}I(z)\dif z+ \int^0_{\xi^*}I(z)\dif z+\int_{0}^{|\xi|}I(z)\dif z.
\end{equation*}
Since $I\in L^1(-\infty,\xi^*)$, there exists $M_1>0$ such that
\begin{equation}\label{2.30}
V(\xi)\me^{-\rho\xi}\leq \left[\int_{-\infty}^{\xi^*}I(z)\dif z+ \int^0_{\xi^*}I(z)\dif z+M|\xi|\right]\me^{-\rho\xi}\leq M_1, \quad\text{ for }\xi>\xi_*.
\end{equation}
Note that
\begin{equation*}
\begin{aligned}
	cI(\xi)\leq d_3\sum_{|i|=1}^{N_1}J(i)[V(\xi-i)-V(\xi)]
+\gamma_1 K_1\int_{0}^hf(\tau)V(\xi-c\tau)\dif \tau+(\gamma_2 K_2-\delta)V(\xi).
\end{aligned}
\end{equation*}
This, combined with \eqref{2.29}, \eqref{2.30} show that $I(\xi)\me^{-\rho\xi}\leq M_2$ for all $\xi\in \mathbb{R}$.
\end{proof}

The following lemma expresses that $I$ will be  increasing once  it turns to small.
\begin{lemma}\label{L8}
Let $0<c_1\leq c_2$  be given two constants and  $(S,P,I)$ be a  solution of \eqref{2.1} corresponding to $c\in[c_1,c_2]$. Further assume  $\sigma<S<K_1$, $0<P<K_2$ and $I>0$. There is some  $\varsigma>0$ such that $I'(\xi)>0$ as long as
 $I(\xi)\leq \varsigma$ for $\xi\in \mathbb{R}$.
\end{lemma}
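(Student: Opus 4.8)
The plan is to argue by contradiction, exploiting the compactness afforded by the uniform bound on $\frac{I'}{I}$ from Lemma~\ref{L5} together with the uniform $L^\infty$-bound from Lemma~\ref{L6}. Suppose the assertion fails. Then there are wave speeds $\{c_n\}\subset[c_1,c_2]$, corresponding solutions $(S_n,P_n,I_n)$ of \eqref{2.1} with $\sigma<S_n<K_1$, $0<P_n<K_2$, $I_n>0$, and points $\{\xi_n\}$ such that $I_n(\xi_n)\to0$ while $I_n'(\xi_n)\le0$ (for a single fixed solution one simply takes $c_n\equiv c$ and $(S_n,P_n,I_n)\equiv(S,P,I)$). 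Passing to a subsequence, $c_n\to c_\infty\in[c_1,c_2]$. Since $\frac{I_n'}{I_n}$ is bounded uniformly by some $\kappa$ (Lemma~\ref{L5}, whose bound is continuous in $c$ on $[c_1,c_2]$), the normalized functions $\bar I_n(\xi):=\frac{I_n(\xi+\xi_n)}{I_n(\xi_n)}$ obey $\me^{-\kappa|\xi|}\le\bar I_n(\xi)\le\me^{\kappa|\xi|}$, with $\bar I_n(0)=1$ and $\bar I_n'(0)=\frac{I_n'(\xi_n)}{I_n(\xi_n)}\le0$.

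The first and most delicate step is to identify the limits of the shifted coefficients. Because $I_n(\xi+\xi_n)\le I_n(\xi_n)\me^{\kappa|\xi|}\to0$ uniformly on compacta, the unnormalized shifts $I_n(\cdot+\xi_n)\to0$ in $C_{\mathrm{loc}}(\mathbb{R})$; meanwhile $S_n(\cdot+\xi_n),P_n(\cdot+\xi_n)$ have uniformly bounded derivatives (their equations contain only bounded quantities), so after extraction they converge in $C^1_{\mathrm{loc}}$ to limits $S_\infty,P_\infty$ with $\sigma\le S_\infty\le K_1$ and $0\le P_\infty\le K_2$. Letting $n\to\infty$ in the $S$- and $P$-equations of \eqref{2.1} kills the incidence terms, so $S_\infty,P_\infty$ solve the decoupled linear nonlocal equations $c_\infty S_\infty'=d_1\mathcal{A}[S_\infty]+b_1(K_1-S_\infty)$ and $c_\infty P_\infty'=d_2\mathcal{A}[P_\infty]+b_2(K_2-P_\infty)$. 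A Liouville-type rigidity then forces $S_\infty\equiv K_1$, $P_\infty\equiv K_2$: writing $u=K_1-S_\infty\ge0$ (so $c_\infty u'=d_1\mathcal{A}[u]-b_1u$), choosing points along which $u$ approaches $\sup u=:M$, and extracting a $C^1_{\mathrm{loc}}$-limit $u_\infty$ attaining a global maximum $M$ at the origin, one has $u_\infty'(0)=0$ and $\mathcal{A}[u_\infty](0)\le0$, whence $0=c_\infty u_\infty'(0)=d_1\mathcal{A}[u_\infty](0)-b_1M\le-b_1M$ gives $M\le0$, i.e.\ $u\equiv0$; the identical argument yields $P_\infty\equiv K_2$. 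This rigidity is the heart of the proof, since smallness of $I$ at $\xi_n$ does not by itself force $S,P$ near $(K_1,K_2)$ pointwise.

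With $S_\infty\equiv K_1$, $P_\infty\equiv K_2$ in hand, passing to the limit in the normalized $I$-equation (the convergence of $\bar I_n'$ follows from the equation exactly as in Lemma~\ref{L6}) shows that $\bar I_\infty:=\lim\bar I_n$ is a nonnegative, nontrivial solution, with $\bar I_\infty(0)=1$ and $\bar I_\infty'(0)\le0$, of the constant-coefficient linear nonlocal equation
\begin{equation*}
c_\infty\bar I_\infty'(\xi)=d_3\mathcal{A}[\bar I_\infty](\xi)+\gamma_1K_1\int_{0}^{h}f(\tau)\bar I_\infty(\xi-c_\infty\tau)\,\dif\tau+(\gamma_2K_2-\delta)\bar I_\infty(\xi),
\end{equation*}
whose characteristic function is precisely $\triangle_K(\lambda,c_\infty)$ of \eqref{2.3}. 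Such solutions exist with the stated sign properties only for $c_\infty\ge c^*$, in which case Lemma~\ref{L1}(iii) supplies the two positive roots $0<\lambda_1(c_\infty)<\lambda_2(c_\infty)$ (coalescing at $\lambda^*$ when $c_\infty=c^*$); by the classification of nonnegative solutions as in Theorem~3.1 of \cite{chen}, already used in Lemma~\ref{L6}, we have $\bar I_\infty(\xi)=\omega\,\me^{\lambda_1(c_\infty)\xi}+(1-\omega)\,\me^{\lambda_2(c_\infty)\xi}$ for some $\omega\in[0,1]$. As both exponents are strictly positive, $\bar I_\infty'(0)=\omega\lambda_1(c_\infty)+(1-\omega)\lambda_2(c_\infty)>0$, contradicting $\bar I_\infty'(0)\le0$. (If instead $c_\infty<c^*$, Lemma~\ref{L1}(ii) shows $\triangle_K(\cdot,c_\infty)$ has no real zero, so the same classification forces $\bar I_\infty\equiv0$, again contradicting $\bar I_\infty(0)=1$.) Hence such a $\varsigma>0$ exists, and the compactness construction makes it depend only on $c_1,c_2$ and the parameters of \eqref{1.2}, as required.
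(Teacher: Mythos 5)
Your proposal is correct, and while it shares the paper's overall skeleton (contradiction, normalization by $I_n(\xi_n)$, passage to a limiting linear equation), it resolves the two key technical steps by genuinely different means. First, for the rigidity of the limit profiles, you prove $S_\infty\equiv K_1$, $P_\infty\equiv K_2$ by translating to near-supremum points of $u=K_1-S_\infty$ and invoking a maximum-principle/Liouville argument at the limiting global maximum; the paper instead runs a monotone iteration $\psi_m=G[\psi_{m-1}]$ starting from the constant subsolution $\sigma$, obtains a monotone limit $\tilde\varphi$ squeezed below $S_\infty$, and identifies its tails with the unique positive constant solution $K_1$. Your route is more elementary; the paper's avoids any sliding argument but is longer. (You also streamline the step $I_n(\cdot+\xi_n)\to 0$ by using the uniform bound on $I'/I$ directly, where the paper uses an induction on the zeros of $I_\infty$ at integer points plus monotonicity of $I_\infty\me^{(d_3+\delta)\xi/c_\infty}$.) Second, for the final contradiction, you classify the normalized limit $\bar I_\infty$ as $\omega\me^{\lambda_1\xi}+(1-\omega)\me^{\lambda_2\xi}$ via the Chen--Guo type theorem and split cases $c_\infty\geq c^*$ (positive exponents force $\bar I_\infty'(0)>0$) versus $c_\infty<c^*$ (no real zeros of $\triangle_K$, so no positive solution exists); the paper instead studies the logarithmic derivative $Z=\hat I_\infty'/\hat I_\infty$, shows via cited lemmas that its limits $z_\pm$ at $\pm\infty$ are roots of the characteristic equation, proves $z_\pm>0$, and uses a minimum-principle argument on the equation for $Z$ to conclude $Z>0$ everywhere. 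A merit of your version is that it treats the subcritical case $c_\infty<c^*$ explicitly, which the paper's argument passes over silently (its cited step presupposes real roots exist); a caveat is that you apply the exponential classification in regimes (two positive roots, a double root, no roots) beyond the mixed-sign setting in which the paper itself invokes Theorem 3.1 of \cite{chen}, so strictly speaking the scope of that classification would need to be checked --- though this is the same level of rigor at which the paper operates.
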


\begin{proof}
Assume no such constant $\varsigma$  exists. Then there is a sequence  $\{c_\iota\}$  in $[c_1,c_2]$  and a solution sequence $\{(S_\iota,P_\iota,I_\iota)\}$  of \eqref{2.1} corresponding to wave speeds  $\{c_\iota\}$ such that
 \begin{equation}\label{2.31}
\lim_{\iota\to\infty}I_\iota(\xi_\iota)=0,   \quad I'_\iota(\xi_\iota)\leq 0 \text{ for all } \iota \in\mathbb{Z}_+
\end{equation}
for some $\{\xi_\iota\}$. Up to extraction of a subsequence, we may suppose  $\lim_{\iota\to\infty}c_\iota=c_\infty$.
As in the proof of Lemma \ref{L6},
we obtain  $(S_{\iota}(\xi+\xi_\iota),P_{\iota}(\xi+\xi_\iota),I_{\iota}(\xi+\xi_\iota))\to (S_{\infty}(\xi), P_{\infty}(\xi),I_{\infty}(\xi))$ in $[C^1_{\text{loc}}(\mathbb{R})]^3$ as $\iota\to\infty$. Note that $(S_{\infty}(\cdot),P_{\infty}(\cdot),I_{\infty}(\cdot))$ is also a solution of  \eqref{2.1} by replacing $c$ with $c_\infty$.
We next prove the following proposition.

\textbf{Proposition 1.} $S_{\infty}(\xi)\equiv K_1$, $P_{\infty}(\xi)\equiv K_2$ and $I_{\infty}(\xi)\equiv 0$ on $\mathbb{R}$.

It is easy to see that in any compact set of $\mathbb{R}$, $I_{\infty}(\xi)$ satisfies the following equation
\begin{equation}\label{2.31a}
\begin{aligned}
	&c_{\infty}I_{\infty}'(\xi)+(d_3+\delta)I_{\infty}(\xi)\\
= &d_3\sum_{|i|=1}^{N_1}J(i)I_{\infty}(\xi-i)
+\gamma_1\displaystyle{\int_{0}^hf(\tau)S_{\infty}(\xi-c_{\infty}\tau)I_{\infty}(\xi-c_{\infty}\tau)\dif \tau}+\gamma_2 P_{\infty}(\xi)I_{\infty}(\xi).
\end{aligned}
\end{equation}
Since $\lim_{\iota\to\infty}I_{\iota}(\xi_\iota)=0$, we get $I_{\infty}(0)=0$ and $I_{\infty}'(0)=0$. Then it follows from \eqref{2.31a} that
\begin{equation*}
0=d_3\sum_{|i|=1}^{N_1}J(i)I_{\infty}(-i)+\gamma_1 \displaystyle{\int_{0}^hf(\tau)S_{\infty}(-c_{\infty}\tau)I_{\infty}(-c_{\infty}\tau)\dif \tau}.
\end{equation*}
This yields that $I_{\infty}(i)=0$ for all $i\in\mathbb{Z}$ by induction.  Since $c_{\infty}I_{\infty}'(\xi)+(d_3+\delta)I_{\infty}(\xi)\geq 0$, the nonnegative function $u_{\infty}(\xi)=I_{\infty}(\xi)\me^{\frac{d_3+\delta}{c_{\infty}}\xi}$ is increasing on $\mathbb{R}$.
On the other hand, $u_{\infty}(i)=0$ for all $i\in\mathbb{Z}$. Thus,
$I_{\infty}(\xi)\equiv 0$ on $\mathbb{R}$.
Note that $S_{\infty}(\xi)$ satisfies
\begin{equation}\label{2.32}
c_{\infty}S'_\infty(\xi)=d_1\sum_{|i|=1}^{N_1}J(i)[S_\infty(\xi-i)-S_\infty(\xi)]+ b_1[K_1-S_\infty(\xi)],
\end{equation}
and $\sigma\leq S_\infty(\xi)\leq K_1$ for any $\xi\in\mathbb{R}$. Choosing $c_{\infty}\mu>d_1+b_1$, it follows from
\eqref{2.32} that
\begin{equation*}
S_\infty(\xi)=\me^{-\mu\xi}\int_{-\infty}^{\xi} \me^{\mu z}H[S_\infty](z)\dif z,
\end{equation*}
where $H[S_\infty](\xi)=\big[\mu-\frac{d_1}{c_{\infty}}-\frac{b_1}{c_{\infty}}\big]S_\infty(\xi)
+\frac{d_1}{c_{\infty}}\sum_{|i|=1}^{N_1}J(i)S_\infty(\xi-i)+ \frac{b_1}{c_{\infty}}K_1$. Define
\begin{equation*}
G[S_\infty](\xi)=\me^{-\mu\xi}\int_{-\infty}^{\xi} \me^{\mu z}H[S_\infty](z)\dif z.
\end{equation*}
The definition of $H$ implies that $H[S_\infty]$ is monotone increasing with respect to the function $S_\infty$ for $\sigma\leq S_\infty\leq K_1$.
We can further deduce that, if $S_\infty(\xi)$ is increasing, i.e., $S_\infty(\xi)\le S_\infty(\xi+\theta)$ for any  constant $\theta>0$ and all $\xi\in\mathbb{R}$, then
$$H[S_\infty(\cdot)] (\xi)\leq H[S_\infty(\theta+\cdot)](\xi)=H[S_\infty(\cdot)](\theta+\xi)$$
for all $\xi\in\mathbb{R}$ and any $\theta>0$. Thus $H[S_\infty](\xi)$ is increasing in $\xi$ whenever $S_\infty(\xi)$ is increasing in $\xi$.
Let $S_{\infty}^{0}(\xi)\equiv \sigma$ for all $\xi\in\mathbb{R}$. Define $\psi_0=S_{\infty}^{0}, \psi_m=G[\psi_{m-1}]$. Then it follows from
\begin{equation*}
G[S_{\infty}^{0}](\xi)=\me^{-\mu\xi}\int_{-\infty}^{\xi} \me^{\mu z}\left[\mu\sigma+\frac{b_1}{c_{\infty}}(K_1-\sigma)\right]\dif z>\sigma=S_{\infty}^{0}(\xi)
\end{equation*}
that $\psi_m$ is a monotone increasing sequence  in $C(\mathbb{R},\mathbb{R})$.
It also satisfies
$$S_{\infty}^{0}(\xi)\leq\cdots\leq \psi_m(\xi)\leq \psi_{m+1}(\xi)\leq S_{\infty}(\xi)\leq K_1,\quad \forall\;\xi\in\mathbb{R},$$
since $S_{\infty}(\xi)$ is a solution of the equation (\ref{2.32}) satisfying
$S_{\infty}(\xi)=G[S_{\infty}(\xi)]$ and $\psi_0\leq S_{\infty}(\xi)\leq K_1$ over $\mathbb{R}$. Hence there exists a function $\tilde{\varphi}\in C(\mathbb{R},\mathbb{R})$ such that $\tilde{\varphi}=\lim_{m\to\infty} \psi_m(\xi)$ and
\begin{equation}\label{2.33}
\sigma=S_{\infty}^{0}(\xi)\leq
\tilde{\varphi}(\xi)\leq S_{\infty}(\xi)\leq K_1, \quad \forall\;\xi\in\mathbb{R}.
\end{equation}
Moreover $\tilde{\varphi}(\xi)$ is monotone increasing in $\xi$ since $\psi_m(\xi)$ is a increasing function of $\xi$ for each $m$.
Hence both $\lim_{\xi\to \pm\infty}\tilde{\varphi}(\xi)
=\tilde{\varphi}_{\pm}$ exist and $\sigma\leq \tilde{\varphi}_{\pm}\leq K_1$. It is obvious that both $\tilde{\varphi}_+$ and $\tilde{\varphi}_-$ must be positive constant solutions of the equation (\ref{2.32}). It then follows that
$$\tilde{\varphi}_+=\tilde{\varphi}_-=K_1$$
since (\ref{2.32}) has a unique positive constant solution $K_1$. Thus the monotonicity of $\tilde{\varphi}(\xi)$ yields that
$\tilde{\varphi}(\xi)\equiv K_1$. Finally the inequality (\ref{2.33}) yields that  $S_{\infty}(\xi)\equiv K_1$. Similarly, we can show that  $P_{\infty}(\xi)\equiv K_2$.

Denoting
$\hat{I}_{\iota}(\xi)=\frac{I_{\iota}(\xi+\xi_\iota)}{I_{\iota}(\xi_\iota)}.$
Then in a similar way as in Lemma  \ref{L6} (Claim 3), $\hat{I}_{\iota}(\xi)\to\hat{I}_{\infty}(\xi)$ in $C^1_{\text{loc}}(\mathbb{R})$
as $\iota\to\infty$.
It is easy to see that in any compact set of $\mathbb{R}$, $\hat{I}_{\infty}(\xi)$ satisfies the following equation
\begin{equation*}
\begin{aligned}
	&c_{\infty}\hat{I}_{\infty}'(\xi)+(d_3+\delta-\gamma_2K_2)\hat{I}_{\infty}(\xi)\\
= &d_3\sum_{i=1}^{N_1}J(i)[\hat{I}_{\infty}(\xi+i)+\hat{I}_{\infty}(\xi-i)]
+\gamma_1 K_1\displaystyle{\int_{0}^hf(\tau)\hat{I}_{\infty}(\xi-c_{\infty}\tau)\dif \tau}.
\end{aligned}
\end{equation*}
Notice $\hat{I}_{\infty}(0)=1$. It is easy to check $\hat{I}_{\infty}(\xi)>0$ for any $\xi\in\mathbb{R}$. Denote $Z(\xi)=\frac{\hat{I}_{\infty}'(\xi)}{\hat{I}_{\infty}(\xi)}$.
Then $Z(\xi)$ satisfies
\begin{equation}\label{2.34}
\begin{aligned}
	c_{\infty}Z(\xi)&=d_3\sum_{i=1}^{N_1}J(i)\left[\me^{\int_{\xi}^{\xi+i}Z(s)\dif s}+\me^{\int_{\xi}^{\xi-i}Z(s)\dif s}\right]\\
&\quad+\gamma_1 K_1\displaystyle{\int_{0}^hf(\tau)\me^{\int_{\xi}^{\xi-c_{\infty}\tau}Z(s)\dif s}\dif \tau}+\gamma_2K_2- (d_3+\delta).
\end{aligned}
\end{equation}
Combined \cite[Lemma 3.4]{chen} with \cite[Lemma 3.4]{zlw}, $Z(\xi)$ has finite limits $z_{\pm}$ as $\xi\to\pm\infty$, which are roots of the characteristic equation
\begin{equation}\label{2.34a}
c_{\infty}z_{\pm}=d_3\sum_{i=1}^{N_1}J(i)[\me^{iz_{\pm}}+\me^{-iz_{\pm}}]+\gamma_1 K_1\int_{0}^h f(\tau)\me^{-z_{\pm} c_{\infty}\tau}\dif \tau+\gamma_2K_2-(d_3+\delta).
\end{equation}
If $z_{\pm}\leq 0$, then  $c_{\infty}z_{\pm}\leq 0$ due to $c_\infty\geq c_1>0$. But the right-hand side of \eqref{2.34a} is positive since
$$d_3\sum_{i=1}^{N_1}J(i)[\me^{iz_{\pm}}+\me^{-iz_{\pm}}]-d_3\geq 0$$ and
$$\gamma_1 K_1\int_{0}^h f(\tau)\me^{-z_{\pm} c_{\infty}\tau}\dif \tau+\gamma_2K_2-\delta\geq \gamma_1 K_1-\delta+\gamma_2K_2>\gamma_2K_2.$$
This is a contradiction. Thus $z_{\pm}>0$. By differentiating with respect to $\xi$ in \eqref{2.34}, we see
\begin{equation}\label{2.34b}
\begin{aligned}
	c_{\infty}Z'(\xi)&=d_3\sum_{i=1}^{N_1}J(i)\left[\me^{\int_{\xi}^{\xi+i}Z(s)\dif s}(Z(\xi+i)-Z(\xi))+\me^{\int_{\xi}^{\xi-i}Z(s)\dif s}(Z(\xi-i)-Z(\xi))\right]\\
&\quad+\gamma_1 K_1\displaystyle{\int_{0}^hf(\tau)\me^{\int_{\xi}^{\xi-c_{\infty}\tau}Z(s)\dif s}(Z(\xi-c_{\infty}\tau)-Z(\xi))\dif \tau}.
\end{aligned}
\end{equation}
If $Z(\xi)$ attains the minimum  at the point $\zeta^*$, then $Z'(\zeta^*)=0$. Set $\xi=\zeta^*$ in \eqref{2.34b}.
The left-hand side of \eqref{2.34b} is zero at $\zeta^*$ implies $Z(\zeta^*\pm i)=Z(\zeta^*-c_{\infty}\tau)=Z(\zeta^*)$. By induction, it follows that $Z$  is a   positive constant by \eqref{2.34}. Therefore, $\inf_{\mathbb{R}}Z(\cdot)\geq \min\{z_{\pm},Z(\zeta^*)\}>0.$
It yields that $\hat{I}_{\infty}'(\xi)>0$ for all $\xi\in\mathbb{R}$. In particularly,
 $0<\hat{I}_{\infty}'(0)=\lim_{\iota \to\infty}\frac{I_{\iota}'(\xi_\iota)}{I_{\iota}(\xi_\iota)}.$
It follows that for $\iota$ sufficiently large, $I_{\iota}'(\xi_\iota)>0$. This is a contradiction to (\ref{2.31}).
\end{proof}

In the following, we show that the solution of \eqref{1.2} is strongly uniform persistence.
\begin{lemma}\label{L9}
Assume that (J), (f) and (H) are satisfied. Then for $c>c^*$,
$\liminf_{\xi\rightarrow\infty}S(\xi)>\sigma$, $\liminf_{\xi\rightarrow\infty}P(\xi)>0$ and
$\liminf_{\xi\rightarrow\infty}I(\xi)>0$.
\end{lemma}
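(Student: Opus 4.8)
The plan is to treat the $I$-component separately from the $S$- and $P$-components, since persistence of the latter two is forced directly by their positive recruitment terms, whereas persistence of $I$ relies on the monotonicity trapping supplied by Lemma \ref{L8}.

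First I would prove $\liminf_{\xi\to\infty}I(\xi)>0$ using Lemma \ref{L8}. Let $\varsigma>0$ be the constant there, so $I'(\xi)>0$ whenever $I(\xi)\le\varsigma$. The key observation (property~$(\mathrm P)$) is: if $I(\xi_0)\le\varsigma$ then $I(\xi)\ge I(\xi_0)$ for every $\xi\ge\xi_0$. Indeed, if some $\xi_1>\xi_0$ had $I(\xi_1)<I(\xi_0)$, then since $I'(\xi_0)>0$ the profile first rises above $I(\xi_0)$ and must later recross that level from above at the largest $\xi_2\in(\xi_0,\xi_1]$ with $I(\xi_2)=I(\xi_0)$; there $I(\xi_2)=I(\xi_0)\le\varsigma$ while $I'(\xi_2)\le0$, contradicting Lemma \ref{L8}. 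Granting $(\mathrm P)$, suppose $\liminf_{\xi\to\infty}I<\varsigma$; then there is a large $\xi_0$ with $0<I(\xi_0)<\varsigma$, and $(\mathrm P)$ gives $I(\xi)\ge I(\xi_0)>0$ for all $\xi\ge\xi_0$, whence $\liminf_{\xi\to\infty}I\ge I(\xi_0)>0$. In the remaining case $\liminf_{\xi\to\infty}I\ge\varsigma>0$. Either way the claim holds, using positivity of $I$ on $\mathbb R$ from Theorem \ref{T4} to secure $I(\xi_0)>0$.

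Next I would prove $\liminf_{\xi\to\infty}S(\xi)>\sigma$ by a translation-and-limit argument. Suppose $\liminf_{\xi\to\infty}S=\sigma$ (note $S>\sigma$ gives $\ge\sigma$) and pick $\zeta_n\to\infty$ with $S(\zeta_n)\to\sigma$. The shifts $S(\cdot+\zeta_n),P(\cdot+\zeta_n),I(\cdot+\zeta_n)$ are uniformly bounded (by $K_1,K_2$ and $M$ from Lemma \ref{L6}), and their derivatives are uniformly bounded via \eqref{2.1} together with Lemma \ref{L5}; by Arzel\`a--Ascoli and diagonal extraction they converge in $[C^1_{\mathrm{loc}}(\mathbb R)]^3$ to a triple $(S_\infty,P_\infty,I_\infty)$ solving \eqref{2.1} with the same speed $c$, the nonlocal and distributed-delay terms passing to the limit by uniform convergence on compacta. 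Then $S_\infty\ge\sigma$ with $S_\infty(0)=\sigma$, so $\xi=0$ is an interior global minimum and $S_\infty'(0)=0$. Evaluating the $S$-equation at $\xi=0$ yields
\[
0=cS_\infty'(0)=d_1\mathcal A[S_\infty](0)+b_1(K_1-\sigma)-(S_\infty(0)-\sigma)I_\infty(0),
\]
where $(S_\infty(0)-\sigma)I_\infty(0)=0$ and $\mathcal A[S_\infty](0)=\sum_{i=1}^{N_1}J(i)[S_\infty(i)+S_\infty(-i)-2\sigma]\ge0$ since $S_\infty\ge\sigma$, while $b_1(K_1-\sigma)>0$ by (H); thus the right-hand side is strictly positive, a contradiction. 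The proof of $\liminf_{\xi\to\infty}P>0$ is identical: the limit $P_\infty\ge0$ attains $P_\infty(0)=0$ as a global minimum, and the $P$-equation at $0$ reduces to $0=d_2\mathcal A[P_\infty](0)+b_2K_2$ with $\mathcal A[P_\infty](0)\ge0$ and $b_2K_2>0$. Because the coupling terms $(S-\sigma)I$ and $PI$ vanish at the respective boundary values, the $S$- and $P$-claims do not even invoke the $I$-estimate.

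The main obstacle is the $I$-component: its strong persistence cannot come from a recruitment term surviving at $I=0$, so the sign structure at a minimum is useless and one must instead exploit the dynamical trapping of Lemma \ref{L8}, namely that $I$ is forced to increase whenever it is small. Setting up property~$(\mathrm P)$ cleanly---ruling out decreasing recrossings of small levels while allowing oscillation---is the delicate point, whereas for $S$ and $P$ the only real care needed is justifying the compactness of the shifts and the passage of the nonlocal and delayed terms to the limit.
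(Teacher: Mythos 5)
Your proposal is correct and follows essentially the same route as the paper: for $S$ and $P$ the identical translation-and-limit argument with a contradiction at the minimum (where the coupling terms $(\widetilde S-\sigma)\widetilde I$ and $\widetilde P\,\widetilde I$ vanish and the recruitment terms $b_1(K_1-\sigma)$, $b_2K_2$ force a sign contradiction), and for $I$ the combination of positivity with the monotonicity trap of Lemma \ref{L8}. Your property $(\mathrm P)$ merely spells out the recrossing argument that the paper compresses into the single sentence ``by the positivity of $I$ and Lemma \ref{L8} we immediately obtain $\liminf_{\xi\to\infty}I(\xi)>0$,'' and it is a valid filling-in of that step.
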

\begin{proof}
 Note  $\liminf_{\xi\rightarrow\infty}S(\xi)\geq \sigma.$  We  assume  $\liminf_{\xi\rightarrow\infty}S(\xi)=\sigma$.
Find a sequence $\{\xi_n\}$ with $\lim_{n\rightarrow\infty}\xi_n=\infty$ such that
$\lim_{n\rightarrow\infty}S(\xi_n)=\sigma$. Up to extraction a subsequence, $S(\xi+\xi_n)\to \widetilde{S}(\xi), P(\xi+\xi_n)\to \widetilde{P}(\xi), I(\xi+\xi_n)\to \widetilde{I}(\xi)$ in $C^{1}_{loc}(\mathbb{R})$ as $n\rightarrow\infty$. Furthermore, the nonnegative functions $\widetilde{S}(\cdot), \widetilde{P}(\cdot), \widetilde{I}(\cdot)\in C^{1}_{loc}(\mathbb{R})$.
Substituting $\xi+\xi_n$ in S-equation of \eqref{2.1} and letting $n \rightarrow\infty$, we obtain
\begin{equation}\label{2.35}
c\widetilde{S}'(\xi)=d_1\mathcal {A}[\tilde{S}](\xi)+ b_1[K_1-\widetilde{S}(\xi)]- \widetilde{S}(\xi)\widetilde{I}(\xi)+\sigma \widetilde{I}(\xi).
\end{equation}
 Letting $\xi=0$ in \eqref{2.35} and noting $\widetilde{S}(0)=\sigma, \widetilde{S}'(0)= 0$, then
\begin{equation*}
0=c\widetilde{S}'(0)=d_1\mathcal {A}[\tilde{S}](0)+ b_1[K_1-\sigma].
\end{equation*}
Hence, $b_1[K_1-\sigma]=-d_1\mathcal {A}[\tilde{S}](0)\leq 0$, which contradicts $b_1[K_1-\sigma]>0$. Thus,
$\liminf_{\xi\rightarrow\infty}S(\xi)> \sigma$. Similarly, we can prove  $\liminf_{\xi\rightarrow\infty}P(\xi)>0$.
By the positivity of $I$ on $\mathbb{R}$ and Lemma \ref{L8}, we immediately obtain $\liminf_{\xi\rightarrow\infty}I(\xi)>0$.

\end{proof}

\begin{lemma}\label{L10}
Assume that (J), (f) and (H) are satisfied.  Then for $c>c^*$, $\lim_{\xi\rightarrow\infty}(S,P,I)(\xi)=(S^*,P^*,I^*)$.
\end{lemma}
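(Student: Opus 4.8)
The plan is to build a Lyapunov functional along the wave profile and close the argument with a LaSalle-type invariance principle. First I record the consequences of the earlier lemmas. By Theorem~\ref{T4} we have $\sigma<S<K_1$, $0<P<K_2$ and $I>0$; by Lemma~\ref{L6} $I\le M$; and by Lemma~\ref{L9} the wave is strongly persistent, so that all three quotients $S/S_*$, $P/P_*$, $I/I_*$ stay in a fixed compact subset of $(0,\infty)$ for $\xi\gg1$. Together with Lemma~\ref{L5} and \eqref{2.1}, the profile and its derivative are uniformly bounded for large $\xi$, so every shift sequence $(S,P,I)(\cdot+\xi_n)$ has a subsequence converging in $[C^1_{\mathrm{loc}}(\mathbb{R})]^3$ to an entire solution of \eqref{2.1} that is again bounded and persistent.

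Next I would exploit the equilibrium identities obtained by setting the right-hand sides of \eqref{2.1} to zero at the constant state $(S_*,P_*,I_*)$:
\[
b_1(K_1-S_*)=(S_*-\sigma)I_*,\qquad b_2(K_2-P_*)=P_*I_*,\qquad \gamma_1 S_*+\gamma_2 P_*=\delta .
\]
With $g(x)=x-1-\ln x\ge0$ (equality iff $x=1$), I would set $\mathcal{L}=\mathcal{L}_{\mathrm{loc}}+\mathcal{L}_{\mathrm{nl}}+\mathcal{L}_{\mathrm{del}}$, where
\[
\mathcal{L}_{\mathrm{loc}}(\xi)=a_1 S_*\,g\!\Big(\tfrac{S(\xi)}{S_*}\Big)+a_2 P_*\,g\!\Big(\tfrac{P(\xi)}{P_*}\Big)+a_3 I_*\,g\!\Big(\tfrac{I(\xi)}{I_*}\Big),
\]
with positive weights $a_1,a_2,a_3$ chosen so that the incidence cross-terms cancel. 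The distributed delay is absorbed by
\[
\mathcal{L}_{\mathrm{del}}(\xi)=a_3\gamma_1 S_* I_*\int_{0}^{h} f(\tau)\int_{\xi-c\tau}^{\xi} g\!\Big(\tfrac{S(s)I(s)}{S_* I_*}\Big)\dif s\,\dif\tau ,
\]
whose derivative reproduces the delayed-incidence contribution up to a nonpositive convexity remainder, while the nonlocal dispersal is absorbed by line integrals of the form $\sum_{i=1}^{N_1}J(i)\int_{0}^{i}[\,\cdots]\dif s$ for each component, tailored so that $c\,\mathcal{L}_{\mathrm{nl}}'$ cancels the excess created by $\mathcal{A}[\cdot]$.

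Differentiating $\mathcal{L}$ along \eqref{2.1} and using the three identities, I expect $c\,\mathcal{L}'(\xi)$ to reduce to a sum of manifestly nonpositive terms: a quadratic dissipation $-a_1 b_1 (S-S_*)^2/S-a_2 b_2 (P-P_*)^2/P$ from the recruitment, together with several $-g(\cdot)$ terms from the incidence, nonlocal and delayed-incidence contributions (using $1-x+\ln x\le0$), with equality precisely when $S\equiv S_*$, $P\equiv P_*$, $I\equiv I_*$ and all shifted arguments equal their equilibrium values. Since the $g$-arguments lie in a compact subset of $(0,\infty)$ by persistence, $\mathcal{L}$ is bounded below and nonincreasing for $\xi\gg1$, hence $\mathcal{L}(\xi)$ tends to a finite limit. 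For any $\xi_n\to\infty$ with $(S,P,I)(\cdot+\xi_n)\to(\bar S,\bar P,\bar I)$, the functional of the limit profile is constant, so its dissipation vanishes identically, forcing $(\bar S,\bar P,\bar I)\equiv(S_*,P_*,I_*)$ by the equality case; as $\{\xi_n\}$ is arbitrary, $\lim_{\xi\to\infty}(S,P,I)(\xi)=(S_*,P_*,I_*)$.

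The hardest part will be the construction and sign analysis of $\mathcal{L}_{\mathrm{nl}}$. Because $J$ couples sites up to distance $N_1$ rather than only nearest neighbours, the compensating line integrals must be organized so that every difference $J(i)[u(\xi\pm i)-u(\xi)]$ collapses simultaneously for all three components into a convexity remainder of the correct sign, and the extra $\tau$-layer from the distributed delay must not spoil this cancellation. Checking that these two nonlocal effects can be offset together — and that the resulting functional is finite, which relies on the boundedness and decay from Lemmas~\ref{L6}--\ref{L7} — is the crux of the proof.
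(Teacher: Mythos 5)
Your overall strategy (a Goh--Volterra-type Lyapunov functional with a delay compensator and nonlocal line-integral compensators, closed by a LaSalle invariance argument along shift sequences) is exactly the paper's, and your treatment of the $P$- and $I$-components, of $\mathcal{L}_{\mathrm{del}}$ (which matches the paper's $\mathcal{L}_2$), and of the final limiting argument all match. However, there is a genuine gap in your $S$-component. You take the standard potential $a_1S_*\,g(S/S_*)$, whose multiplier along the flow is $\frac{S-S_*}{S}$. But the $S$-equation loses susceptibles at rate $(S-\sigma)I$ (infection minus relapse), while the $I$-equation gains at the delayed rate $\gamma_1 SI$. Applying your multiplier to the flux gives
\[
-\frac{S-S_*}{S}(S-\sigma)I=-(S-S_*)I+\sigma I\,\frac{S-S_*}{S},
\]
and even after regrouping with the recruitment cross terms (using $b_1(K_1-S_*)=(S_*-\sigma)I_*$), a sign-indefinite term proportional to $\sigma\,\frac{(S-S_*)(I-I_*)}{S}$ survives. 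This term is matched by nothing else in the derivative: it cannot be absorbed into the quadratic $-a_1b_1(S-S_*)^2/S$ (which vanishes to second order as $S\to S_*$ while the cross term is only first order), and no choice of the \emph{constant} weights $a_1,a_2,a_3$ removes it, because it is a structural mismatch between the fluxes $(S-\sigma)I$ and $SI$, not a coefficient problem. When $\sigma=0$ your functional is the classical one and works; with relapse ($\sigma>0$, which assumption (H) permits) the claimed reduction of $c\,\mathcal{L}'$ to manifestly nonpositive terms fails.

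The paper's fix is to change the potential for $S$: it uses $c\int_{S_*}^{S(\xi)}\frac{\tau-S_*}{\tau-\sigma}\,\dif\tau$, whose multiplier is $\frac{S-S_*}{S-\sigma}$, so the flux term becomes exactly $-(S-S_*)I$ and the relapse is cancelled identically; the resulting dissipation then has denominator $S-\sigma$ (together with an additional good term $-\frac{\sigma I_*(S-S_*)^2}{(S-\sigma)S}$), not the denominator $S$ you predict. Correspondingly, the nonlocal compensator for $S$ cannot be a $g$-type line integral either: the paper uses $\tilde f(x)=x-\int_{S_*}^{x}\frac{S_*-\sigma}{\tau-\sigma}\,\dif\tau$ in its $\mathcal{L}_4$ and proves nonpositivity of the $S$-dispersal block through the auxiliary function $R(x)=x-S(\xi)-\int_{S(\xi)}^{x}\frac{S(\xi)-\sigma}{\tau-\sigma}\,\dif\tau\ge 0$. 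So the crux you flag (organizing $\mathcal{L}_{\mathrm{nl}}$ against the finite-range kernel and the delay) is indeed handled in the paper, but only after the $S$-potential is modified to account for relapse; with $\mathcal{L}_{\mathrm{loc}}$ as you wrote it, the sign of the derivative fails and the proof does not go through.
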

\begin{proof}
Set $\tilde{g}(x)=x-1-\ln x$ and $\tilde{f}(x)=x-\int_{S^*}^{x}\frac{S^*-\sigma}{\tau-\sigma}\dif \tau$. Define a Lyapunov functional as follows
\begin{equation}\label{2.38}
\begin{aligned}
&\mathcal {L}_1(S,P,I)(\xi)=c\int_{S^*}^{S(\xi)}\frac{\tau-S^*}{\tau-\sigma}\dif \tau
+\frac{c}{\gamma_1}I^*\tilde{g}\left(\frac{I(\xi)}{I^*}\right)+\frac{c\gamma_2}{\gamma_1}P^*\tilde{g}\left(\frac{P(\xi)}{P^*}\right),\\
&\mathcal {L}_2(S,I)(\xi)=S^*I^*\int_{0}^hf(\tau)\int_{\xi-c\tau}^{\xi}\tilde{g}\left(\frac{S(\eta)I(\eta)}{S^*I^*}\right)\dif \eta\dif \tau,\\
&\mathcal {L}_3(I)(\xi)=\sum_{i=1}^{N_1}J(i)\int_{\xi-i}^{\xi}\tilde{g}\left(\frac{I(\eta)}{I^*}\right)\dif \eta-\sum_{i=1}^{N_1}J(i)\int_{\xi}^{\xi+i}\tilde{g}\left(\frac{I(\eta)}{I^*}\right)\dif \eta,\\
&\mathcal {L}_4(S)(\xi)=\sum_{i=1}^{N_1}J(i)\int_{\xi-i}^{\xi}\tilde{f}(S(\eta))\dif \eta-\sum_{i=1}^{N_1}J(i)\int_{\xi}^{\xi+i}\tilde{f}(S(\eta))\dif \eta,\\
&\mathcal {L}_5(P)(\xi)=\sum_{i=1}^{N_1}J(i)\int_{\xi-i}^{\xi}\tilde{g}\left(\frac{P(\eta)}{P^*}\right)\dif \eta-\sum_{i=1}^{N_1}J(i)\int_{\xi}^{\xi+i}\tilde{g}\left(\frac{P(\eta)}{P^*}\right)\dif \eta.\\
\end{aligned}
\end{equation}
Notice that $S,P,I$ are all bounded and Lemma \ref{L9} gives  $S,P,I$  are all strongly uniform persistence. Thus $\mathcal {L}_i(\cdot)(\xi), i=1,2,3,4,5$ are all well-defined and bounded from below for $\xi$ sufficiently large.
Now let
\begin{equation}\label{2.39}
\mathcal {L}(S,P,I)(\xi)=\mathcal {L}_1(S,P,I)(\xi)+\mathcal {L}_2(S,I)(\xi)+\frac{d_3}{\gamma_1}I^*\mathcal {L}_3(I)(\xi)+d_1\mathcal {L}_4(S)(\xi)+\frac{d_2\gamma_2}{\gamma_1}P^*\mathcal {L}_5(P)(\xi).
\end{equation}
Some calculations show that
\begin{equation*}
\begin{aligned}
\frac{\dif\mathcal {L}_1(S,P,I)(\xi)}{\dif \xi}
&=\frac{S(\xi)-S^*}{S(\xi)-\sigma}cS'(\xi)+\frac{1}{\gamma_1}\left[1-\frac{I^*}{I(\xi)}\right]cI'(\xi)
+\frac{\gamma_2}{\gamma_1}\left[1-\frac{P^*}{P(\xi)}\right]cP'(\xi)\\
&=\frac{S(\xi)-S^*}{S(\xi)-\sigma}d_1\mathcal {A}[S](\xi)+\frac{d_3}{\gamma_1}\left[1-\frac{I^*}{I(\xi)}\right]\mathcal {A}[I](\xi)+\frac{d_2\gamma_2}{\gamma_1}\left[1-\frac{P^*}{P(\xi)}\right]\mathcal {A}[P](\xi)\\
&\qquad+G_1(S,P,I)(\xi),\\
\end{aligned}
\end{equation*}
where
\begin{equation*}
\begin{aligned}
G_1(S,P,I)(\xi)&=\frac{S(\xi)-S^*}{S(\xi)-\sigma}\big[b_1(K_1-S(\xi))- (S(\xi)-\sigma)I(\xi)\big]\\
&+\frac{\gamma_2}{\gamma_1}\left[1-\frac{P^*}{P(\xi)}\right]\left[b_2(K_2-P(\xi))- P(\xi)I(\xi)\right]\\
&+\frac{1}{\gamma_1}\left[1-\frac{I^*}{I(\xi)}\right]\left[\gamma_1 \int_{0}^hf(\tau)S(\xi-c\tau)I(\xi-c\tau)\dif \tau-\delta I(\xi)+\gamma_2P(\xi)I(\xi)\right].
\end{aligned}
\end{equation*}
By using $b_1K_1=b_1S^*+S^*I^*-\sigma I^*$, $b_2(K_2-P^*)=P^*I^*$ and $\gamma_1 S^*I^*+\gamma_2P^*I^*=\delta I^*$, we obtain
\begin{equation}\label{2.40}
\begin{aligned}
G_1(S,P,I)(\xi)&=-b_1\frac{[S(\xi)-S^*]^2}{S(\xi)-\sigma}+[S(\xi)-S^*][I^*-I(\xi)]-\frac{I^*[S(\xi)-S^*]^2}{S(\xi)-\sigma}\\
&-\frac{b_2\gamma_2}{\gamma_1P(\xi)}[P(\xi)-P^*]^2+\frac{\gamma_2}{\gamma_1}\left[1-\frac{P^*}{P(\xi)}\right][P^*I^*-P(\xi)I(\xi)]\\
&+\frac{1}{\gamma_1}[\gamma_2P(\xi)-\gamma_1S^*-\gamma_2P^*][I(\xi)-I^*]\\
&+\int_{0}^hf(\tau)S(\xi-c\tau)I(\xi-c\tau)\dif \tau-\frac{I^*}{I(\xi)}\int_{0}^hf(\tau)S(\xi-c\tau)I(\xi-c\tau)\dif \tau.
\end{aligned}
\end{equation}
On the other hand,
\begin{equation}\label{2.41}
\frac{\dif\mathcal {L}_2(S,I)(\xi)}{\dif \xi}=S^*I^*\int_{0}^hf(\tau)\left[\frac{S(\xi)I(\xi)}{S^*I^*}-\frac{S(\xi-c\tau)I(\xi-c\tau)}{S^*I^*}
+\ln\frac{S(\xi-c\tau)I(\xi-c\tau)}{S(\xi)I(\xi)}\right]\dif \tau.
\end{equation}
By  \eqref{2.40} and  \eqref{2.41}, we have
\begin{equation}\label{2.42}
\begin{aligned}
&G_1(S,P,I)(\xi)+\frac{\dif\mathcal {L}_2(S,I)(\xi)}{\dif \xi}\\
=&-(b_1+I^*)\frac{[S(\xi)-S^*]^2}{S(\xi)-\sigma}-\frac{b_2\gamma_2+\gamma_2I^*}{\gamma_1P(\xi)}[P(\xi)-P^*]^2+S(\xi)I^*\\
&-S^*I^*\int_{0}^hf(\tau)\left[\frac{S(\xi-c\tau)I(\xi-c\tau)}{S^*I(\xi)}-\ln\frac{S(\xi-c\tau)I(\xi-c\tau)}{S(\xi)I(\xi)}\right]\dif \tau\\
=&-(b_1+I^*)\frac{[S(\xi)-S^*]^2}{S(\xi)-\sigma}-\frac{b_2\gamma_2+\gamma_2I^*}{\gamma_1P(\xi)}[P(\xi)-P^*]^2
+\frac{I^*[S(\xi)-S^*]^2}{S(\xi)}\\
&-S^*I^*\int_{0}^hf(\tau)
\left[\frac{S^*}{S(\xi)}-2+\frac{S(\xi-c\tau)I(\xi-c\tau)}{S^*I(\xi)}-\ln\frac{S(\xi-c\tau)I(\xi-c\tau)}{S(\xi)I(\xi)}\right]\dif \tau\\
=&-b_1\frac{[S(\xi)-S^*]^2}{S(\xi)-\sigma}-\frac{b_2\gamma_2+\gamma_2I^*}{\gamma_1P(\xi)}[P(\xi)-P^*]^2-\frac{\sigma I^*[S(\xi)-S^*]^2}{(S(\xi)-\sigma)S(\xi)}\\
&-S^*I^*\int_{0}^hf(\tau)
\left[\tilde{g}\left(\frac{S^*}{S(\xi)}\right)+\tilde{g}\left(\frac{S(\xi-c\tau)I(\xi-c\tau)}{S^*I(\xi)}\right)\right]\dif \tau\leq 0.
\end{aligned}
\end{equation}
In addition, we get
\begin{equation}\label{2.43}
\frac{\dif\mathcal {L}_3(I)(\xi)}{\dif \xi}=\sum_{i=1}^{N_1}J(i)\left[2\tilde{g}\left(\frac{I(\xi)}{I^*}\right)-\tilde{g}\left(\frac{I(\xi-i)}{I^*}\right)
-\tilde{g}\left(\frac{I(\xi+i)}{I^*}\right)\right],
\end{equation}
\begin{equation}\label{2.44}
\frac{\dif\mathcal {L}_4(S)(\xi)}{\dif \xi}=\sum_{i=1}^{N_1}J(i)\big[2\tilde{f}(S(\xi))-\tilde{f}(S(\xi-i))-\tilde{f}(S(\xi+i))\big],
\end{equation}
\begin{equation}\label{2.44a}
\frac{\dif\mathcal {L}_5(P)(\xi)}{\dif \xi}=\sum_{i=1}^{N_1}J(i)\left[2\tilde{g}\left(\frac{P(\xi)}{P^*}\right)-\tilde{g}\left(\frac{P(\xi-i)}{P^*}\right)
-\tilde{g}\left(\frac{P(\xi+i)}{P^*}\right)\right].
\end{equation}
It follows from \eqref{2.43}  that
\begin{equation}\label{2.45}
\begin{aligned}
&\frac{d_3}{\gamma_1}\left[1-\frac{I^*}{I(\xi)}\right]\mathcal {A}[I](\xi)+\frac{d_3}{\gamma_1}I^*\frac{\dif\mathcal {L}_3(I)(\xi)}{\dif \xi}\\
=&-\frac{d_3}{\gamma_1}I^*\sum_{i=1}^{N_1}J(i)\left[\frac{I(\xi-i)}{I(\xi)}-1-\ln\frac{I(\xi-i)}{I(\xi)}
+\frac{I(\xi+i)}{I(\xi)}-1-\ln\frac{I(\xi+i)}{I(\xi)}\right]\\
=&-\frac{d_3}{\gamma_1}I^*\sum_{i=1}^{N_1}J(i)\left[\tilde{g}\left(\frac{I(\xi-i)}{I(\xi)}\right)+\tilde{g}\left(\frac{I(\xi+i)}{I(\xi)}\right)\right]\leq 0.
\end{aligned}
\end{equation}
It follows from \eqref{2.44a}  that
\begin{equation}\label{2.45a}
\begin{aligned}
&\frac{d_2\gamma_2}{\gamma_1}\left[1-\frac{P^*}{P(\xi)}\right]\mathcal {A}[P](\xi)
+\frac{d_2\gamma_2}{\gamma_1}P^*\frac{\dif\mathcal {L}_5(P)(\xi)}{\dif \xi}\\
=&-\frac{d_2\gamma_2}{\gamma_1}P^*\sum_{i=1}^{N_1}J(i)\left[\frac{P(\xi-i)}{P(\xi)}-1-\ln\frac{P(\xi-i)}{P(\xi)}
+\frac{P(\xi+i)}{P(\xi)}-1-\ln\frac{P(\xi+i)}{P(\xi)}\right]\\
=&-\frac{d_2\gamma_2}{\gamma_1}P^*\sum_{i=1}^{N_1}J(i)\left[\tilde{g}\left(\frac{P(\xi-i)}{P(\xi)}\right)
+\tilde{g}\left(\frac{P(\xi+i)}{P(\xi)}\right)\right]\leq 0.
\end{aligned}
\end{equation}
Define $R(x)=x-S(\xi)-\int_{S(\xi)}^{x}\frac{S(\xi)-\sigma}{\tau-\sigma}\dif \tau$. Note that
\begin{equation*}
\left\{\begin{array}{l}
R(S(\xi))=0,\\
R'(x)>0, \text{ if  } x>S(\xi)>\sigma,\\
R'(x)<0, \text{ if  } \sigma<x<S(\xi)\\
\end{array}\right.\Rightarrow R(x)\geq 0, \forall\; x>\sigma \text{ and } R(x)=0\Leftrightarrow x=S(\xi).
\end{equation*}
By \eqref{2.44} and the properties of $R(\cdot)$, we see that
\begin{equation}\label{2.46}
\begin{aligned}
&d_1\frac{S(\xi)-S^*}{S(\xi)-\sigma}\mathcal {A}[S](\xi)+d_1\frac{\dif\mathcal {L}_4(S)(\xi)}{\dif \xi}\\
=&-\frac{d_1(S^*-\sigma)}{S(\xi)-\sigma}\sum_{i=1}^{N_1}J(i)\left[S(\xi-i)+S(\xi+i)-2S(\xi)-\int_{S(\xi)}^{S(\xi-i)}\frac{S(\xi)-\sigma}{\tau-\sigma}\dif \tau-\int_{S(\xi)}^{S(\xi+i)}\frac{S(\xi)-\sigma}{\tau-\sigma}\dif \tau\right]\\
=&-\frac{d_1(S^*-\sigma)}{S(\xi)-\sigma}\sum_{i=1}^{N_1}J(i)\left[R(S(\xi-i))+R(S(\xi+i))\right]\leq 0.
\end{aligned}
\end{equation}
By \eqref{2.38}, \eqref{2.39}, \eqref{2.42}, \eqref{2.45}, \eqref{2.45a} and \eqref{2.46}, we reach
\begin{equation*}
\begin{aligned}
\frac{\dif\mathcal {L}(S,P,I)(\xi)}{\dif \xi}=&G_1(S,P,I)(\xi)+\frac{\dif\mathcal {L}_2(S,I)(\xi)}{\dif \xi}\\
&+d_1\frac{S(\xi)-S^*}{S(\xi)-\sigma}\mathcal {A}[S](\xi)+d_1\frac{\dif\mathcal {L}_4(S)(\xi)}{\dif \xi}\\
&+\frac{d_2\gamma_2}{\gamma_1}\left[1-\frac{P^*}{P(\xi)}\right]\mathcal {A}[P](\xi)+\frac{d_2\gamma_2}{\gamma_1}P^*\frac{\dif\mathcal {L}_5(P)(\xi)}{\dif \xi}\\
&+\frac{d_3}{\gamma_1}\left[1-\frac{I^*}{I(\xi)}\right]\mathcal {A}[I](\xi)+\frac{d_3}{\gamma_1}I^*\frac{\dif\mathcal {L}_3(I)(\xi)}{\dif \xi}\leq 0.
\end{aligned}
\end{equation*}
Note that $\frac{\dif\mathcal {L}(S,P,I)(\xi)}{\dif \xi}=0$ implies $\tilde{g}(\frac{P(\xi\pm i)}{P(\xi)})=0$, $\tilde{g}(\frac{I(\xi\pm i)}{I(\xi)})=0$ and  $R(S(\xi\pm i))=0$. Thus $S(\xi), P(\xi), I(\xi)$
must be all constant functions. Furthermore, $G_1(S,P,I)(\xi)+\frac{\dif\mathcal {L}_2(S,I)(\xi)}{\dif \xi}=0$
results in $S(\xi)\equiv S^*, P(\xi)\equiv P^*$ and $I(\xi)\equiv I^*$. Note that the maximum invariant set of
$\{(S,P,I): \frac{\dif\mathcal {L}(S,P,I)(\xi)}{\dif \xi}=0\}$ consists of only one equilibrium $(S^*,P^*,I^*)$.
By LaSalle's invariance principle, $\lim_{\xi\rightarrow\infty}(S,P,I)(\xi)=(S^*,P^*,I^*)$.

\end{proof}

\begin{theorem}\label{T11}
Assume that (J), (f) and (H) hold. For $c>c^*$, the system \eqref{2.1} has a solution $(S,P,I)$ such that $\sigma<S<K_1$, $0<P<K_2$ and $I>0$ on
$\mathbb{R}$. Moreover, the boundary conditions \eqref{2.2a} hold.
\end{theorem}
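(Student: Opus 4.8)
The plan is to assemble the theorem directly from the chain of results already established: the statement is essentially the conjunction of the semi-wave existence (Theorem~\ref{T4}) and the right-tail convergence (Lemma~\ref{L10}), so the proof should be a short synthesis rather than a fresh argument. First I would invoke Theorem~\ref{T4} to produce, for the fixed speed $c>c^*$, a solution $(S,P,I)$ of \eqref{2.1} on $\mathbb{R}$ satisfying the strict bounds $\sigma<S<K_1$, $0<P<K_2$ and $I>0$, together with the left boundary condition $\lim_{\xi\to-\infty}(S,P,I)(\xi)=(K_1,K_2,0)$. This already yields every assertion of the theorem except the limit at $+\infty$.

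Next I would secure the right boundary condition. Specializing the boundedness argument of Lemma~\ref{L6} to the single speed $c$ (its Claim~2 is stated precisely in this form) shows $I$ is uniformly bounded, and Lemma~\ref{L9} provides the strong uniform persistence $\liminf_{\xi\to\infty}S>\sigma$, $\liminf_{\xi\to\infty}P>0$ and $\liminf_{\xi\to\infty}I>0$. These two facts are exactly the hypotheses under which the Lyapunov functional $\mathcal{L}$ of Lemma~\ref{L10} is well-defined and bounded from below for large $\xi$. Lemma~\ref{L10} then delivers $\lim_{\xi\to\infty}(S,P,I)(\xi)=(S^*,P^*,I^*)$ via LaSalle's invariance principle, where $(S^*,P^*,I^*)$ is the endemic equilibrium appearing in \eqref{2.2a}. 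Combining the behavior in the two tails gives the full boundary conditions \eqref{2.2a}, which completes the proof.

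The genuinely difficult content lies not in this final assembly but in the lemmas it cites, and I expect the convergence at $+\infty$ to be the main obstacle. Because \eqref{2.1} is nonmonotone and the bilinear incidence is unsaturated, one cannot control $I$ by comparison and must instead rule out blow-up through the delicate bound on $I'/I$ in Lemma~\ref{L5} together with the blow-up dichotomy of Lemma~\ref{L6}, and then rule out decay of $I$ back to the boundary through the persistence mechanism of Lemmas~\ref{L8} and~\ref{L9}, before the Lyapunov and LaSalle machinery of Lemma~\ref{L10} can even be brought to bear. Once those obstacles are cleared, Theorem~\ref{T11} follows immediately by the synthesis above.
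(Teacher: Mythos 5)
Your proposal is correct and follows exactly the paper's (implicit) argument: the paper states Theorem \ref{T11} without a separate proof precisely because it is the conjunction of Theorem \ref{T4} (existence of the semi-wave with the strict bounds and the limit at $-\infty$) and Lemma \ref{L10} (convergence to the endemic equilibrium at $+\infty$), the latter resting on Lemmas \ref{L5}--\ref{L9} just as you describe. Your synthesis, including the role of Lemma \ref{L6} for boundedness and Lemma \ref{L9} for persistence as the prerequisites making the Lyapunov functional of Lemma \ref{L10} well-defined, matches the paper's route.
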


{\section{Existence of  epidemic wave for $c=c^*$}}

We will use the limiting argument to establish the existence of epidemic waves for the critical case. Notice the lower solution of \eqref{2.1}
will  be degenerate at $c=c^*$, we need some proper shifts to avoid the problem.

\begin{theorem}\label{T12}
Assume that (J), (f) and (H) hold. For $c=c^*$, the system \eqref{2.1} has a solution $(S,P,I)$ such that $\sigma<S<K_1$, $0<P<K_2$ and $I>0$ on
$\mathbb{R}$. Moreover, the boundary conditions \eqref{2.2a} hold.
\end{theorem}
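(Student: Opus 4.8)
The plan is to obtain the critical wave as a limit of the super-critical waves produced in Theorem \ref{T11}, using a translation (shift) normalization to prevent the limit profile from degenerating, and then to recover the two boundary conditions of \eqref{2.2a} separately. I would first fix a sequence $\{c_\iota\}\subset(c^*,\tilde c)$ with $c_\iota\downarrow c^*$ and, for each $\iota$, let $(S_\iota,P_\iota,I_\iota)$ be the wave given by Theorem \ref{T11}, so that $\sigma<S_\iota<K_1$, $0<P_\iota<K_2$, $I_\iota>0$ and \eqref{2.2a} holds. Choose a constant $\kappa\in(0,\min\{\varsigma,I^*\})$, where $\varsigma$ is the threshold from Lemma \ref{L8}; since $I_\iota(-\infty)=0$ and $I_\iota(+\infty)=I^*$, the intermediate value theorem gives a point $\zeta_\iota$ with $I_\iota(\zeta_\iota)=\kappa$, and after the shift $\xi\mapsto\xi+\zeta_\iota$ (which leaves all estimates invariant) we may assume $I_\iota(0)=\kappa$ for every $\iota$. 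By Lemma \ref{L6} the family $\{I_\iota\}$ is uniformly bounded by some $M$, and by Lemma \ref{L5} the ratios $I_\iota'/I_\iota$ are uniformly bounded because $c_\iota$ stays in the compact interval $[c^*,\tilde c]$ away from $0$; together with $\sigma\le S_\iota\le K_1$, $0\le P_\iota\le K_2$ and the equations this yields uniform $C^1_{\mathrm{loc}}$ bounds. By Arzela--Ascoli and a diagonal extraction, $(S_\iota,P_\iota,I_\iota)\to(S,P,I)$ in $[C^1_{\mathrm{loc}}(\mathbb R)]^3$, and the limit solves \eqref{2.1} with $c=c^*$. The normalization gives $I(0)=\kappa>0$, so $I\not\equiv0$; the maximum-principle argument of Theorem \ref{T4} then upgrades this to $\sigma<S<K_1$, $0<P<K_2$ and $I>0$ on $\mathbb R$.

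For the limit at $+\infty$ I would observe that the arguments of Lemmas \ref{L8}, \ref{L9} and \ref{L10} use only positivity, boundedness and $c>0$, hence apply verbatim to $(S,P,I)$ at the speed $c=c^*$. Thus $(S,P,I)$ is strongly uniformly persistent, the Lyapunov functional \eqref{2.39} is well defined and nonincreasing for large $\xi$, and LaSalle's invariance principle gives $\lim_{\xi\to\infty}(S,P,I)(\xi)=(S^*,P^*,I^*)$.

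The \textbf{main obstacle} is the condition at $-\infty$: at $c=c^*$ the lower solution of Lemma \ref{L2} degenerates, so the comparison bound $I\ge I_-$ no longer forces $I(-\infty)=0$, and one must rule out a nontrivial left limit directly. Here I would use Lemma \ref{L8}: shrinking $\varsigma$ if necessary so that $\varsigma<I^*$, at every point with $I(\xi)=\varsigma$ one has $I'(\xi)>0$, so $I$ crosses the level $\varsigma$ only upward; since $I(0)=\kappa<\varsigma$ while $I(+\infty)=I^*>\varsigma$, an elementary crossing argument shows $\{\xi:I(\xi)<\varsigma\}=(-\infty,b)$ and $I$ is strictly increasing there. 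Hence $L:=\lim_{\xi\to-\infty}I(\xi)$ exists with $L\in[0,\varsigma)$. To identify $L$, for any $\xi_n\to-\infty$ I extract a $C^1_{\mathrm{loc}}$ limit $(\bar S,\bar P,\bar I)$ of the shifts $(S(\cdot+\xi_n),P(\cdot+\xi_n),I(\cdot+\xi_n))$, an entire solution of \eqref{2.1} at $c=c^*$ with $\bar I\equiv L$. Feeding $\bar I\equiv L$ into the $S$- and $P$-equations and running the monotone-iteration argument of Proposition 1 in the proof of Lemma \ref{L8} forces $\bar S\equiv\frac{b_1K_1+\sigma L}{b_1+L}$ and $\bar P\equiv\frac{b_2K_2}{b_2+L}$. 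The $\bar I$-equation then collapses (since $\mathcal A[\bar I]\equiv0$ and $\bar I'\equiv0$) to $(\gamma_1\bar S+\gamma_2\bar P-\delta)L=0$, i.e.\ $e(L)=0$ when $L>0$.

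Because $e(0)=\gamma_1K_1+\gamma_2K_2-\delta>0$ under (H) and $e$ is positive on $[0,I^*)$ with its unique positive zero at $I^*>\varsigma>L$, the identity $e(L)=0$ is impossible, so $L=0$. With $L=0$ the same limit identities give $\bar S\equiv K_1$ and $\bar P\equiv K_2$, and since $\xi_n$ was arbitrary we conclude $\lim_{\xi\to-\infty}(S,P,I)(\xi)=(K_1,K_2,0)$. (Alternatively, as announced in the introduction, the decay of $I$ at $-\infty$ can be pinned down through the fluctuation lemma together with the two-sided Laplace transform, following the scheme of Lemma \ref{L7}.) This establishes \eqref{2.2a} and completes the proof; I expect the delicate step to be precisely this control of the left tail, where the loss of a usable lower solution at $c=c^*$ must be compensated by the monotonicity of Lemma \ref{L8} and the sign structure of $e$.
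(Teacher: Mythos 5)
Your proposal is correct in substance, and it follows the paper's overall skeleton (take $c_n\downarrow c^*$, normalize by translation, pass to a $C^1_{\mathrm{loc}}$ limit, recover the tail at $+\infty$ via persistence and the Lyapunov functional of Lemma \ref{L10}), but it resolves the two genuine difficulties by a different mechanism. For non-degeneracy, you pin $I_\iota(0)=\kappa<\min\{\varsigma,I_*\}$ by the intermediate value theorem using the full waves of Theorem \ref{T11}; the paper instead proves two preliminary uniform estimates (Step 1: $\liminf_n\|I_n\|_{L^\infty(\mathbb{R})}>0$; Step 2: $\inf_{\mathbb R}S_n\le K_1-\tilde\varsigma$ and $\inf_{\mathbb R}P_n\le K_2-\tilde\varsigma$ uniformly) and then normalizes through the $S$- and $P$-components, arranging $S_n(0)=K_1-\varsigma_1$, $P_n(0)=K_2-\varsigma_2$ with $S_n\ge K_1-\varsigma_1$, $P_n\ge K_2-\varsigma_2$ on $(-\infty,0]$. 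For the left tail --- the real crux, since the lower solution degenerates at $c=c^*$ --- you use Lemma \ref{L8} to get strict monotonicity of $I$ on $(-\infty,0]$, hence existence of $L=\lim_{\xi\to-\infty}I(\xi)$, then classify subsequential limits of shifts as constant entire solutions and exclude $L>0$ by the strict positivity of $e$ on $[0,I_*)$; the paper instead runs a dichotomy on $\int_{-\infty}^0 I_\star$, applies the one-sided (negative-argument) Laplace transform $\mathscr N[I_\star](\lambda)$ together with the smallness of $K_1-S_\star$ and $K_2-P_\star$ on $(-\infty,0]$ furnished by its normalization to contradict divergence of that integral as $\lambda\to0^-$, and then uses the fluctuation lemma for $S_\star,P_\star$. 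Your route is more elementary --- it dispenses with the transform and with the paper's Steps 1--2 altogether --- at the price of leaning harder on Lemma \ref{L8} and of two details you should write out: (i) the positivity $I>0$ of the limit cannot be quoted verbatim from Theorem \ref{T4}, whose argument uses $I\ge I_->0$ on a left half-line; here you must propagate a zero of $I$ through the kernel $J$ and the delay term to deduce $I\equiv 0$ and contradict $I(0)=\kappa$; (ii) identifying $\bar S\equiv\frac{b_1K_1+\sigma L}{b_1+L}$ and $\bar P\equiv\frac{b_2K_2}{b_2+L}$ requires a two-sided monotone iteration (from the constant $\sigma$ below and from $K_1$ above), since Proposition 1 in the proof of Lemma \ref{L8} as written gives only the lower squeeze, which suffices only in the case $L=0$ where the upper bound $K_1$ is itself the constant solution. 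Both points are routine to fill in, so your argument stands as a valid, and in places simpler, alternative to the paper's proof.
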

\begin{proof}
Pick a sequence $\{c_n\}$ with $c^*<c_n<c^*+\frac{1}{n}$ for each $n\in \mathbb{Z}_+$.
Let $\{(S_n,P_n,I_n)\}$ be the solution sequence of \eqref{2.1} corresponding to wave speeds  $\{c_n\}$. By Theorem \ref{T4}, we see that
$\sigma<S_n<K_1$, $0<P_n<K_2$ and $I_n>0$ on $\mathbb{R}$. Moreover, according to Lemma \ref{L6}, the sequence $\{I_n\}$ is also uniformly bounded on $\mathbb{R}$.

We next show that the solution sequence cannot converge to zero uniformly as $n\to\infty$.

\textbf{Step 1.}  Prove $\liminf_{n\to\infty}\|I_n(\cdot)\|_{L^{\infty}(\mathbb{R})}>0$.

If not, up to extraction of a subsequence, we may as well suppose that $\|I_n(\cdot)\|_{L^{\infty}(\mathbb{R})}\to 0$ as $n\to\infty$.
Let $M_{n}=\|I_n(\cdot)\|_{L^{\infty}(\mathbb{R})}$. Then we can find some small $\varsigma_0$ and some large $n_0$ such that $M_n\leq \varsigma_0$ for all $n\geq n_0$. Let $c_1=c^*$ and $c_2=c^*+1$ in Lemma \ref{L8}. Then $I_n'(\xi)\geq 0$ for $n\geq n_0$. On the other side, $I_n$ is  uniformly bounded and positive. Thus $\lim_{\xi\to\infty}I_n(\xi)$ exists and is positive for $n\geq n_0$. This is a contradiction.
By Step 1 and the positivity of $I_n(\cdot)$, we can assume  $\inf_{n}\|I_n(\cdot)\|_{L^{\infty}(\mathbb{R})}>0$ without loss of generality.

Note that $\sigma<S_n(\xi)<K_1$, $0<P_n(\xi)<K_2$ for all $n\in\mathbb{Z}_+$ and $\xi\in\mathbb{R}$. Thus, $\inf_{\mathbb{R}}S_n(\xi)\leq S_n(\xi)<K_1$ and $\inf_{\mathbb{R}}P_n(\xi)\leq P_n(\xi)<K_2$. It implies that there is $\varsigma_n$ such that  $\inf_{\mathbb{R}}S_n(\xi)\leq K_1-\varsigma_n$
and $\inf_{\mathbb{R}}P_n(\xi)\leq K_2-\varsigma_n$. In the following, we show that this $\varsigma_n$  can be independent of $n$.

\textbf{Step 2.} Prove $\inf_{\mathbb{R}}S_n(\xi)\leq K_1-\tilde{\varsigma}$ and  $\inf_{\mathbb{R}} P_n(\xi)\leq K_2-\tilde{\varsigma}$ for some   $0<\tilde{\varsigma}<\min\{K_1,K_2\}$.

For otherwise, up to extraction a subsequence, assume that $a_n:=\inf_{\mathbb{R}}S_n(\xi)\to K_1$ as $n\to\infty$. From
$$K_1=\lim_{n\to\infty}\inf_{\mathbb{R}}S_n(\xi)\leq \lim_{n\to\infty}\sup_{\mathbb{R}}S_n(\xi)\leq K_1,$$
we have $\lim_{n\to\infty}S_n(\xi)=K_1$ uniformly for $\xi\in \mathbb{R}$. Remember
\begin{equation}\label{4.2}
c_nS_n'(\xi)=d_1\mathcal {A}[S_n](\xi)+ b_1[K_1-S_n(\xi)]-[S_n(\xi)-\sigma]I_n(\xi).
\end{equation}
Letting $n\to\infty$ in  \eqref{4.2} and noting $S_n'(\xi)$ is uniformly convergent  in $C_{\text{loc}}(\mathbb{R})$, then $S_n'(\xi)\to 0$ as $n\to\infty$. Hence, $I_n(\xi)\to 0$ as $n\to\infty$ uniformly for $\xi\in\mathbb{R}$. This leads to $\sup_{\mathbb{R}}I_n(\xi)\to 0$ as $n\to\infty$.
It contradicts with $\inf_{n}\|I_n(\cdot)\|_{L^{\infty}(\mathbb{R})}>0$. Similarly, we can prove $\inf_{\mathbb{R}} P_n(\xi)\leq K_2-\tilde{\varsigma}$ for some   $0<\tilde{\varsigma}<K_2$.

 Note that $\lim_{\xi\rightarrow-\infty}S_n(\xi)=K_1$ and $\lim_{\xi\rightarrow-\infty}P_n(\xi)=K_2$. By Step 2, choose two constants $\varsigma_1,\varsigma_2\in(0,\tilde{\varsigma})$. By translation along to $\xi$-axis, we may assume that $S_n(0)=K_1-\varsigma_1, P_n(0)=K_2-\varsigma_2$, and $S_n(\xi)\geq K_1-\varsigma_1, P_n(\xi)\geq K_2-\varsigma_2$ for all $\xi\leq 0$.
As discussions above, there are some functions $S_{\star}(\xi),P_{\star}(\xi),I_{\star}(\xi)$ exist such that
$(S_n(\xi),P_n(\xi),I_n(\xi))\to (S_{\star}(\xi),P_{\star}(\xi),I_{\star}(\xi))$ in $\left[C^1_{\text{loc}}(\mathbb{R})\right]^3$  as  $n\to\infty$.
Also, $(S_{\star}(\xi),P_{\star}(\xi),I_{\star}(\xi))$ satisfies
\begin{equation}\label{4.3}
\left\{\begin{array}{l}
c^*S_{\star}'(\xi)=d_1\mathcal {A}[S_{\star}](\xi)+ b_1[K_1-S_{\star}(\xi)]- S_{\star}(\xi)I_{\star}(\xi)+\sigma I_{\star}(\xi),\\
c^*P_{\star}'(\xi)=d_2\mathcal {A}[P_{\star}](\xi)+ b_2[K_2-P_{\star}(\xi)]- P_{\star}(\xi)I_{\star}(\xi),\\
c^*I_{\star}'(\xi)=d_3\mathcal {A}[I_{\star}](\xi)+\gamma_1 \displaystyle{\int_{0}^hf(\tau)S_{\star}(\xi-c\tau)I_{\star}(\xi-c\tau)\dif \tau}-\delta I_{\star}(\xi)+\gamma_2P_{\star}(\xi)I_{\star}(\xi).
\end{array}\right.
\end{equation}
It is easy to prove $\sigma< S_{\star}< K_1$, $0< P_{\star}<K_2$ and $I_{\star}>0$ on $\mathbb{R}$.

\textbf{Step 3.} Prove $\lim_{\xi\rightarrow-\infty}(S_{\star}(\xi),P_{\star}(\xi),I_{\star}(\xi))=(K_1,K_2,0)$.

Note that $K_1>S_{\star}(z)\geq K_1-\varsigma_1$ and  $K_2>P_{\star}(z)\geq K_2-\varsigma_2$ for all $z\leq 0$.
Then $\inf_{z\in(-\infty,0]}S_{\star}(z)\to K_1$ and $\inf_{z\in(-\infty,0]}P_{\star}(z)\to K_2$ as
$\tilde{\varsigma}\to 0$. Notice that  $\inf_{z\in(-\infty,0]}S_{\star}(z)\leq \inf_{z\in(-\infty,-c^*\tau]}S_{\star}(z)\leq \inf_{z\in(-\infty,-c^*h]}S_{\star}(z)$. Then for any $\tau\in[0,h]$, $\inf_{z\in(-\infty,-c^*\tau]}S_{\star}(z)\to K_1$ as
$\tilde{\varsigma}\to 0$. Pick a sufficiently small $\tilde{\varsigma}$ such that
$$\sup_{z\in(-\infty,0]}[K_1-S_{\star}(z-c^*\tau)]=K_1-\inf_{z\in(-\infty,0]}S_{\star}(z-c^*\tau)=:\eta_1<\frac{\triangle_K(0,c^*)}{4\gamma_1}$$ and
$$\sup_{z\in(-\infty,0]}[K_2-P_{\star}(z)]=K_2-\inf_{z\in(-\infty,0]}P_{\star}(z)=:\eta_2<\frac{\triangle_K(0,c^*)}{4\gamma_2}.$$
For $\lambda<0$, we can define $\mathscr{N}[I_{\star}](\lambda)=\int_{-\infty}^{0}I_{\star}(z)\me^{-\lambda z}\dif z$. Indeed,
 $\int_{-\infty}^{0}I_{\star}(z)\me^{-\lambda z}\dif z=\int^{\infty}_{0}I_{\star}(-s)\me^{\lambda s}\dif s$ and $\lim_{s\rightarrow\infty}I_{\star}(-s)\me^{\lambda s}s^2=0$, by Cauchy's integral converge principle, then $\int^{\infty}_{0}I_{\star}(-s)\me^{\lambda s}\dif s$ converges. The uniform boundedness of $I'_{\star}(\xi)$ shows the uniform continuity of  $I_{\star}(\xi)$. If $\int_{-\infty}^{0}I_{\star}(\xi)\dif \xi<\infty$, then $\lim_{\xi\rightarrow-\infty}I_{\star}(\xi)=0$ and we have  achieved the aim. So we might as well suppose that $\int_{-\infty}^{0}I_{\star}(\xi)\dif \xi\to\infty$.
 Taking the negative Laplace transform of $I_{\star}$-equation in \eqref{4.3}, we get
\begin{equation*}\label{4.4}
c^*I_{\star}(0)+B_1(\lambda)+B_2(\lambda)=\triangle_K(\lambda,c^*)\mathscr{N}[I_{\star}](\lambda),
\end{equation*}
wherein
\begin{equation*}
\begin{aligned}
B_1(\lambda)&=d_3\sum_{i=1}^{N_1}J(i)\left[\int_{-i}^{0}I_{\star}(z)\me^{-\lambda (z+i)}\dif z-\int^{i}_{0}I_{\star}(z)\me^{-\lambda (z-i)}\dif z\right]\\
&\quad\quad+\gamma_1 K_1\int_{0}^hf(\tau)\int_{-c^*\tau}^{0}I_{\star}(z)\me^{-\lambda (z+c^*\tau)}\dif z\dif \tau,\\
B_2(\lambda)&=\gamma_1\int_{-\infty}^{0}\int_{0}^hf(\tau)[K_1-S_{\star}(z-c^*\tau)]I_{\star}(z-c^*\tau)\me^{-\lambda z}\dif \tau\dif z\\
&\quad\quad+\gamma_2\int_{-\infty}^{0}[K_2-P_{\star}(z)]I_{\star}(z)\me^{-\lambda z}\dif z.
\end{aligned}
\end{equation*}
Since
\begin{equation*}
\begin{aligned}
B_2(\lambda)&\leq\gamma_1\int_{-\infty}^{0}\int_{0}^hf(\tau)\sup_{z\in(-\infty,0]}
[K_1-S_{\star}(z-c^*\tau)]I_{\star}(z-c^*\tau)\me^{-\lambda z}\dif \tau\dif z\\
&\quad\quad+\gamma_2\int_{-\infty}^{0}\sup_{z\in(-\infty,0]}[K_2-P_{\star}(z)]I_{\star}(z)\me^{-\lambda z}\dif z\\
&=\gamma_1\eta_1\int_{-\infty}^{0}\int_{0}^hf(\tau)I_{\star}(z-c^*\tau)\me^{-\lambda z}\dif \tau\dif z+\gamma_2\eta_2\int_{-\infty}^{0}I_{\star}(z)\me^{-\lambda z}\dif z\\
&=\gamma_1\eta_1\left[\mathscr{N}[I_{\star}](\lambda)\int_{0}^hf(\tau)\me^{-\lambda c^*\tau}\dif \tau-\int_{0}^h\int_{-c^*\tau}^{0}f(\tau)\me^{-\lambda (z+c^*\tau)}I_{\star}(z)\dif z\dif \tau\right]+\gamma_2\eta_2\mathscr{N}[I_{\star}](\lambda),
\end{aligned}
\end{equation*}
we get
\begin{equation*}
\begin{aligned}
B_1(\lambda)+B_2(\lambda)&\leq d_3\sum_{i=1}^{N_1}J(i)\left[\int_{-i}^{0}I_{\star}(z)\me^{-\lambda (z+i)}\dif z-\int^{i}_{0}I_{\star}(z)\me^{-\lambda (z-i)}\dif z\right]\\
&+\gamma_1 (K_1-\eta_1)\int_{0}^hf(\tau)\int_{-c^*\tau}^{0}I_{\star}(z)\me^{-\lambda (z+c^*\tau)}\dif z\dif \tau\\
&\quad\quad+\gamma_1\eta_1\mathscr{N}[I_{\star}](\lambda)\int_{0}^hf(\tau)\me^{-\lambda c^*\tau}\dif \tau+\gamma_2\eta_2\mathscr{N}[I_{\star}](\lambda).
\end{aligned}
\end{equation*}
At last, we have
\begin{equation}\label{4.5}
\begin{aligned}
&\left[\triangle_K(\lambda,c^*)-\gamma_1\eta_1\int_{0}^hf(\tau)\me^{-\lambda c^*\tau}\dif \tau-\gamma_2\eta_2\right]\mathscr{N}[I_{\star}](\lambda)\\
&\leq d_3\sum_{i=1}^{N_1}J(i)\left[\int_{-i}^{0}I_{\star}(z)\me^{-\lambda (z+i)}\dif z-\int^{i}_{0}I_{\star}(z)\me^{-\lambda (z-i)}\dif z\right]\\
&+\gamma_1 (K_1-\eta_1)\int_{0}^hf(\tau)\int_{-c^*\tau}^{0}I_{\star}(z)\me^{-\lambda (z+c^*\tau)}\dif z\dif \tau+c^*I_{\star}(0).
\end{aligned}
\end{equation}
Since $\triangle_K(0,c^*)-\gamma_1\eta_1-\gamma_2\eta_2>\frac{1}{2}\triangle_K(0,c^*)>0$, setting $\lambda\to 0$ in \eqref{4.5}, we have a contradiction. In fact, in this situation, the left-hand side of \eqref{4.5} is unbounded while the right-hand side of \eqref{4.5} is bounded. Therefore,   $\lim_{\xi\rightarrow-\infty}I_{\star}(\xi)=0$.

We now prove $\lim_{\xi\rightarrow-\infty}S_{\star}(\xi)=K_1$ and $\lim_{\xi\rightarrow-\infty}P_{\star}(\xi)=K_2$. Assume
$$\underline{S}_{\star}:=\liminf_{\xi\rightarrow-\infty}S_{\star}(\xi)<\limsup_{\xi\rightarrow-\infty}S_{\star}(\xi)=:\bar{S}_{\star}.$$
By fluctuation lemma, there are two sequences $\{x_k\}$ and $\{y_k\}$ with $x_k, y_k\to -\infty$ as $k\to\infty$ such that
$$S_{\star}(x_k)\to \underline{S}_{\star}\text { as  }k\to\infty, \qquad \underline{S}_{\star}'(x_k)=0,$$
$$S_{\star}(y_k)\to \bar{S}_{\star}\text { as  }k\to\infty, \qquad \bar{S}_{\star}'(y_k)=0.$$
We follow from $S_{\star}$-equation in \eqref{4.3} that
$0= b_1[K_1-\underline{S}_{\star}]$ and $0= b_1[K_1-\bar{S}_{\star}]$.
Thus $\underline{S}_{\star}=K_1=\bar{S}_{\star}$. It implies that $\lim_{\xi\rightarrow-\infty}S_{\star}(\xi)=K_1$. In a similar way, we can show that
 $\lim_{\xi\rightarrow-\infty}P_{\star}(\xi)=K_2$.

 Note that the Lyapunov functional still holds for the case $c=c^*$. Hence,  $\lim_{\xi\rightarrow\infty}(S,P,I)(\xi)=(S^*,P^*,I^*)$.

\end{proof}

{\section{Non-existence of  epidemic wave for $0<c<c^*$}}
\begin{theorem}\label{T13}
Assume that (J), (f) and (H) hold. For $0<c<c^*$, the system \eqref{2.1} has no positive solution $(S,P,I)$ such that $\sigma<S<K_1$ and $P, I>0$ on
$\mathbb{R}$ and $\lim_{\xi\rightarrow-\infty}(S,P,I)(\xi)=(K_1,K_2,0)$.
\end{theorem}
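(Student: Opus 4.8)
The plan is to argue by contradiction. Suppose that for some $0<c<c^*$ there is a positive solution $(S,P,I)$ of \eqref{2.1} with $\sigma<S<K_1$, $P,I>0$ and $\lim_{\xi\to-\infty}(S,P,I)(\xi)=(K_1,K_2,0)$. The whole contradiction will be driven by Lemma \ref{L1}(ii), namely that $\triangle_K(\lambda,c)>0$ for \emph{every} $\lambda\in\mathbb{R}$ when $0<c<c^*$, so that the characteristic function never vanishes on the real axis.

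First I would pin down the decay of the three components at $-\infty$. Since $I(-\infty)=0$, $S(-\infty)=K_1$, $P(-\infty)=K_2$ and $\gamma_1K_1>\delta$ by (H), the integration-of-the-$I$-equation argument already carried out in Lemma \ref{L7} applies verbatim near $-\infty$ and yields $I\in L^1(-\infty,\xi_*)$ together with a rate $\rho>0$ such that $I(\xi)\le M_2\me^{\rho\xi}$ for $\xi$ large negative. Writing $W_1=K_1-S$ and $W_2=K_2-P$, these solve the linear nonlocal equations $cW_1'=d_1\mathcal{A}[W_1]-b_1W_1+(S-\sigma)I$ and $cW_2'=d_2\mathcal{A}[W_2]-b_2W_2+PI$, whose forcing terms are $O(\me^{\rho\xi})$; the damping $-b_jW_j$ then forces $W_1,W_2$ to decay exponentially at $-\infty$ as well. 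The finite range of $J$ and the boundedness of the delay interval $[0,h]$ preclude faster-than-exponential decay of $I$, so the negative Laplace transform $\mathscr{N}[I](\lambda)=\int_{-\infty}^{0}I(z)\me^{-\lambda z}\dif z$ has a finite abscissa of convergence $\lambda_0\in(0,\infty)$ and is analytic on $\{\mathrm{Re}\,\lambda<\lambda_0\}$.

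Next I would transform the $I$-equation over $(-\infty,0]$ exactly as in Step 3 of Theorem \ref{T12}, splitting $S=K_1-W_1$ and $P=K_2-W_2$ so as to isolate the constant-coefficient part from the correction. This produces, for $\mathrm{Re}\,\lambda<\lambda_0$,
\begin{equation*}
\triangle_K(\lambda,c)\,\mathscr{N}[I](\lambda)=cI(0)+B(\lambda)+C(\lambda),
\end{equation*}
where $B(\lambda)$ collects the finite-interval boundary integrals coming from the nonlocal and delay terms (each an entire function of $\lambda$) and $C(\lambda)$ is the transform of the correction $\gamma_1\int_{0}^{h}f(\tau)W_1(\cdot-c\tau)I(\cdot-c\tau)\dif\tau+\gamma_2W_2(\cdot)I(\cdot)$. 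Because $W_1,W_2$ carry an additional exponential factor, this correction decays strictly faster than $I$ at $-\infty$, so $C(\lambda)$ converges and is analytic on a strip $\{\mathrm{Re}\,\lambda<\lambda_0+\varepsilon\}$ for some $\varepsilon>0$, in particular in a full neighbourhood of $\lambda_0$.

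The decisive step, and the main obstacle, is the singularity analysis. Since $I\ge0$, Pringsheim's theorem (the abscissa of convergence of a Laplace transform with nonnegative density is a singular point) guarantees that the real point $\lambda=\lambda_0$ is a singularity of $\mathscr{N}[I]$. On the other hand, $cI(0)+B(\lambda)+C(\lambda)$ is analytic near $\lambda_0$, and by Lemma \ref{L1}(ii) we have $\triangle_K(\lambda_0,c)>0$, so $\triangle_K(\cdot,c)$ does not vanish near $\lambda_0$. Hence $\mathscr{N}[I](\lambda)=[cI(0)+B(\lambda)+C(\lambda)]/\triangle_K(\lambda,c)$ extends analytically across $\lambda_0$, contradicting that $\lambda_0$ is a singular point; this rules out the existence of the wave. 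The technical heart is therefore twofold: securing clean exponential decay of all three components at $-\infty$ so that the transform and its strip of analyticity are well defined (including excluding super-exponential decay, which rests on the compact support of $J$ and the finiteness of $h$), and then exploiting the everywhere-positivity of $\triangle_K(\cdot,c)$ to defeat the singularity predicted by Pringsheim's theorem.
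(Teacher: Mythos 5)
Your proposal runs on the same engine as the paper's proof: transform the $I$-equation, invoke Lemma \ref{L1}(ii) (that $\triangle_K(\lambda,c)>0$ for \emph{every} real $\lambda$ when $0<c<c^*$), and derive a contradiction from the analyticity/singularity structure of the transform. The genuine differences are that you use the one-sided transform $\mathscr{N}[I]$ on $(-\infty,0]$ (incidentally cleaner than the paper's bilateral $\mathscr{L}[I]$, whose convergence at $+\infty$ the paper never addresses, since $I$ is not assumed bounded in Theorem \ref{T13}), that you linearize around $(K_1,K_2)$ by splitting off $W_1,W_2$ with their own exponential decay instead of the paper's "quadratic decay of $[K_1-S]\,I$" device, and --- crucially --- that your contradiction occurs at a finite abscissa of convergence $\lambda_0$ via Pringsheim, whereas the paper's contradiction occurs at $\lambda\to+\infty$.

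That last difference is where the genuine gap lies. Your argument needs $\lambda_0<\infty$, i.e.\ that $I$ does not decay super-exponentially at $-\infty$, and you merely assert this ("the finite range of $J$ and the boundedness of the delay interval preclude faster-than-exponential decay") without proof. If $\lambda_0=\infty$, Pringsheim produces no singularity and your argument yields no contradiction whatsoever. This is not a throwaway point: the paper structures its proof precisely so as \emph{not} to need it --- it sets up the dichotomy "finite singularity $\varrho$ versus transform well-defined on all of $\{\mathrm{Re}\,\lambda>0\}$", kills the first horn by the same bootstrap you use, and kills the second horn by letting $\lambda\to+\infty$ in \eqref{5.2}, where $\triangle_K(\lambda,c)\to\infty$ while the numerator is controlled by $\bigl(\gamma_1K_1+\gamma_2K_2\bigr)\mathscr{L}[I](\lambda)$, forcing $1\le \bigl(\gamma_1K_1+\gamma_2K_2\bigr)/\triangle_K(\lambda,c)\to 0$. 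To close your version you must either prove $\lambda_0<\infty$ --- which is true, but requires the Harnack-type gradient bound of Lemma \ref{L5} ($|I'/I|\le\kappa$, valid near $-\infty$ under the hypotheses of Theorem \ref{T13} and resting on the compact support of $J$ and $S>\sigma$), which gives $I(\xi)\ge I(\xi_*)\me^{\kappa(\xi-\xi_*)}$ for $\xi\le\xi_*$ --- or simply append the paper's $\lambda\to+\infty$ argument as a second case. A secondary, fillable gap: "the damping $-b_jW_j$ forces $W_1,W_2$ to decay exponentially" is not automatic for a nonlocal equation, because $\mathcal{A}[W_1](\xi)$ involves $W_1(\xi+i)$, $i\ge1$, i.e.\ values to the right of $\xi$ where $W_1$ need not be small; one needs a comparison argument with a supersolution such as $\min\{K_1-\sigma,\,A\me^{\beta\xi}\}$ with $\beta$ small enough that $d_1\sum_{i=1}^{N_1}J(i)\bigl(\me^{\beta i}+\me^{-\beta i}-2\bigr)<b_1$. (To be fair, the paper's own treatment of the corresponding step --- the line "$I(z)=\frac{1}{2}\epsilon_0$" --- is garbled as written and implicitly requires exactly such an estimate, e.g.\ $K_1-S\lesssim I$ near $-\infty$, so your decomposition actually isolates the missing ingredient more honestly than the paper does.)
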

\begin{proof}
By Lemma \ref{L7}, for any $\lambda\in \mathbb{C}$ with $0<\text{Re}\lambda<\rho$, the bilateral Laplace transform of $I$, that is,   $\mathscr{L}[I](\lambda)=\int_{-\infty}^{\infty}I(z)\me^{-\lambda z}\dif z$ is well-defined.
Performing the bilateral Laplace transform of $I$ to the I-equation of \eqref{2.1}, we have
\begin{equation}\label{5.1}
\begin{aligned}
&\triangle_K(\lambda,c)\mathscr{L}[I](\lambda)\\
=&\gamma_1\int_{0}^h f(\tau)\me^{-\lambda c\tau}\dif \tau\int_{-\infty}^{\infty} \big[K_1-S(z)\big]I(z)\me^{-\lambda z}\dif z+\gamma_2\int_{-\infty}^{\infty} \big[K_2-P(z)\big]I(z)\me^{-\lambda z}\dif z.
\end{aligned}
\end{equation}
Assume such solution did exist.  Note $\lim_{\xi\rightarrow-\infty}(S,P,I)(\xi)=(K_1,K_2,0)$.  For any $\epsilon_0>0$ sufficiently small, choose $z^*$ so large that $K_1-S(z)<\epsilon_0, K_2-P(z)<\epsilon_0$ and $I(z)=\frac{1}{2}\epsilon_0$ for $z<-z^*$. Then for $z<-z^*$,
\begin{equation*}
\big[K_1-S(z)\big]I(z)\me^{-2\rho z}<\frac{\epsilon_0}{I(z)}\big[I(z)\me^{-\rho z}\big]^2=2\big[I(z)\me^{-\rho z}\big]^2\leq2M_2^2.
\end{equation*}
Similarly, $\big[K_2-P(z)\big]I(z)\me^{-2\rho z}<2M_2^2$.
Hence the right-hand side of  \eqref{5.1} is well-defined for $0<\text{Re}\lambda<2\rho$.  Lemma \ref{L1} shows that $\triangle_K(\lambda,c)>0$
for $0<c<c^*$ and $\lambda>0$. Thus, $\mathscr{L}[I](\lambda)=\int_{-\infty}^{\infty}I(z)\me^{-\lambda z}\dif z$ can be extended to $0<\text{Re}\lambda<2\rho$. In the following we prove $\mathscr{L}[I](\lambda)$ is well-defined for all $\text{Re}\lambda>0$.
In view of the property of Laplace transform, we see that either there is some number $\varrho$ such that  $\mathscr{L}[I](\lambda)$ is
analytic in the strip $0<\text{Re}\lambda<\varrho$ and has a singularity at $\lambda=\varrho$ or $\mathscr{L}[I](\lambda)$ is well-defined for all $\text{Re}\lambda>0$. If $2\rho<\varrho$, then for any $0<\theta<1$ and $\theta^*:=2\rho+\theta(\varrho-2\rho)<\varrho$, we have
$I(z)\me^{-\theta^* z}\leq M_2$ for all $z\in \mathbb{R}$. Similarly, $\big[K_1-S(z)\big]I(z)\me^{-2\theta^* z} \leq2M_2^2$ and
$\big[K_2-P(z)\big]I(z)\me^{-2\theta^* z} \leq2M_2^2$. Then the right-hand side of  \eqref{5.1} is well-defined for $0<\text{Re}\lambda<2\theta^*$. If $\max\{0,\frac{\varrho-\theta^*-2\rho}{\varrho-2\rho}\}<\theta<1$,
then $\varrho<2\theta^*$, which contradicts the singularity of $\varrho$. Therefore, $\mathscr{L}[I](\lambda)$ is well-defined for all $\text{Re}\lambda>0$.
By \eqref{5.1},
\begin{equation}\label{5.2}
\mathscr{L}[I](\lambda)=\frac{\gamma_1\int_{0}^h f(\tau)\me^{-\lambda c\tau}\dif \tau\int_{-\infty}^{\infty} [K_1-S(z)]I(z)\me^{-\lambda z}\dif z+\gamma_2\int_{-\infty}^{\infty} \big[K_2-P(z)\big]I(z)\me^{-\lambda z}\dif z}{\triangle_K(\lambda,c)}.
\end{equation}
Letting $\lambda\to\infty$ in \eqref{5.2}, it follows from $\triangle_K(\lambda,c)\to\infty$ that the right-hand side of  \eqref{5.2}
is zero. This contradicts the definition of  $\mathscr{L}[I](\lambda)$. Therefore, we have proved the non-existence of epidemic waves.

\end{proof}
{\section{Numerical simulations}}

Choose $K_1=K_2=b_1=b_2=d_1=d_2=d_3=\gamma_1=\gamma_2=1$, $\sigma=0.3$, $\delta=0.8$, $f(\tau)=\cos\tau$, $h=\frac{\pi}{2}$, $N_1=100$.
We obtain the  positive equilibrium  $(S_*,P_*,I_*)=(\frac{43}{85},\frac{5}{17},\frac{12}{5})$ of \eqref{2.1}. From Lemma \ref{L1}(i), we see that $(\lambda^*,c^*)$ is determined by $\triangle_{K}(\lambda^*,c^*)=0$ and $\partial_{\lambda}\triangle_{K}(\lambda^*,c^*)=0$. By using Matlab, it follows that    $\lambda^*\approx 1.0687,c^*\approx 1.3630$.  See figure 2 for more details. Picking $c=\frac{40}{\pi}>c^*$, then we have the nonmonotone epidemic waves of   \eqref{2.1}.  See figures  3-6 for more details. When  $c=c^*$, the figures of epidemic waves is quite similar as the case $c>c^*$. From \eqref{2.3}, we see that the diffusion rate of the infective individuals, conversion rate of the disease can
increase the speed of disease propagation. While the  death rate or removal rate of infected individuals and the delay of disease can weaken
the speed of disease propagation. Moreover, the  form of dispersal kernel $J(\cdot)$  and mode of incubation  $f(\cdot)$ both affect the  speed of disease propagation.

\begin{figure}[htbp]
	\centering
	\includegraphics[height=6.0cm,width=9.5cm]{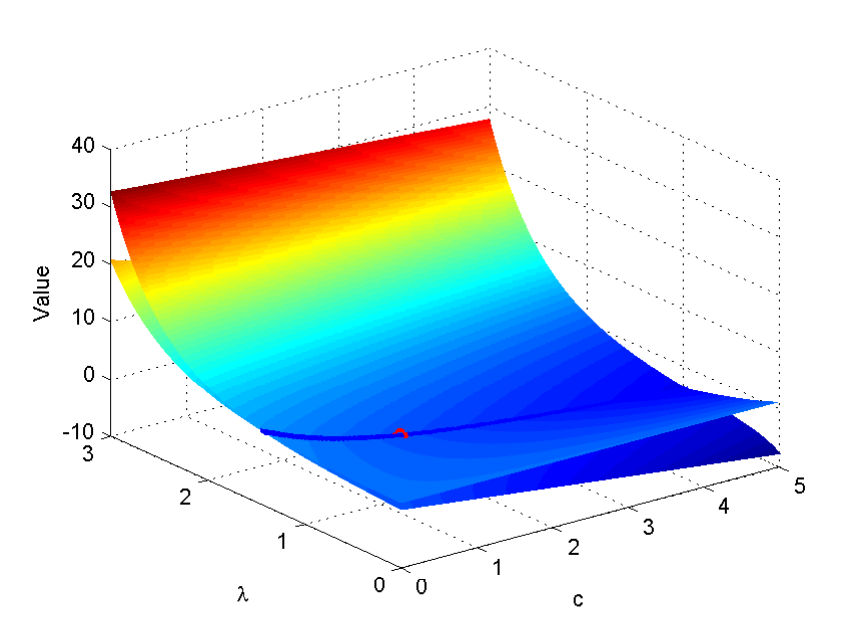}
	\caption{The figure of $(\lambda^*,c^*)$.}
\end{figure}

\begin{figure}[htbp]
	\centering
	\includegraphics[height=8.0cm,width=11.5cm]{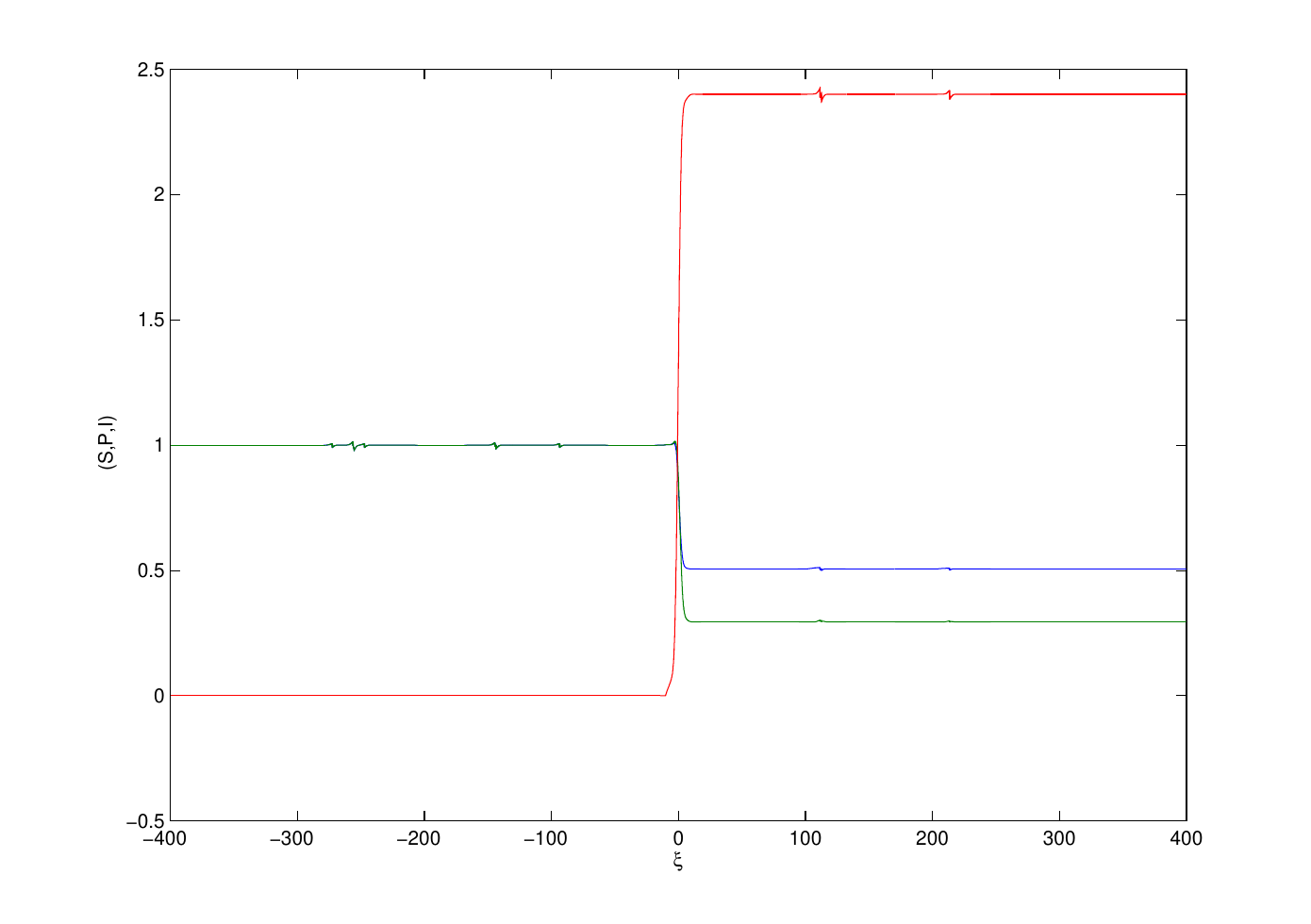}
	\caption{The figure of $(S,P,I)$ for $c>c^*$ with $\xi=j+ct$. The $S$ component is plotted
		by the blue line, $P$ component is plotted by the green line and $I$ component is plotted
		by the red line. }
\end{figure}

\begin{figure}[htbp]
	\centering
	\includegraphics[height=6.0cm,width=9.5cm]{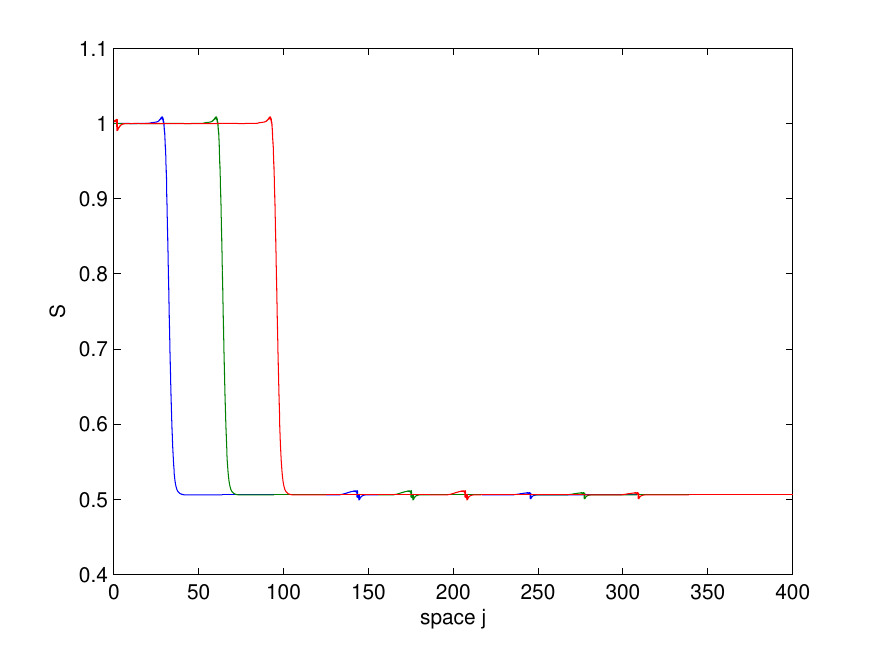}
	\caption{The figure of $S$ for $c>c^*$  is plotted with every $5$ time steps.}
\end{figure}

\begin{figure}[htbp]
	\centering
	\includegraphics[height=6.0cm,width=9.5cm]{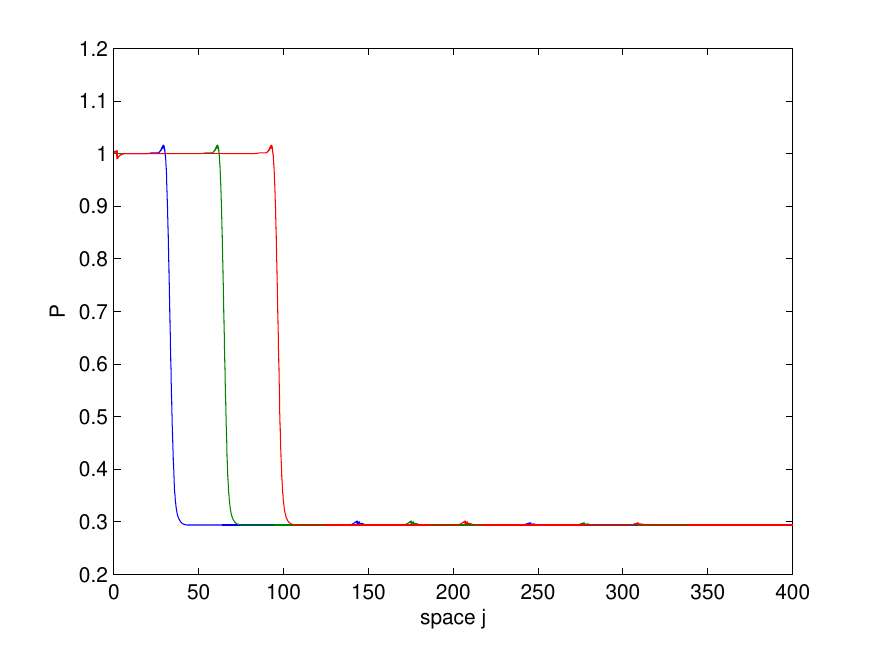}
	\caption{The figure of $P$ for $c>c^*$  is plotted with every $5$ time steps.}
\end{figure}

\begin{figure}[htbp]
	\centering
	\includegraphics[height=6.0cm,width=9.5cm]{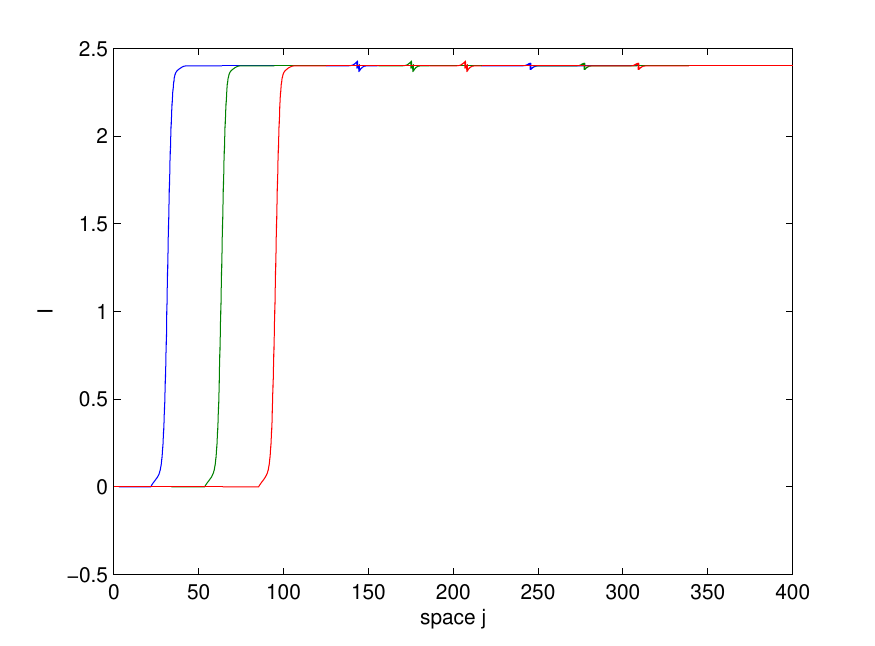}
	\caption{The figure of $I$ for $c>c^*$  is plotted with every $5$ time steps.}
\end{figure}

%




\end{document}